\title{Fully Dynamic Algorithms for Knapsack Problems with Polylogarithmic Update Time}
\titlerunning{Fully Dynamic Algorithms for Knapsack Problems with Polylogarithmic Update Time} 
\author{Franziska Eberle}{Faculty of Mathematics and Computer Science, University of Bremen, Germany}{feberle@uni-bremen.de}{}{}
\author{Nicole Megow}{Faculty of Mathematics and Computer Science, University of Bremen, Germany}{nmegow@uni-bremen.de}{}{}
\author{Lukas Nölke}{Faculty of Mathematics and Computer Science, University of Bremen, Germany}{noelke@uni-bremen.de}{}{}
\author{Bertrand Simon}{IN2P3 Computing Center, CNRS, Villeurbanne, France}{bertrand.simon@cc.in2p3.fr}{}{}
\author{Andreas Wiese}{Universidad de Chile, Chile}{awiese@dii.uchile.cl}{}{}
\authorrunning{F. Eberle, N. Megow, L. Nölke, B. Simon, and A. Wiese} %
\keywords{Fully dynamic algorithms, knapsack problem, approximation schemes} %
\newcommand{\eps}{\varepsilon}
\newcommand{\opt}{\textsc{Opt}\xspace}
\newcommand{\opteps}{\ensuremath{\textsc{Opt}_{\smash{\frac{1}{\eps}}}}\xspace}
\newcommand{\optmeps}{\ensuremath{\textsc{Opt}_{\smash{\frac m {\eps^2}}}\xspace}}
\newcommand{\candeps}{\ensuremath{H_{\smash{\frac{1}{\eps}}}}\xspace}
\newcommand{\candmeps}{\ensuremath{H_{\smash{\frac m {\eps^2}}}}\xspace}
\newcommand{\vmax}{\ensuremath{\overline{v}\xspace}\xspace}
\newcommand{\vvmax}{\ensuremath{v_{\max}}\xspace}
\newcommand{\lmax}{\ensuremath{\ell_{\max} }}
\newcommand{\lmin}{\ensuremath{\ell_{\min} }}
\newcommand{\llmin}{\ensuremath{{\bar\ell} }}
\newcommand{\N}{\mathbb{N}\xspace}
\newcommand{\epsfrac}{{1/\eps}}
\newcommand{\NP}{\textup{NP}\xspace}
\newcommand{\classP}{\textup{P}\xspace}
\newcommand{\lp}{\text{LP}}
\newcommand{\capa}{S\xspace}
\newcommand{\OO}{\ensuremath{\mathcal{O}}}
\newcommand{\knapsack}{\textsc{Knapsack}\xspace}
\newcommand{\multiknapsack}{\textsc{Multiple Knapsack}\xspace}
\newcommand{\dmk}{\textsc{Dynamic Multiple Knapsack}\xspace}
\newcommand{\red}{ordinary\xspace}
\newcommand{\yellow}{extra\xspace}
\newcommand{\green}{special\xspace}
\newcommand{\re}{\ensuremath{O}\xspace\xspace}
\newcommand{\ye}{\ensuremath{E}\xspace\xspace}
\newcommand{\gr}{\ensuremath{S}\xspace\xspace}
\newcommand{\wwlog}{without loss of generality\xspace}
\newcommand{\configs}{\ensuremath{\mathcal{C}}}
\newcommand{\types}{\ensuremath{\mathcal{T}}}
\newcommand{\groups}{\ensuremath{\mathcal{G}}}
\newcommand{\smalls}{\ensuremath{\mathcal{F}}}
\newcommand{\vals}{\ensuremath{\mathcal{V}_L}\xspace}
\newcommand{\mcp}{\ensuremath{\mathcal P}\xspace}
\newcommand{\lowval}{low-value\xspace}
\newcommand{\vc}[1]{\ensuremath{\smash{(1+\eps)^{#1}}}}
\newcounter{knaps}\setcounter{knaps}{0}%
\newcounter{knaps2}\setcounter{knaps2}{0}%
\definecolor{ubred}{RGB}{182,18,49}
\definecolor{utgreen}{RGB}{52,178,51}
\definecolor{utmagenta}{RGB}{207,0,114}
\definecolor{utyellow}{RGB}{254,209,0}
\definecolor{utblueORIGINAL}{RGB}{99,177,229}
\definecolor{utblue}{RGB}{0,156,235}  %
\definecolor{utforest}{RGB}{0,106,77}
\definecolor{utmoos}{RGB}{0,106,77}
\definecolor{utpurple}{RGB}{79,45,127}
\definecolor{utnavy}{RGB}{10, 81, 163} %
\definecolor{utgrey}{RGB}{97,82,88}
\definecolor{utorange}{RGB}{236,122,8}
\renewcommand{\log}{\ensuremath{\mathrm{log}\;}}
\renewcommand{\epsilon}{\eps}
\begin{document}

\maketitle

\begin{abstract}
Knapsack problems are among the most fundamental problems in optimization. In the \textsc{Multiple Knapsack} problem, we are given multiple knapsacks with different capacities and items with values and sizes. The task is to find a subset of items of maximum total value that can be packed into the knapsacks without exceeding the capacities. We investigate this problem and special cases thereof in the context of \emph{dynamic algorithms} and design data structures that efficiently maintain near-optimal knapsack solutions for dynamically changing input.
More precisely, we handle the arrival and departure of individual items or knapsacks during the execution of the algorithm with worst-case update time polylogarithmic in the number of items. As the optimal and any approximate solution may change drastically, we maintain implicit solutions and support  {polylogarithmic time query operations that can return the computed solution value and the packing of any given item.}

While dynamic algorithms are well-studied in the context of graph problems, there is hardly any work on packing problems (and generally much less on non-graph problems).  {Motivated by the theoretical interest in knapsack problems and their practical relevance, our work bridges this gap.} 
\end{abstract}

\section{Introduction}
\label{sec:intro}

Knapsack problems are among the most fundamental optimization  {problems.} In their most basic form,  {we are given} a knapsack capacity $\capa\in \N$ and a set of $n$ items, where each item $j\in [n] := \{1,2,\ldots,n\}$ has a size~$s_j\in \N$ and a value~$v_j\in \N$. The \knapsack problem asks for a subset of items, $P\subseteq [n]$, with maximal total value $\smash{v(P):=\sum_{j\in P} v_j}$ and with a total size $\smash{s(P):=\sum_{j\in P}s_j}$ that does not exceed the knapsack capacity~$\capa$. In the more general \multiknapsack problem, we are given $m$ knapsacks with capacities~$\capa_i$ for~$ i \in [m]$. Here, the task is to select~$m$ disjoint subsets $P_1,P_2,\ldots,P_m \subseteq [n]$ such that subset $P_i$ satisfies the capacity constraint $s(P_i)\leq \capa_i$ and the total value of all subsets $\smash{\sum_{i\in [m]} v(P_i)}$ is maximized.

\multiknapsack is strongly \NP-hard, even for identical knapsack capacities, as it is a special case of bin packing. \knapsack, on the other hand, is only weakly \NP-hard and admits pseudo-polynomial time algorithms, the first one being  {already} published in the 1950s~\cite{Bellman1957}.

As a consequence of these hardness results, each of the knapsack variants has been studied extensively through the lens of approximation algorithms. Of particular interest are \textit{approximation schemes}, families of polynomial-time algorithms that compute for each~$\eps >0$ a $(1-\eps)$-approximate solution, i.e., a feasible solution with value within a factor of~$(1-\eps)$ of the optimal solution value. Based on the dependency on~$\eps$ of the respective running time, we distinguish \emph{Polynomial Time Approximation Schemes (PTAS)} with arbitrary dependency on~$\eps$, \emph{Efficient PTAS (EPTAS)} where arbitrary functions~$f(\eps)$ may only appear as a multiplicative factor, and \emph{Fully Polynomial Time Approximation Schemes (FPTAS)} with polynomial dependency on~$\frac1\eps$.

The first approximation scheme for \knapsack was an FPTAS by Ibarra and Kim~\cite{IbarraK75} and initiated a long sequence of follow-up work,  {which is}  {still active} \cite{Chan18,Jin19}. \multiknapsack is substantially harder and does not admit {an FPTAS}, unless $\classP = \NP$, even with two identical knapsacks~\cite{ChekuriK05}. However, approximation schemes with {running times of the form $n^{f(\eps)}$ (PTASs) are known \cite{Kellerer99,ChekuriK05} as well as improvements to only $f(\eps) n^{\OO(1)}$ (EPTASs) \cite{Jansen10mkp,Jansen12mkp}}. All these algorithms are {\em static} in the sense that the full instance is given to an algorithm and is then solved.

Given the ubiquitous dynamics of real-world instances, it is natural to ask for {\em dynamic algorithms} that adapt to small changes in the packing instance while spending only little computation time. More precisely, during the execution of the algorithm, items and knapsacks arrive and depart and the algorithm needs to maintain an approximate knapsack solution with an {\em update time} polylogarithmic in the number of items in each step. A dynamic algorithm is then a data structure that implements these updates efficiently and supports relevant query operations.

A practical application is the dynamic estimation of the profit for scheduling jobs in computing clusters  {in which virtual machines can be moved among physical machines~\cite{BeloglazovB10}}.
This allows the service provider to adapt the provided capacity, i.e., the currently running servers, to the current demand, see, e.g., \cite{BobroffKB07,LiTC14,DaudjeeKL14}. An efficient framework for \multiknapsack can be viewed as a first-stage decision tool: In real-time, it determines whether the customer in question should be allowed into the system based on the cost of possibly powering and using additional servers. As the service provider has to decide immediately which request she wants to accept, she needs to obtain the information \emph{fast}, i.e., sublinear in the number of requests already in the system.

Generally, dynamic algorithms constitute a vibrant research field in the context of graph problems. {We} refer to surveys~\cite{DemetrescuEGI2010-survey,Henzinger18-sota,BoriaP11survey} for an overview on dynamic graph algorithms. Interestingly, only for a small number of graph problems there are dynamic algorithms known with {\em polylogarithmic} update time, among them connectivity problems \cite{HenzingerK99,HolmLT01}, the minimum spanning tree \cite{HolmLT01}, and vertex cover~\cite{BhattacharyaHN17,BhattacharyaK19}. Recently, this was complemented by conditional lower bounds that are typically \emph{linear} in the number of nodes or edges; see, e.g., \cite{AbboudW14}. Over the last few years, the generalization of dynamic vertex cover to dynamic set cover gained interest leading to near-optimal approximation algorithms with polylogarithmic update times \cite{BhattacharyaHN19,BhattacharyaHI15,GuptaK0P17,AbboudA0PS19}.  {Also, recently, algorithms have been developed for maintaining maximal independent sets, e.g., \cite{behnezhad2019fully, chechik2019fully, monemizadeh2019dynamic}, and approximate maximum independent sets in special graph classes \cite{henzinger_et_al:LIPIcs:2020:12209, compton2020new, bhore2020dynamic}}.

For packing problems, there are hardly any dynamic algorithms with small update time known. 
A notable exception is a result for bin packing that maintains a $\frac54$-approximative solution with $\OO(\log n)$ update time~\cite{Ivkovic98dynamicbinpacking}. This lack of efficient dynamic algorithms is in stark contrast to the aforementioned intensive research on computationally efficient algorithms for  {packing} problems. Our work bridges this gap initiating the design of data structures and algorithms that efficiently maintain near-optimal solutions. %

\subparagraph*{Our Contribution}
In this paper, we present dynamic algorithms for maintaining approximate solutions for three problems of increasing complexity: \knapsack, \multiknapsack with identical knapsack sizes, and  {general} \multiknapsack. Our algorithms are {\em fully dynamic} which means that in an update operation they can handle  {the arrival or departure of an item and of a knapsack. } Further, we consider the \emph{implicit solution} or \emph{query} model, in which an algorithm is not required to store the solution explicitly in memory such that the solution can be read in linear time at any given point of the execution. Instead, the algorithm may maintain the solution implicitly with the guarantee that a query about the packing can be {answered} in polylogarithmic time.  

We give {\em worst-case} guarantees for update and query times that are polylogarithmic in~$n$, the number of items currently in the input, and bounded by a function of $\eps>0$, the desired approximation accuracy. For some special cases, we can even ensure a polynomial dependency on~$\frac 1 \eps$. In others, we justify the exponential dependency with  {corresponding lower bounds.} Denote by \vvmax the currently largest item value and by \vmax an upper bound on \vvmax that is known in advance. 

\begin{enumerate}%
	\item For \multiknapsack, we design a dynamic algorithm maintaining a $(1-\eps)$-approximate solution with update time
	$2^{f(\epsfrac)} \big(\frac1\eps \log n \log \vmax \big)^{\OO(\epsfrac)} (\log S_{\max})^{\OO(1)}$, where~$f$ is quasi-linear, 
	and query time  {$\smash{\big(\frac1\eps \log n \big)^{\OO(1)}}$}.

	\item The exponential dependency on $\frac1\eps$ in the update time for \multiknapsack is indeed necessary, even for two identical knapsacks. We show that there is no $(1-\eps)$-approximate dynamic algorithm with update time~$\smash{\big(\frac1\eps \log n\big)^{\OO(1)}}$,
	unless~$\classP = \NP$. %
	\item For \knapsack, we give a dynamic $(1-\eps)$-approximation algorithm with update time $\big(\frac1\eps\smash{\log(n \vvmax)}\big)^{\OO(1)}+\OO\big(\frac{1}{\eps}\log n \log \vmax\big)$ and constant query times. %
	
	\item For \multiknapsack with {\em identical knapsacks}  {with capacity $S$ each}, we  {improve the update time to %
		~$\smash{\big(\frac1\eps \log n \; \log \vvmax \; \log S\big)^{\OO(1)}}$ if~$m \geq \frac{16}{\eps^7}\log^2 n$ with query time~$\smash{\big(\frac1\eps \; \log n\big)^{\OO(1)}}$}. %
\end{enumerate}

In each update step, we compute only implicit solutions and provide query {operations} for the solution value, the knapsack of a queried item, {and} the complete solution.
These queries are consistent between two update steps and run efficiently, i.e.,  {run in time} polynomial in~$\log n$ and~$\log \vmax$ and {linear in} the output size.
We remark that it is not possible to maintain a solution with a non-trivial approximation guarantee explicitly with only polylogarithmic update time (even amortized) since it might be necessary to change $\Omega(n)$ items per iteration, e.g., if a very large and very profitable item is inserted and removed in each iteration.

We remark that our result yields a static algorithm with a near-linear running time in $n$.

\subparagraph*{Our Techniques}

Maybe surprisingly, we recompute a $(1-\epsilon)$-approximate solution from scratch in polylogarithmic time after each update. More precisely, we compute a $(1-\epsilon)$-estimate of the value of $\opt$ and additionally store all information that is needed in order to answer any query in polylogarithmic time. Interestingly, this shows that for {such computations}, we do not need exact knowledge about the whole input, but only a small amount of information of polylogarithmic size. We show that this information can be extracted efficiently from suitable data structures in which we store the input items and knapsacks. Even more, we show that we can maintain these data structures in polylogarithmic time per update.

On a high level, we reduce the overall problem to two subproblems {solved} independently. In the first  {one}, we are given only few knapsacks, $m=\big(\frac1\eps \log n \big)^{\OO(1)}$ many, which are the largest knapsacks in the original input. {Here, we} observe that if we select the $\frac m \eps$ most valuable items in the optimal solution correctly, we can afford to fill the remaining space in the knapsacks greedily, i.e., {highest density (value divided by size) first}, and charge the resulting loss to the valuable items. We cannot guess these most valuable items explicitly, but we show that we can select a small {set of candidates} for these items and guess a few placeholder items  {for the remaining ones}. {This} yields an instance with only \smash{$\big(\frac1\eps \log n\big)^{\OO(1)}$} items on which we run a known EPTAS for \multiknapsack \cite{Jansen12mkp} yielding a running time of \smash{$\big(\frac1\eps \log n\big)^{\OO(1)}$}. For the special case of a single knapsack, we show that we can invoke an FPTAS instead, which improves the running time.

In the second subproblem, we are given a potentially large set of
knapsacks, and we are allowed to use an additional set of \smash{$\big(\frac1\eps \log n\big)^{\Theta(1)}$}
knapsacks that the optimal solution does not use (resource augmentation).
We introduce a technique that we call \emph{ {oblivious} linear grouping}.
Linear grouping is a standard technique used in order to round a set
of one-dimensional items that need to be packed into a given set of
containers (e.g., in bin packing), such that they have at most $\frac 1\eps $
different sizes after the rounding (at the expense of leaving an $\epsilon$-fraction
of the items out). However, in our setting we do not know a priori
which input items need to be packed, and therefore we cannot apply
this technique directly. Instead, we show that we can round the input
items to \smash{$(\frac 1\eps \log n )^{\OO(1)}$} different sizes such that we
lose at most a factor of $(1-\eps)$ \emph{independently} of what
the optimal solution looks like. In fact, our rounding method is even
oblivious to the input knapsacks. Therefore, we believe that it might
be useful also for other dynamic packing problems or for speeding
up static algorithms. After rounding the items to \smash{$\big(\frac 1\eps \log n \big)^{\OO(1)}$}
different sizes, we set up a configuration-LP that has a configuration
for each possible set of relatively large items that together fit
inside a knapsack. Thanks to our rounding, there are only polylogarithmically
many configurations and we can solve this LP in time \smash{$\big( \frac 1\eps \log n \big)^{\OO(\epsfrac)}$}.
We use the additional knapsacks in order to compensate errors when
rounding the LP, i.e., due to rounding up the fractional variables
and adding small items greedily into the remaining space of the knapsacks.
Special care is necessary since the sizes of the knapsacks can differ
and hence some item might be relatively large in some knapsack, but
relatively small in another knapsack.

\subparagraph*{Further Related Work}
Since the first approximation scheme for \knapsack \cite{IbarraK75} running times have been improved steadily \cite{GensL79,Lawler1979,GensL1980,KellererP04,rhee2015,Chan18,Jin19} with $\OO(n \log\frac1\eps + (\frac1\eps)^{9/4})$ by Jin~\cite{Jin19} being the currently fastest. Recent work on conditional lower bounds \cite{CyganMWW19,KunnemannPS17} implies that \knapsack does not admit an FPTAS with running time  {of} $\OO((n+\frac{1}{\eps})^{2-\delta})$, for any $\delta >0$, unless $(min,+)$-convolution has a subquadratic algorithm \cite{MuchaW019,Chan18}.

A PTAS for \multiknapsack was first  {presented} by Chekuri and Khanna~\cite{ChekuriK05} and EPTAS {s} due to Jansen~\cite{Jansen10mkp,Jansen12mkp}  {are} also known. {The fastest of these algorithms \cite{Jansen12mkp} has a} running time of $2^{\OO(\log^4(1/\eps)/\eps)} + n^{\OO(1)}$. The mentioned algorithms are all static and assume full knowledge about the instance for which a complete solution has to be found. {In particular, their solutions might change completely when a single item is added to the input which makes a full recomputation necessary. The algorithm in~\cite{ChekuriK05} invokes a guessing step with $n^{f(1/\eps)}$ many options which are too many for a polylogarithmic update time. The EPTASs in~\cite{Jansen10mkp,Jansen12mkp} use a configuration linear program of size $\Omega(n)$ which is also prohibitively large for such an update time.}

The dynamic arrival and removal of items exhibits some similarity to knapsack models with incomplete information. {For example,} in the {\em online}\ knapsack problem \cite{MarchettiV95} items arrive online one by one. When an item arrives, an algorithm must  {irrevocably} accept or reject it before the next item arrives. Various problem variants have been studied, e.g., with resource augmentation~\cite{IwamaZ10}, the removable online knapsack problem~\cite{IwamaT02,HanM10,HanKMG14,HanKM13,CyganJS16}, and with advice~\cite{BockenhauerKKR14}. Other models with uncertainty in the item set or the knapsack capacity include the \emph{stochastic} knapsack problem~\cite{DeanGV08,bhalgatGK11,Ma18} and \emph{robust} knapsack problems~\cite{yu96,MegowM13,DisserKMS17,busingKK11-discrete}. Related to our setting are also online models with a softened irrevocability requirement,  e.g., online optimization with {\em recourse} \cite{MegowSVW16,ImaseW91,GuG016,FeldkordFGGKRW18} or {\em migration} \cite{SandersSS09,SkutellaV16,JansenK19} allows to adapt previously taken decisions in a limited way. We are not aware of work on knapsack problems  {in these settings} and, again, the goal is to bound the amount of change needed to maintain good online solutions regardless of the computational effort.

\section{Roadmap and Preliminaries}
\label{sec:prelim}

{First, in this section, we formalize the operations that our data structures support, describe auxiliary data structures that we  need, and define how we round the item values. Then, in \cref{sec:singleKS}, we describe algorithms for one knapsack and for a polylogarithmic number of knapsacks. In \cref{sec:mik-and-mmdk}, we  present an algorithm for (many) identical knapsacks and an algorithm under resource augmentation (in the form of a polylogarithmic number of additional knapsacks) in the setting  of (many) knapsacks with possibly different capacities. Finally, we present in \cref{sec:general-diff-knapsack} an algorithm for the general case that uses the previously mentioned algorithms as subroutines. Additionally, in \cref{sec:hard-ks}, we show that our update time cannot be improved to $(\log n/\eps)^{\OO(1)}$, unless P=NP.}

From the perspective of a data structure that implicitly maintains near-optimal solutions for \multiknapsack, our algorithms support {several update and query operations which are listed below.
	They allow for the output of (parts of) the current solution, or for specific changes to the input of \multiknapsack, causing the computation of a new solution.} 
\begin{itemize}
	\item \textbf{Insert (Remove) Item:} Inserts (removes) an item into (from) the input.
	\item \textbf{Insert (Remove) Knapsack:} Inserts (removes) a knapsack into (from) the input.
\end{itemize}
A new solution can be output, entirely or in parts, using the following query operations.
\begin{itemize}
	\item \textbf{Query Item $\bm j$:} Returns whether item $j$ is packed in the current solution and if this is the case, additionally returns the knapsack containing it.
	\item \textbf{Query Solution Value:} Returns the value of the current solution.
	\item \textbf{Query Entire Solution:} Returns all items in the current solution,
	together with the information in which knapsack each such item is packed.
\end{itemize}
{Importantly,} queries are consistent in-between two update operations. {However, their answers} are not independent of each other but depend on the queries as well as their order.

For simplicity, we assume that elementary operations  {(e.g., additions)} can be handled in constant time. {Additionally, we assume \wwlog that~$\frac1\eps \in \N$.} {We also assume that at the very beginning we start with no items and no knapsacks, and initialize all needed auxiliary data structures accordingly. If one wants to start with a specific set of items and/or knapsacks, one can insert  {them} 
	with our insertion routines, using polylogarithmic time per insertion.}

\subparagraph*{Auxiliary Data Structures}
\label{sec:ds}
We employ auxiliary data structures in which we store (subsets of) input items and input knapsacks, sorted according to some specific values, e.g., size or capacity. We  need to be able to quickly access elements, compute the largest prefix of elements such that the sum according to some property, e.g., the total size, is below a given threshold, and compute in such a prefix the sum according to some element property, e.g., the total value. Note that these prefixes are w.r.t. the fixed ordering of the elements, while the 
element property for the threshold or computing the sum might be different. To this end, we employ as an auxiliary data structure a variation of balanced search trees that store elements according to some given ordering. For computing the mentioned prefix sums, we store in each internal node $v$ the sums of the elements in the subtree rooted at $v$ according to 
each property, e.g., size, value, or capacity. When we need to compute some largest prefix, we simply output the index of its last element.

\begin{lemma}\label{lem:data-structures:updates}\label{lem:data-structure}
	There is a data structure maintaining a sorting of~$n'$ elements  {w.r.t.\ to some key value} such that (i) insertion, deletion, or search by key value of an element takes~$\OO(\log n')$ time, and (ii) prefixes and prefix sums w.r.t.\ to  {any element property} can be computed in time $\OO(\log n')$.
\end{lemma}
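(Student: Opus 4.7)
The plan is to build on the symmetric binary $B$-trees discussed in the preceding paragraphs and to verify that the augmentations needed in our algorithms fit into their update mechanism without worsening the asymptotic time. We treat the $n'$ elements as leaves of the tree, ordered left-to-right by the key value (ties broken by index, as in the text). Each internal node stores pointers to its children together with the extra aggregates of its subtree (total number of elements, total size, total value, total capacity, etc.); these aggregates are exactly the ones that the dynamic knapsack algorithms will query.

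For part (i), standard dictionary operations on balanced search trees yield $\OO(\log n')$ for search by key, insertion, and deletion; the results of Olivié and Tarjan cited above ensure that rebalancing after an insertion or deletion needs only $\OO(1)$ rotations. For search by index, we descend from the root and, at each step, compare the desired index with the cardinality aggregate of the left child -- either the answer lies in the left subtree, or we recurse in the right subtree after subtracting that cardinality. This also runs in $\OO(\log n')$ time because the tree height is $\OO(\log n')$. The key observation is that each aggregate is a sum of its children's aggregates, so after a rotation or a leaf modification it can be recomputed in $\OO(1)$ time from the (at most two) children. Since only a constant number of rotations occur and the aggregates along the $\OO(\log n')$-long path from the modified leaf to the root are the only ones whose subtrees changed, the total time to refresh all aggregates after one update is $\OO(\log n')$.

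For part (ii), consider a prefix query asking for the longest prefix whose stored values sum to at most some input budget $T$ (and returning the prefix sum as a by-product). We descend from the root maintaining a running remainder $T'$, initialised to $T$. At an internal node we inspect the aggregate $A_L$ of its left child: if $A_L \le T'$, the entire left subtree is included, so we add $A_L$ to an accumulator, subtract it from $T'$, and recurse right; otherwise the cutoff lies inside the left subtree and we recurse left with $T'$ unchanged. When we reach a leaf we compare its value with $T'$ to decide whether it is the last element of the prefix. This traverses a single root-to-leaf path and therefore takes $\OO(\log n')$ time, and the accumulator is exactly the prefix sum. The returned leaf's index (available from the tree's order) gives the prefix as in the statement.

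The only non-routine point is confirming that the augmentation survives rebalancing: this is where the constant-rotation property of symmetric binary $B$-trees is essential, since our aggregates can only be refreshed in $\OO(1)$ per touched node and would otherwise blow up the update time. With that property in hand, both claims follow from the standard augmented balanced-BST template, and the rest is bookkeeping.
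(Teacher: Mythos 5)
Your proposal is correct and follows essentially the same route as the paper: augmented symmetric binary $B$-trees with subtree aggregates, the constant-rotation property of Olivié and Tarjan to keep aggregate maintenance within $\OO(\log n')$ per update, and a root-to-leaf descent using the left-child aggregates for index search and prefix/prefix-sum computation. The paper only sketches this in the surrounding text rather than giving a formal proof, and your write-up fills in exactly those standard details.
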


\subparagraph*{Rounding Values}
A crucial ingredient  {of} our algorithms is the partitioning of items into only few \emph{value classes}~$V_\ell$,  {where for each $\ell$ the class $V_\ell$ consists of each input item~$j$ with $(1+\eps)^{\ell} \leq v_j < (1+\eps)^{\ell+1}$. }
Upon arrival of  {some} item~$j$, we calculate  {the index $\ell_j$ such that $j\in V_{\ell_j}$} and store  {the tuple $(j, v_j, s_j, \ell_j)$ representing $j$ in the auxiliary data structures of the respective algorithm. 
	In the following, we  pretend for each $\ell$ that each 
	item in~$V_\ell$ has value~$(1+\eps)^{\ell}$, which loses only a factor of $\frac1{1+\epsilon}$ in the total profit of any solution.}

\begin{lemma}\label{lem:Round}
	(i) There are at most~$\OO\big(\frac {\log \vvmax} \eps\big)$ many value classes.
	(ii) For optimal solutions~$\opt$ and~$\opt'$ for the original and rounded instance, $v(\opt') \geq (1-\eps) \cdot v(\opt)$.
\end{lemma}

\section{A Single Knapsack}
\label{sec:singleKS}
{In this section, we first present a dynamic algorithm for the case of one single knapsack, summarized in the following theorem. Afterwards, we will
	argue how to extend our techniques to the setting of a polylogarithmic number of knapsacks.}

\begin{restatable}{theorem}{singlethm}
	\label{thm:MDK:single}
	For~$\eps>0$, there is a fully dynamic algorithm for \knapsack that maintains $(1-\eps)$-approximate solutions with update time~$\OO\big(\frac{\smash{\log^4 (n \vvmax)}}{\eps^{9} }\big)+\OO\big(\frac{1}{\eps}\log n \log \vmax\big)$. %
	Furthermore, queries of single items and the solution value can be answered in time~$\OO(1)$.
\end{restatable}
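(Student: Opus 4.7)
The plan is to specialize the \green-knapsack framework from \Cref{sec:bigpicture} to the case $m_\gr=1$, replacing the EPTAS for \multiknapsack by a standard \knapsack FPTAS \cite{Chan18,Jin19} and shrinking the reserved capacity for \lowval items to a single placeholder. Concretely, on each update, I would (a) partition items into \highval (the~$\OO(1/\eps^2)$ most valuable) and \lowval via a value-class threshold $V_{\ell^*}$; (b) guess the total size $\sigma^*$ of \lowval items in \opt up to a factor $(1+\eps)$, inducing~$\OO(\frac{1}{\eps}\log(n\vmax))$ candidate guesses; (c) for each guess $\sigma$, define a single placeholder of size $\sigma$ whose value equals the rounded fractional value obtained by filling a capacity-$\sigma$ knapsack with the densest \lowval items; and (d) run the FPTAS on the $\OO(1/\eps^2)$ \highval candidates together with this placeholder, returning the best solution over all guesses.

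For data structures, following \Cref{lem:data-structure} I maintain three symmetric binary $B$-trees: items sorted by value class, items sorted by density $v_j/s_j$ restricted to \lowval items, and the list of value classes with their aggregate size and cardinality. Subtree aggregates let the threshold class $V_{\ell^*}$ be located in $\OO(\log n)$ time by a prefix-cardinality walk on the value-sorted tree, and the at-most-one fractionally cut \lowval item for any guess $\sigma$ be located in $\OO(\log n)$ time on the density-sorted tree. Pre-rounding all item values via \Cref{lem:Round} loses only a $(1-\eps)$-factor and costs an additive $\OO(\frac{1}{\eps} \log n \log \vmax)$ for maintenance of value classes. The FPTAS is invoked on an instance of only $N=\OO(1/\eps^2)$ items with polynomially bounded values, so a single invocation runs in $\OO(\poly(1/\eps)\cdot\poly\log(n\vvmax))$; multiplying by the $\OO(\frac{1}{\eps}\log(n\vmax))$ guesses yields the stated update bound $\OO(\log^4(n\vvmax)/\eps^9)+\OO(\frac{1}{\eps}\log n \log \vmax)$.

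For constant-time queries, the implicit solution consists of a bit-vector of length $\OO(1/\eps^2)$ over the \highval candidates, the density of the chosen \emph{pivot} item (the last fully packed \lowval item in the selected guess), and a pointer to the at-most-one fractionally cut item that is resolved integrally. Because each item stores its value class and density at insertion, checking whether a queried item $j$ lies in the current solution reduces to a constant number of comparisons: if $j$ is in a class above~$V_{\ell^*}$ look up its candidate bit, otherwise compare its density to the pivot and handle the single exceptional item explicitly. The total solution value is kept in a single maintained field, answering value queries in $\OO(1)$.

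For the approximation guarantee, I would combine \Cref{lem:Round} with two charging arguments: the correct geometric guess of $\sigma^*$ lies within $(1+\eps)$ of the true \lowval mass in \opt, and the fractionally cut \lowval item has value at most $\eps^2 \cdot v(\opt_H)$ since every \highval candidate in \opt is more valuable than any \lowval item; hence rounding this item out costs at most an $\eps$-fraction of \opt. The main obstacle I expect is showing that the placeholder can be legitimately replaced by actual \lowval items: the rounded placeholder value must lower-bound the integral value of the densest fitting \lowval prefix, while the placeholder's size must upper-bound the knapsack space they truly consume. This requires a careful two-sided rounding of both $\sigma$ and the placeholder value, analogous to the general \green algorithm of \Cref{sec:MDK:few} but considerably simpler thanks to having exactly one placeholder and one cut item.
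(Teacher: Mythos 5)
Your overall architecture matches the paper's: a small high-value candidate set plus a single placeholder for the densest low-value items, an FPTAS on that constant-size instance, a best-over-guesses implicit solution, and a pivot (last fully packed low-value item) enabling $\OO(1)$ item queries. The guess of the low-value \emph{size}~$\sigma^*$ instead of the paper's guess of the low-value \emph{value}~\vals is a harmless variation (though it would make the number of guesses scale with $\log$ of the total size rather than $\log(n\vmax)$, which does not quite match the claimed update time). The genuine gap is in steps (a) and the charging argument: you define \highval as the globally $\OO(1/\eps^2)$ most valuable items via a fixed value-class threshold $V_{\ell^*}$, and you charge the fractionally cut low-value item against $v(\opt_H)$ using the claim that every high-value candidate \emph{in \opt} is more valuable than any low-value item. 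Neither step is sound, because value rank ignores sizes: the instance may contain $1/\eps^2$ very valuable items that do not fit in the knapsack (or are simply not used), so the entire optimum can lie strictly below your threshold. Then $\opt_H=\emptyset$, all of \opt is handled by the density-greedy placeholder, and the dropped cut item can carry essentially all of the value (e.g., the knapsack fits exactly one item $j^*$ of value $v$, the low-value pool also contains tiny items of slightly higher density but negligible total value, so the greedy prefix cuts $j^*$); your solution value is then near $0$ while $\opt=v$, so no $(1-\eps)$ guarantee follows.

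The paper avoids this by defining the relevant quantities with respect to \opt itself and \emph{guessing} them: it guesses $V_{\lmax}$, $V_{\lmin}$ and $n_{\min}$ (the top and bottom value class and the multiplicity in the bottom class of \opteps, the $\frac1\eps$ most valuable items of \opt), and takes as candidates the $\frac1\eps$ \emph{smallest} items of each value class in the window $[\eps^2 V_{\lmax},V_{\lmax}]$. Taking the smallest items per class supports an exchange argument (\cref{lem:single:candeps}) showing \wwlog that \opt's top items are candidates, truncating the window at $\eps^2 V_{\lmax}$ costs at most $\eps\cdot v(\opt)$ because $|\opteps|=\frac1\eps$, and — crucially — every item of \opteps is at least as valuable as any low-value item, so the single cut item is charged to these $\frac1\eps$ items of \opt and costs at most $\eps\cdot v(\opt)$ \emph{unconditionally}. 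Your proposal lacks both the guessed, \opt-relative thresholds and the per-class smallest-items selection, and its cut-item charge presupposes $|\opt_H|\geq 1/\eps^2$, which need not hold; as written the approximation argument fails, while the data-structure, runtime and query parts are essentially the same as the paper's.
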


We partition the items in the optimal solution \opt into high- and low-value items, respectively. The high-value items are the~$\frac{1}{\eps}$ most valuable items of~\opt, and the low-value items are the remaining items of~\opt. We compute a small set of candidate items $\candeps$ that intuitively contains all relevant high-value items in \opt. Also, we guess a placeholder item for the low-value items, that is large enough to accomodate low-value items of enough profit fractionally. We can assume that in an optimal fractional solution (of low-value items) at most one item is selected non-integrally. Hence, we can drop this item and charge it to the $\frac{1}{\eps}$ high-value items.
This results in a knapsack instance with only $\OO\left(\frac{1}{\epsilon^3}\right)$ items which we solve with an FPTAS.

{Formally, denote} by~\opteps a set of~$\frac{1}{\eps}$ most valuable items of~\opt.  {We break ties by picking smaller items.} 
Denote by~$V_{\lmax}$ and~$V_{\lmin}$ the highest resp.\ lowest value class of an element in~$\opteps$ and let~$n_{\min} \coloneqq \abs{\opteps \cap V_{\lmin}} \leq \frac{1}{\eps}$. %
Furthermore, denote by~$\vals$ the value of the items in~$\opt\setminus\opteps$, rounded down to  {the next} power of~$(1+\eps)$. 
To efficiently  {implement} our algorithm, we maintain several data structures,  {using Lemma~\ref{lem:data-structures:updates}}. {We} store items of each non-empty value class~$V_\ell$ (at most $\log_{1+\eps} \vvmax$) in a data structure ordered  {non-decreasingly} by size. 
Second, for each possible value class~$V_\ell$ (at most $\log_{1+\eps} \vmax$),  {we} maintain a data structure  {that contains each input item $j$ with
	$j \in V_{\ell'}$ for some $\ell' \le \ell$, ordered non-increasingly by density~$\frac{v_j}{s_j}$.} 
In particular, we maintain such a data structure even if~$V_\ell$  {itself} is empty  {(since
	the data structure might still contain items from classes $V_{\ell'}$ with $\ell' < \ell$).}
This leads to the additive term in the update time of~$\OO(\log n \log_{1+\eps} \vmax)$.
We use additional  {auxiliary} data structures %
to store our solution and support queries.

\subparagraph*{Algorithm} 
The algorithm computes an implicit solution as follows.
\begin{enumerate}%
	\item[1)] \textbf{Compute a set~\textnormal{\candeps} of high-value candidates:} 
	Guess the values%
	~${\lmax}$,~${\lmin}$, and~$n_{\min}$. 
	If~$\vc{\lmin} \geq \eps^2 \cdot \vc{\lmax}$, define~\candeps to be the set containing the~$\frac 1 {\eps}$ smallest items of each of the value classes~$V_{\lmin+1},\ldots,V_{\lmax}$, plus the~$n_{\min}$ smallest items from~$V_{\lmin}$.
	Otherwise, set~\candeps to be the union of the $\frac 1 \eps$ smallest items of each of the	value classes with values in~$[\eps^2 \cdot \vc{\lmax}, \vc{\lmax}]$.
	
	\item[2)] \textbf{Create a placeholder item~$B$:} %
	Guess~\vals and consider %
	items with value at most~$\vc{\lmin}$ sorted by density.
	Remove the~$n_{\min}$ smallest items of~$V_{\lmin}$ %
	until the next iteration.
	For the remaining items, compute the minimal size of fractional items necessary to reach a value~$\vals$. 
	{We do this via prefix sum computations on the data structure that contains all 
		items in $V_{\ell'}$ for each $\ell' \le \lmin$, ordered non-increasingly by density.
	}
	Then~$B$ is given by~$v_B=\vals$ and with~$s_B$ equal to the size of those low-value items.
	
	\item[3)] \textbf{Use an FPTAS:} %
	On the instance~$I$, consisting of~$\candeps$ and the placeholder item~$B$,
	run an FPTAS parameterized by $\eps$ (we use the one by
	Jin~\cite{Jin19}) to obtain a packing $P$. %
	
	\item[4)] \textbf{Implicit solution:} %
	Among all guesses, keep the solution~$P$ with the highest value.
	Pack items from~\candeps as in~$P$ and, if~$B\in P$, also pack the low-value items completely contained in~$B$  {(note that at most one item
		is packed fractionally in $B$)}.
	While used candidate  {items from $\candeps$} can be stored explicitly, low-value items are given only implicitly by saving the correct guesses and computing membership in~$B$ on a query.
\end{enumerate}

\subparagraph*{Analysis}
{We show that} the above algorithm attains an approximation ratio of~$(1-\eps)$.
{A factor of $(1-\eps)$ is lost due to the approximation ratio of the FPTAS. An additional factor of $(1-\eps)$ is lost in each of the following steps.} To obtain a candidate set~\candeps of constant cardinality, we restrict  {the} item values to~$[\eps^2 \cdot \vc{\lmax}, \vc{\lmax}]$. Since~$\abs{\opteps} = \frac 1 \eps$, this  {excludes items from $\opt$} with a total value of at most~$\frac 1 \eps \cdot \eps^2 \, \vc{\lmax} \leq \eps\cdot \opt$.
Furthermore, due to guessing \vals up to a power of $(1+\eps)$, we get $v_B = \vals \geq \frac1{1+\eps} \cdot v(\opt \setminus \opteps)$.
Finally, in Step~2,  {at most} one item was cut fractionally. It is charged to the $\frac 1 \eps$ items of \opteps,  {using that each of them has a} larger value.

The running time can be verified easily by multiplying the numbers of guesses for each value as well as the running time of the FTPAS. The latter is~$\OO\big(\frac 1 {\eps^4}\big)$, since we designed~\candeps to contain %
only a constant number of items, namely~$\OO\big(\frac 1 {\eps^3}\big)$ many.

\subparagraph*{Queries}
We show how to efficiently handle the different types of queries. %
\begin{itemize}
	\item \textbf{Single Item Query:}
	If the queried item is contained in~\candeps, its packing was saved explicitly.
	Otherwise, if~$B$ is packed, we save the last, i.e., least dense, item contained entirely in~$B$. By comparing with this item, membership in~$B$ can be decided in constant time on a query.
	\item \textbf{Solution Value Query:} While the algorithm works with rounded values, we use the data structures  {of} \cref{lem:data-structures:updates} 
	to  {retrieve} the actual  {item} values. %
	We store the actual solution value in the update step by adding the actual values of  {the packed items from \candeps} 
	and determining the actual value of items in~$B$ with a prefix computation. On query, we return the stored value.
	\item \textbf{Query Entire Solution:}
	Output the stored
	packing of candidates. If~$B$ was packed, iterate over items in~$B$ in the respective density-sorted data structure and output them.
\end{itemize}

\subparagraph*{Polylogarithmically many knapsacks}

{One can show that the queries can be performed in the claimed running times which completes the proof of \cref{thm:MDK:single}, see~\Cref{apx:single}. We can extend the above technique to the setting of $m$ knapsacks, at the expense of increasing the update time and query time by a factor $m^{\OO(1)}$, and using an EPTAS for \multiknapsack \cite{Jansen12mkp} instead of an FPTAS (see \cref{sec:MDK:few}).
}
{
	\begin{restatable}{theorem}{fewthm}
		\label{thm:MDK:few}
		For $\eps>0$, there is a dynamic algorithm for
		\multiknapsack %
		that achieves an approximation factor of~$(1-\eps)$ %
		with update time~$2^{f(1 / \eps)}\big(\frac m \eps \log (n\vvmax)\big)^{\OO(1)} + \OO\big(\frac 1 \eps \; \log \vmax \; \log n\big)$, with~$f$ quasi-linear.
		Item queries are answered in time~$\OO\big( \log \frac {m^2}{\eps^6}\big)$, solution value queries in time~$\OO(1)$, and queries of one knapsack or the entire solution in time linear in the output.%
\end{restatable}}

\section{Identical Knapsacks}
\label{sec:mik-and-mmdk}   

{In this section, we present our algorithm for an arbitrary (large) number of identical knapsacks. Also, we describe an extension to the case where the knapsacks have different sizes and we can use some additional knapsacks as resource augmentation.}

\subsection{ {Oblivious} Linear Grouping}\label{subsec:harmonicgrouping}

We start with our  {oblivious} linear grouping routine that we use in order to round the item sizes,  {aiming at} only few different types of items. We say that two items $j$, $j'$ are of the same \emph{type} if $\{j,j'\} \subseteq~ V_\ell$ for some $\ell$ and if $s_j = s_{j'}$. We round the items implicitly, i.e., we compute thresholds $\{\bar{s}_1,...,\bar{s}_k\}$ and we round up the size $s_j$ of each item $j$ to the next larger value in this set.

\begin{lemma}\label{theo:harmonic}
	Given a set~$J'$ with~$|\opt \cap J'| \leq n'$ for all optimal solutions~\opt, there is an algorithm with running time~$\smash{\OO\big(\frac{\log^5 n'}{\eps^5}\big)}$ that rounds the items in~$J'$ to item types~$\types$ with~$\smash{|\types| \leq \OO\big( \frac{\log^2 n' }{\eps^4}\big)}$ and ensures~$v(\opt_{\types}) \geq \frac{(1-\eps)(1-2\eps)}{(1+\eps)^2} v(\opt)$. Here,~$\opt_{\types}$ is the optimal solution attainable by packing item types~$\types$ instead of the items in~$J'$ and using~$J \setminus J'$ as is. 
\end{lemma}

\subparagraph*{Algorithm}
In the following, we use the notation~$X'$ for a set~$X$ to refer to~$X \cap J'$ while~$X''$ refers to~$X \setminus J'$. 
Recall that item values of items in~$J$ are rounded to powers of~$1+\eps$ to create the value classes~$V_\ell$ where each item~$j \in V_\ell$ has value $(1+\eps)^\ell$. 
{We guess~$\lmax$ which is defined to be the guess for the highest value $\ell$} with~$V_\ell' \cap \opt \neq \emptyset$ and let~$\llmin := \lmax - \left\lceil {\log_{1+\eps} (n'/\eps)} \right\rceil$.
\begin{enumerate}
	\item[1)]For each {$\ell$ with}~$\llmin \leq \ell \leq \lmax$ and each~$n_\ell = (1+\eps)^{ {k}}$  {with}~$0 \leq {k} \leq \log_{1+\eps} n'$ do: %
	Consider the $n_\ell$ smallest elements of $V_\ell'$ (sorted by increasing size) and 
	determine the~$\frac1\eps$ many (almost) equal-sized groups $G_1(n_\ell),\ldots, G_{1/\eps}(n_\ell)$ of~$\lceil \eps n_\ell \rceil $ or~$\lfloor \eps n_\ell \rfloor$ elements. 
	If~$\eps n_\ell \notin \N$, ensure that~$|G_k(n_\ell)| \leq |G_{k'}(n_\ell)| \leq |G_k(n_\ell)| +1$ for~$k \leq k'$. If~$\frac1\eps$ is not a natural power of~$(1+\eps)$,
	create~$G_1(\frac1\eps), \ldots,G_{1/\eps}(\frac1\eps)$ where~$G_k(\frac1\eps)$ is the~$k$th smallest item in~$V_\ell'$. 	
	Let~$G_1(n_\ell), \ldots, G_{1/\eps}(n_\ell)$ be the corresponding groups sorted increasingly by the size of the items. Let~$j_k(n_\ell) = \max\{ j \,:\, j \in G_k(n_\ell) \}$ be the last index belonging to group~$G_k(n_\ell)$. After having determined~$j_k(n_\ell)$ for each possible value~$n_\ell$ (including~$\frac1\eps$) and for each~$1 \leq k \leq \frac1\eps$, the size of each item~$j$ is rounded up to the size of the next larger item~$j'$ such that there exists $k$ and $\ell$  satisfying $j'=j_k(n_\ell)$. %
	
	\item[2)]  Discard each item $j$ with $j \in V_\ell'$ for $\ell < \llmin$.
\end{enumerate}

\subparagraph*{Analysis}
Despite the new approach to apply linear grouping simultaneously to many possible values of~$n_\ell$, the analysis builds on standard techniques. The loss in the objective function due to rounding item values %
is bounded by a factor of~$\frac{1}{1+\eps}$ by \cref{lem:Round}. As~$\llmin$ is chosen such that~$n'$ items of value at most~$\smash{(1+\eps)^{\llmin}}$ contribute less than an~$\eps$-fraction of $\opt'$, the loss in the objective function by discarding items in value classes $V_\ell'$ with $\ell < \llmin$ is bounded by a factor~$(1-\eps)$. By taking only~$(1+\eps)^{\lfloor \log_{1+\eps} n_\ell \rfloor}$ items of~$V_\ell '$ instead of~$n_\ell$, we lose at most a factor~$\frac{1}{1+\eps}$. The groups created by  {oblivious} linear grouping are an actual refinement of the groups created by classical linear grouping. Thus, we pack our items similarly: not packing the group with the largest items (at the loss of a factor of~$(1-2\eps)$) allows us to ``move'' all rounded items of group~$G_{k}(n_\ell)$ to the positions of the (not rounded) items in group~$G_{k+1}(n_\ell)$. %
Combining, we obtain~$v(\opt_{\types}) \geq \frac{(1-\eps)(1-2\eps)}{(1+\eps)^2} v(\opt)$.

Since~$\types$ contains at most~$\smash{\frac{1}{\eps} \big(\big \lceil \frac{\log n'/\eps}{\log (1+\eps)} \big\rceil +1 \big)}$ different value classes, and as it suffices to use~$\smash{\big \lceil \frac{\log n'}{\log (1+\eps)}\big\rceil +1}$ many different values for~$n_\ell = |\opt \cap V_\ell'|$, %
we have~$\smash{|\types| \leq \OO(\frac{\log^2 n'}{\eps^4})}$. Using the access times given in \cref{lem:data-structure} bounds the running time. For details, see \cref{app:harmonicgrouping}.

\subsection{A Dynamic Algorithm for Many Identical Knapsacks}
\label{sec:mik}

We give a dynamic algorithm with approximation ratio~$(1-\eps)$ for \multiknapsack,  {assuming that all knapsacks have the same size $S$}. {We assume $m\leq n$ as otherwise, the problem is trivial.} We focus on instances where~$m$ is large, i.e.,~$m \geq \frac{16}{\eps^7} \log^2 n$. {If}~$m \leq \frac{16}{\eps^7} \log^2 n$, we use the algorithm due to \cref{thm:MDK:few}. In the following, we prove \cref{theo:mik}.

\begin{theorem}\label{theo:mik}	
	If~$m \geq \frac{16}{\eps^7} \log^2n $, there is a dynamic algorithm for \multiknapsack with identical knapsacks with approximation factor $(1-\eps)$ and update time~$\smash{\big( \frac{\log U}{\eps} \big)^{\OO(1)}}$, where $U = \max\{\capa m, n \vvmax\}$. Queries for single items and the solution value can be answered in time ${\OO\big(\frac{\log n}{\eps}\big)^{\OO(1)}}$ and~$\OO(1)$, respectively.
	The solution~$P$ can be returned in time $\smash{|P|\big(\frac{\log n}{\eps}\big)^{\OO(1)}}$. %
\end{theorem} 

Our strategy is the following: we  partition the input items into large and small items, which are defined w.r.t.\ the size $S$ of each knapsack. To the large items, we apply  {oblivious} linear grouping, obtaining a polylogarithmic number of item types. We guess the total size of the small items in the optimal solution. Then, we  formulate the problem as a configuration linear program (LP) which has a variable for each feasible configuration for a knapsack. A configuration describes how many large items of each type are packed in a knapsack. Also, we  ensure that there will be enough space for the small items left. This is similar in spirit to the LPs used in \cite{Jansen10mkp,Jansen12mkp}; however, we use variables only for the configurations of the big items and we have only a polylogarithmic number of item types, which yields a smaller LP which we can solve faster. We  round the obtained fractional solution, using that $m > \frac{16}{\eps^7} \log^2 n$ and that basic feasible solutions to the LP are sparse.

\subparagraph*{Definitions and Data Structures}
We partition the items into two sets,~$J_B$, the \emph{big} items, and~$J_S$, the \emph{small} items, with sizes~$s_j \geq \eps \capa$ and~$s_j < \eps \capa$, respectively. For an optimal solution $\opt$, define $\opt_B := \opt \cap J_B$ and $\opt_S := \opt\cap J_S$.
We maintain three types of  {auxiliary} data structures from Lemma~\ref{lem:data-structure}:  we maintain one such data structure in which we store all items in the order of their arrivals and store the size~$s_j$, the value~$v_j$, %
and the value class~$\ell_j$ of each item~$j$. For each value class~$V_\ell$, we maintain a data structure which contains all big items of $V_\ell$, ordered non-decreasingly by size. Finally, for the small items (of all value classes together), we maintain a data structure in which they are sorted non-increasingly by density. Upon arrival of a new item $j$, we insert $j$ into each corresponding data structure.

\subparagraph*{Algorithm}

\begin{enumerate}%
	\item[1)] \textbf{Linear grouping of big items:} Guess~$\lmax$,  {which we define to be the largest index $\ell$ with $V_\ell \cap \opt_B \ne \emptyset$. Via}
	{oblivious} linear grouping with~$J'= J_B$ and~$n' = \min \{\frac m\eps, n_B\}$ we obtain~$\types$;
	{for each item type~$t$, denote by $n_t$ the number of items of this type (the multiplicity of $t$).} %
	\item[2)] \textbf{Configurations:}  {Let $\configs$ denote the set of all configurations, i.e., of all multisets of item types 
		whose total size is at most $S$. For each $c\in \configs$, denote by $v_c$ and $s_c$ the total value and size of the item types in~$c$.} 
	\item[3)] \textbf{Small items:}  {We guess $v_S$ which we define to be the largest power of $1+\eps$ that is at most $v(\opt_S)$}. Let~$P$ be the maximal prefix of small items (sorted by non-increasing density) with~$v(P) < v_S$. Set~$s_S := s(P)$. 
	\item[4)] \textbf{Configuration ILP:}  {We compute an extreme point solution of the LP relaxation of the} following configuration ILP with variables~$y_c$ for~$c \in \configs$ for the current guesses~$\lmax$ and~$v_S$ (implying $s_S$). Here,~$y_c$ counts how often a certain configuration~$c$ is used
	and $n_{tc}$ denotes the number of items of type $t$ in configuration $c$. 
	\begin{equation}\tag{P}\label{eq:mik:ilp}
		\begin{array}{llcll} 
			\max & \displaystyle{\sum_{c \in \configs} y_c v_c }\\
			\text{subject to } & \displaystyle{\sum_{c \in \configs} y_c s_c} & \leq & \lfloor(1-3\eps)m\rfloor \capa  - s_S \\
			& \displaystyle{\sum_{c \in \configs} y_c} & \leq & \lfloor(1-3\eps)m\rfloor \\
			& \displaystyle{\sum_{c \in \configs} y_c n_{tc}} & \leq & n_t & \text{for all } t \in \types \\ %
			& y_c & \in  & \mathbb{Z}_{\ge 0}& \text{for all } c \in \configs
		\end{array}
	\end{equation}	
	By the first inequality, the configurations fit into~$\lfloor(1-3\eps)m\rfloor$ knapsacks while reserving sufficient space for the small items. The second constraint limits the total number of configurations that are packed. The third inequality ensures that only available items are used. 	
	\item[5)] \textbf{Obtaining an integral solution:} %
	{We round up each variable of the obtained fractional solution, yielding an integral solution $\bar{y}$. 
		As~$m \geq \frac{16}{\eps^7} \log^2n$ and extreme point solutions have only $|\types|+2$ non-zero variables, one can show that $\bar{y}$ still satisfies the relaxed constraints 
		$\sum_{c \in \configs} \bar{y}_c s_c \leq \lfloor (1-2\eps)m \rfloor \capa  - s_S$ and $\sum_{c \in \configs} \bar{y}_c  \leq  \lfloor (1-2\eps)m \rfloor$. In case that a constraint 
		$\sum_{c \in \configs} \bar{y}_c n_{tc}  \leq  n_t$ is violated for some type $t$, we intuitively drop items of type $t$ from some knapsacks until the constraint is satisfied.
		Let $P_B$ denote the resulting packing.
	}
	\item[6)] \textbf{Packing small items:} Consider the maximal prefix~$P$ of small items with~$v(P) < v_S$ and let~$j^\star$ be the densest small item not in~$P$. Pack~$j^\star$ into one of the knapsacks kept empty by~$P_B$. Then, fractionally fill up the~$\lfloor (1-2\eps)m \rfloor$ knapsacks used by~$P_B$ and place any ``cut'' item into the~$\lceil \eps m \rceil$ additional knapsacks that are still empty. 
\end{enumerate}

\subparagraph*{Analysis} The loss in the objective function value due to linear grouping of big items is bounded by~$\frac{(1-\eps)(1-2\eps)}{(1+\eps)^2}$ by \cref{theo:harmonic}. Restricting a solution to its $\lfloor(1-3\eps)m\rfloor$ most valuable knapsacks and guessing the value of small items in these knapsacks only up to a factor of~$(1+\eps)$ as done by \eqref{eq:mik:ilp} costs at most a factor of~$\frac{1-4\eps}{1+\eps}$ in the objective function value. 

For solving the LP-relaxation of the configuration ILP (P), we apply the Ellipsoid method~\cite{GrotschelLS81} on its dual, using an FPTAS for \knapsack as a separation oracle. For this, we need to handle some technical complications due to the first two constraints of (P), which yield additional variables in the dual, and due to the fact that we can solve the separation problem only up to a factor of $(1+\eps)$ (see \cref{app:multiknapsack} for details). Via Gaussian elimination, we transform  {the obtained fractional solution into a basic feasible solution with the same objective function value}. {As argued above, since any basic feasible solution has at most $|\types|+2$ non-zero variables, our integral solution $\bar{y}$ uses at most $\lfloor (1-2\eps)m \rfloor$ knapsacks and it has at least the profit of the fractional solution.} Given the packing of big items, we pack the small items in a \textsc{First Fit} manner as described in the algorithm.

{To bound the running time of our algorithm, we use \cref{theo:harmonic}, show that the relaxation of the configuration ILP can be solved in time 
	$\smash{\big( \frac{\log U}{\eps} \big)^{\OO(1)}}$ with the Ellipsoid method, and use the fact that the algorithm needs at most $\smash{\OO\big( \frac{\log (n \vvmax) \log \vvmax}{\eps^2} \big)}$ many guesses, see \cref{app:multiknapsack} for details.}

\subparagraph*{Queries}
In contrast to the previous section, for transforming an implicit solution into an explicit packing, 
the query operation has to compute the knapsack where  {a queried} item~$j$ is packed. 
We do not explicitly store the packing of any item,  {but instead} we define and update pointers for small items and for each item type, that  {indicate} the knapsacks where the corresponding items are packed. To stay consistent  {with} the precise packing of a particular item between two update operations, we additionally cache query answers. %
\begin{itemize}
	\item \textbf{Single Item Query:} For small items, only the prefix of densest items is part of our solution. For big items of a certain type, only the smallest items are packed by the implicit solution. In both cases, we use the corresponding pointer to determine the knapsack.
	\item \textbf{Solution Value Query:} As the algorithm works with rounded values, we use prefix computation{s} on the small items and on any value class of big items to calculate and store the  {current} solution value.  {Given a} query, we return the stored solution value.
	\item \textbf{Query Entire Solution:} We use prefix computation{s} on the small items as well as on the value classes of the big items to determine the packed items. Then, we use the Single Item Query to determine their respective knapsacks. 
\end{itemize}

\begin{restatable}{lemma}{mikQueries}\label{lem:mik:queries} The solution determined by the query algorithms is feasible and achieves the claimed total value. The query times of our algorithm are as follows: 
	Single item queries can be answered in time $\OO\big(\log n + \max \big\{ \log \frac{\log n}{\eps}, \frac{1}{\eps}\big\} \big)$, 
	solution value queries can be answered in time~$\OO(1)$, and 
	queries of the entire solution~$P$ can be answered in time $\smash{\OO\big(|P|  \frac{\log^4 n}{\eps^4} \log \frac{\log n}{\eps}  \big)}$.
\end{restatable}

We extend our techniques above to an algorithm for knapsacks of arbitrary sizes, assuming that we have $\big(\frac{\log n}\eps\big)^{\Theta(\epsfrac)}$ additional knapsacks (of capacity at least as large as the largest original knapsack) as resource augmentation available.
The intuition is that these additional knapsacks are sufficient to compensate errors when rounding the LP-relaxation of (P). However, additional care is needed since whether an item is big or small now depends on the knapsack.
{
	\begin{restatable}{theorem}{mmdkthm}
		\label{theo:mmdk}
		For~$\eps >0$, there is a dynamic algorithm for \multiknapsack that, given $\big(\frac{\log n}\eps\big)^{\Theta(\epsfrac)}$ additional knapsacks as resource augmentation, achieves an approximation factor of~$(1+\eps)$ with update time~$\big(\frac1\eps \log n \big)^{\OO(\epsfrac)} (\log m \log S_{\max} \log \vvmax)^{\OO(1)}$. Item queries are answered in time~$\OO\big(\log m + \frac{\log n}{\eps^2}\big)$, and the solution~$P$ %
		is output in time~$\OO\big(|P|\frac{\log^3 n}{\eps^4}\big(\log m + \frac{\log n}{\eps^2}\big) \big)$.
	\end{restatable}
}

\section{Solving \multiknapsack}
\label{sec:general-diff-knapsack}

Having laid the groundwork with the previous two sections, we finally show how to maintain solutions for arbitrary instances of the \multiknapsack problem, and give the main result of this paper,  {summarized in the following theorem.} Note that we assume~$n \geq m$ as otherwise only the~$n$ largest knapsacks are used. 
\begin{theorem}
	\label{thm:MDK:gen}	For $\eps>0$, there is a dynamic,~$(1-\eps)$-approximate algorithm for \multiknapsack with update time $2^{f(\epsfrac)} \big(\frac1\eps \log n \log \vvmax \big)^{\OO(\epsfrac)} (\log S_{\max})^{\OO(1)} + \OO\big(\frac 1 \eps \log \vmax \, \log n\big)$, where~$f$ is quasi-linear. Item queries are served in time $\OO\big (\smash{\frac {\log  n}{\eps^2}}\big)$ and the solution~$P$ can be output in time~$\OO\big(\smash{\frac{\log^4 n}{\eps^6} \abs{P}}\big)$.
\end{theorem}
We obtain this result by partitioning the knapsacks into three sets, \green, \yellow and \red knapsacks, and solving the respective subproblems.
This has similarities to the approach in \cite{Jansen10mkp}; however, there it 
was sufficient to have only two groups of knapsacks. 
On a high level, the special knapsacks are the $\left(\log n\right)^{\OO(1/\eps)}$ largest input knapsacks and, intuitively, we  apply the algorithm due to \cref{thm:MDK:few} to them (for a suitably defined set of input items). The extra knapsacks are $\left(\log n\right)^{\OO(1/\eps)}$ knapsacks that are smaller than the \green knapsacks, but larger than the \red knapsacks. We ensure that there is a (global) $(1-\eps)$-approximate solution in which they are all empty. We  apply the algorithm due to \cref{theo:mmdk} to the \red and \yellow knapsacks, where the \yellow knapsacks  form the additional knapsacks used as resource augmentation.

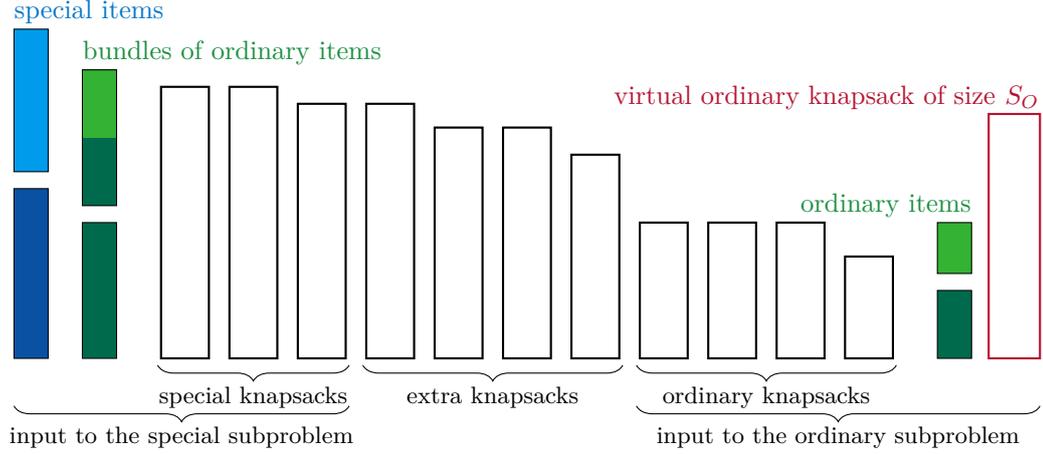
\begin{figure}
\centering
\begin{tikzpicture}[scale = .9]
	
	\foreach \x/\h/\n in {12/1.5/0, 9/2/2, 8/3/0, 6/3.4/1, 4/3.75/1, 2/4/1}{ %
		\foreach \y in {0,...,\n}{
			\draw[draw=black, fill=none, thick] (.25+\x+\y-.1,0) rectangle (.25+\x+\y+.6, \h); 
		}
	}

	\draw[draw=ubred, fill=none, thick] (14.25, 0) rectangle (15, 3.6);  
	
	\node[anchor = east, ubred, inner sep = 0pt] at (15, 3.85) {virtual ordinary knapsack of size $S_O$};
	
	\draw [decorate,decoration={brace,amplitude=6pt}]  (4.9, -.1) -- (2.1,-.1) node [black,midway,yshift =  -12pt,font=\small, black] {special knapsacks};
	\draw [decorate,decoration={brace,amplitude=6pt}]  (8.9, -.1) -- (5.1,-.1) node [black,midway,yshift =  -12pt,font=\small, black] {extra knapsacks};
	\draw [decorate,decoration={brace,amplitude=6pt}]  (12.9, -.1) -- (9.1,-.1) node [black,midway,yshift =  -12pt,font=\small, black] {ordinary knapsacks};
	
	\node[anchor = west, utnavy!50!utblue, inner sep = 0pt] at (0,5.1) {special items};
	\draw[draw = black, fill = utnavy] (0, 0) rectangle (.5, 2.5);
	\draw[draw = black, fill = utblue] (0,2.75) rectangle (.5, 4.85);  
	
	
	\node[anchor = west, utforest!50!utgreen, inner sep = 0pt] at (1,4.5) {bundles of ordinary items};
	\draw[draw = black, fill = utforest] (1,0) rectangle (1.5,2); 	
	\fill[utforest] (1,2.25) rectangle (1.5,3.25);
	\fill[utgreen] (1,3.25) rectangle (1.5,4.25);
	\draw[draw = black, fill = none] (1,2.25) rectangle (1.5, 4.25);	
	
	\node[anchor = east, utforest!50!utgreen, inner sep = 0pt] at (14,2.25) {ordinary items};
	\draw[draw = black, fill = utforest] (13.5,0) rectangle (14,1); 
	\draw[draw = black, fill = utgreen] (13.5, 1.25) rectangle (14,2);

	\draw [decorate,decoration={brace,amplitude=6pt}]  (4.9, -.7) -- (0,-.7) node [black,midway,yshift =  -12pt,font=\small, black] {input to the special subproblem};
	\draw [decorate,decoration={brace,amplitude=6pt}]  (15, -.7) -- (9.1,-.7) node [black,midway,yshift =  -12pt,font=\small, black] {input to the ordinary subproblem};
	
\end{tikzpicture}
\caption{Input of the special and the ordinary subproblems: Based on the current guess for the extra knapsacks, the knapsacks are partitioned into three groups (special, extra, and ordinary). When an item fits into at least one ordinary knapsack, it is ordinary and special otherwise. The total size of ordinary items placed by \opt in special knapsacks gives the size of the virtual ordinary knapsack. The ordinary items packed into this virtual knapsack are further assigned to bundles of equal size, which are then part of the input to the special subproblem.} \label{fig:general:partition}
\end{figure}

\subparagraph*{Definitions and Data Structures}

Let $L=\big(\frac{\log n}\eps\big)^{\Theta(\epsfrac)}$.
We assume that~$m > {\big(\frac 1 \eps\big)^{  4 /\eps}} \cdot L$, since otherwise we simply apply \Cref{theo:mmdk}.
Consider $\frac{1}{\eps}$ groups of knapsacks with sizes $\frac L {\eps^{3i}}$, for $i = 0,1,\ldots,\frac 1 \eps -1$, such that the first group, i.e., $i=0$, consists of the $L$ largest knapsacks, the second, i.e., $i=1$, of the $\frac L{\eps^3}$ next largest, and so on. In \opt, one of these contains items with total value at most $\eps\cdot\opt$. Let~$k\in\{0,1,\ldots,\frac 1 \eps -1\}$ be the index of such a group and let $L_\gr:= \sum_{i=0}^{k-1} \frac L {\eps^{3i}}$.
We define the  $L_\gr$ largest input knapsacks to be the \emph{\green} knapsacks. The \emph{\yellow} knapsacks are the~$\frac L{\eps^{3k}} > \frac{L_\gr}{\eps^2} + L$ next largest, and the \emph{\red} knapsacks the remaining ones.

Call an item \emph{\red} if it fits into the largest \red knapsack and \emph{\green} otherwise. Denote by $J_\re$ and $J_\gr$ the set of \red and \green items, respectively, and by  {$S_\re$ the total size of \red items that \opt places in \green knapsacks, rounded down to the next power of $(1+\eps)$}; see Figure~\ref{fig:general:partition}. 
Since we use the algorithms from \Cref{thm:MDK:few,theo:mmdk} as subroutines, we require the maintenance of the corresponding data structures. 
\subparagraph*{Algorithm}

\begin{enumerate}%
	\item[1)] \textbf{ {Oblivious} linear grouping:} Compute $\OO\big(\frac{\log^2 n}{\eps^4}\big)$ item types as described in \Cref{subsec:harmonicgrouping}. %
	Guess~$k$ and determine whether items of a certain type are \red or \green.
	
	\item[2)] \textbf{High-value \red items: }%
	Place each of the $\frac{L_\gr}{\eps^2}$ most valuable \red items in an empty \yellow knapsack. On a tie choose the larger item.
	Denote this set of items by~$J_\ye$. 
	
	\item[3)] \textbf{Virtual \red knapsack:} Guess~$S_\re$ and add a virtual knapsack with capacity $S_\re$ to the \red subproblem. 
	In the LP used in the proof of \Cref{theo:mmdk}, treat every ordinary item as small item in this knapsack and do not use configurations.

	\item[4)] \textbf{Solve \red instance:} 
	Remove temporarily the set $J_\ye$
	from the data structures of the \red subproblem. Solve the subproblem with the virtual knapsack as in \Cref{theo:mmdk} and use \yellow knapsacks for resource augmentation. When rounding up variables, fill the~$\OO\big(\frac{\log^2 n}{\eps^4}\big)$  %
	rounded items from the virtual knapsack into \yellow knapsacks.
	
	\item[5)] \textbf{Create bundles}\label{step:MDKG:bundles}
	Consider the items that remain in the virtual \red knapsack after rounding.
	Sort them by type (first value, then size) and cut them to form $\frac{L_\gr}\eps$ bundles~$B_\re$ of equal size.
	For each bundle, remember how many items of each type are placed entirely inside it.
	Place cut items into \yellow knapsacks. 
	Consider each $B \in B_\re$ as an item of size and value equal to the fractional size respectively value of items placed entirely in $B$.
	
	\item[6)] \textbf{Solve \green instance:} Temporarily insert the bundles in $B_\re$ into the data structures used in the \green subproblem. Solve this subproblem 
	{with the algorithm due to \cref{thm:MDK:few}}.
	
	\item[7)] \textbf{Implicit solution:}
	Among all guesses, keep the solution~$P_F$ with the highest value.
	Store items in~$J_\ye$ and their placement explicitly.
	Revert the removal of~$J_\ye$ from the \red data structures after the next update.
	For the remaining items, the solutions are given as in the respective subproblem, with the exception of items packed in the virtual \red knapsack.
	The solution of these items is stored implicitly by deciding membership in a bundle on a query.
\end{enumerate}

\subparagraph*{Queries}
We essentially use the same approach as in  {\cref{thm:MDK:few,theo:mmdk}} for the \red and \green subproblem, respectively. 
However, special care has to be taken with items in the virtual knapsack.
In the \red subproblem, we assume that items of a certain type which are packed in the virtual knapsack are the first, i.e., smallest, of that type.
We can therefore decide in constant time whether or not an item is contained in the virtual knapsack and, if this is the case, fill it into the free space in \green knapsacks reserved by bundles.
We do this efficiently by using a first fit algorithm on the knapsacks with reserved space.
Since items in \yellow knapsacks are stored explicitly, they can be accessed in constant time. See \cref{app:general} for details.

\subparagraph*{Hardness of approximation} 
It is a natural question whether the update time of our algorithms for \multiknapsack  
can be improved to $\big(\frac1\eps \log n\big)^{\OO(1)}$. We show that this is impossible, unless P=NP; see \cref{sec:hard-ks}.
\begin{restatable}[]{theorem}{HardnessMultiKnapsack}
	\label{theo:HardnessMultiKnapsack}
	Unless $\classP=\NP$, there is no fully dynamic algorithm for \multiknapsack that maintains a $(1-\eps)$-approximate solution in update time polynomial in $\log n$ and $\frac 1 \eps$, for~$m < \frac{1}{3\eps}$. 
\end{restatable}


%
\section{Conclusion}
\label{sec:conclusion}
Any dynamic algorithm can be turned into a non-dynamic one by having~$n$ items arrive one by one,  incurring  an additional linear  {factor} in the running time. Hence,  {lower bounds for the running times of static approximation schemes yield lower bounds for update times of dynamic algorithms.} Our  {running times for} the problems with identical capacities are tight in the sense that the algorithms yield a static FPTAS  (resp. EPTAS) matching  {known lower bounds}. 

Clearly, it would be interesting to generalize our results beyond \multiknapsack.
A natural generalization is~$d$-dimensional \knapsack, where the items and knapsacks have a size in each of the $d$ dimensions, and a feasible packing of a subset of items must meet the capacity constraint in each dimension. {A} reduction to one dimension by~\cite{VegaL81} immediately yields a dynamic $\frac{1-\eps}d$-approximation, but designing a dynamic framework with a better guarantee than this remains~open. Note that unless~$\text{W}[1] = \text{FPT}$, 2-dimensional knapsack \emph{does not} admit a dynamic algorithm maintaining a~$(1-\eps)$-approximation in worst-case update time~$f(\eps) n^{\OO(1)}$~\cite{KulikS10}. 

A recent line of research exploits fast techniques for solving  convolution problems to speed up knapsack algorithms (exact and approximate); see, e.g.,~\cite{Chan18,Jin19,AxiotisT19,PolakRW21,KellererP04}. In fact, it has been shown that \knapsack is computationally equivalent to the $(\min,+)$-convolution problem~\cite{CyganMWW19}. It seems worth exploring whether such techniques are useful in the dynamic setting. Here, it is unclear whether the re-computation of a solution in a new iteration can be done in polylogarithmic time. It is also open whether such techniques can be applied for solving \multiknapsack, even in the static setting.

We hope to foster further research for other packing, scheduling and, generally, non-graph problems. For bin packing and for makespan minimization on uniformly related machines, we notice that existing PTAS techniques from \cite{Karmarkar1982} and \cite{Jansen10,HochbaumS87} combined with rather straightforward data structures can be lifted to a fully dynamic algorithm framework for the respective problems.

\begin{small}
	\bibliography{dynamic-packing}
\end{small}

\clearpage

\appendix
\section*{Appendices}

\section{Proofs for Single Knapsack}
\label{apx:single}

In this section, we give the detailed analysis of our algorithm for \knapsack in \Cref{sec:singleKS}. We consider the iteration in which the guesses~${\lmax},{\lmin}, n_{\min}$ and~$\vals$ are correct and show that the obtained solution has a value of at least~$(1-4\eps)\cdot v(\opt)$. 

Let~$\mcp_1$ be the set of solutions respecting:
\begin{enumerate*}[label=(\roman*), ref=(\roman*),leftmargin=9mm]
	\item\label{prop:single:prop1}packed items not in~\candeps have a value of at most~$\vc{\lmin}$ but are not part of the~$n_{\min}$ smallest items of the value class~$V_{\lmin}$, and
	\item\label{prop:single:prop2}the total value of these items lies in~$[\vals,(1+\eps)\vals]$. 
\end{enumerate*}~
Denote by~$\opt_1$ the solution of highest value in~$\mcp_1$.

\begin{lemma} \label{lem:single:candeps}
	Consider~$\opt_1$ defined as above. Then,~$v(\opt_1) \geq (1-\eps) \cdot v(\opt)$.
\end{lemma}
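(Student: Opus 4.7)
The plan is to construct from $\opt$ a feasible solution $\opt' \in \mcp_1$ with $v(\opt') \geq (1-\eps)\, v(\opt)$; since $\opt_1$ is the most valuable element of $\mcp_1$, the lemma then follows. I would keep the low-value part $\opt \setminus \opteps$ untouched and replace the high-value part $\opteps$ by items drawn from $\candeps$.

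For every value class $V_\ell$ with $\lmin+1 \le \ell \le \lmax$, all items of $\opt\cap V_\ell$ belong to $\opteps$ (they strictly outvalue items of $\opt\setminus\opteps$), and $|\opt\cap V_\ell|\le \frac 1 \eps$ since $|\opteps|=\frac 1 \eps$. In the regime $V_{\lmin}\ge \eps^2 V_{\lmax}$, $\candeps$ contains the $\frac 1 \eps$ smallest items of every such $V_\ell$ together with the $n_{\min}$ smallest items of $V_{\lmin}$. I would replace $\opt\cap V_\ell$ for $\ell>\lmin$ and $\opteps\cap V_{\lmin}$ by equal-cardinality subsets of the corresponding $\candeps\cap V_\ell$, consisting of the smallest available items: rounded values are preserved and total sizes can only shrink, so $\opt'$ remains feasible. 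The size-based tie-breaking in the definition of $\opteps$ guarantees that any item of $\candeps$ already lying in $\opt$ is in fact in $\opteps$, so the swap never introduces a duplicate from $\opt\setminus\opteps$, and $v(\opt')=v(\opt)$ in rounded values.

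In the complementary regime $V_{\lmin}<\eps^2 V_{\lmax}$, $\candeps$ covers only value classes with value $\ge \eps^2 V_{\lmax}$. I would perform the swap above for those classes and \emph{discard} the at most $\frac 1 \eps$ items of $\opteps$ whose value is below $\eps^2 V_{\lmax}$ (which in particular discards all of $\opteps\cap V_{\lmin}$). Their total value is at most $\frac 1\eps\cdot \eps^2 V_{\lmax}=\eps\, V_{\lmax}\le \eps\, v(\opt)$, the last step using that $\opt$ contains an item of value $V_{\lmax}$. Hence $v(\opt')\ge (1-\eps)\,v(\opt)$ in either case.

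Finally I would verify that $\opt'\in\mcp_1$. In both regimes $\opt'\setminus\candeps=\opt\setminus\opteps$, so property~(ii) follows directly from the definition $\vals\le v(\opt\setminus\opteps)<(1+\eps)\vals$. For property~(i), items of $\opt\setminus\opteps$ all have rounded value $\le V_{\lmin}$ by the definition of $\opteps$, and no such item of value exactly $V_{\lmin}$ can be among the $n_{\min}$ globally smallest items of $V_{\lmin}$: the tie-breaking rule used in the definition of $\opteps$ forces $\opteps\cap V_{\lmin}$ to witness $n_{\min}$ items of $V_{\lmin}$ that strictly precede it in the size-then-index order. The main difficulty will be to ensure that the $n_{\min}$ smallest items of $V_{\lmin}$ referenced in~(i) are selected by exactly the same tie-breaking rule used to define $\opteps$ and $\candeps$; once that convention is made consistent, properties~(i), (ii) and the no-duplication claim all follow directly.
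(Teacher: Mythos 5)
Your proof is correct and follows essentially the same route as the paper: you discard the items of $\opteps$ with value below $\eps^2 V_{\lmax}$, charge the loss of at most $\frac1\eps\cdot\eps^2 V_{\lmax} \le \eps\, v(\opt)$, keep $\opt\setminus\opteps$ untouched, and invoke the size tie-breaking to verify membership in $\mcp_1$. The only cosmetic difference is that you replace the surviving high-value items by equal-cardinality sets of same-class candidates (feasible because sizes can only shrink), whereas the paper shows by an exchange argument contradicting the optimality and tie-breaking of $\opt$ that these items already lie in $\candeps$ --- the same exchange idea, executed constructively rather than by contradiction.
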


\begin{proof}
	Let~$\opt^*$ be the packing obtained from~\opt by removing
	all items belonging to~$\opteps$ whose value is strictly smaller than~$\eps^2 \, \vc{\lmax}$. 
	Since~\opteps consists of~$\frac 1 \eps$ many items, the total value of removed items is at most~$\frac 1 \eps \cdot \eps^2 \, \vc{\lmax} \leq \eps\cdot \opt$.
	We show that~$\opt^* \in \mcp_1$.
	
	Consider an item~$j$ in~$\opteps$ of value~$v_j\geq \eps^2 \, \vc{\lmax}$.
	If~$v_j=\vc{\lmin}$, then~$j \in \candeps$ by definition of~$n_{\min}$ and~$\opteps$, specifically, due to the tie-breaking rules.
	Assume now that~$v_j>\vc{\lmin}$ and~$j \notin \candeps$. Recall that~\candeps
	contains the~$\frac 1 \eps$ smallest items of value~$v_j$, and~$|\opteps| = \frac 1 \eps$.
	Thus, there exists an item of value~$v_j$, smaller than~$j$, which
	belongs to~\candeps but not to~$\opteps$.
	Exchanging~$j$ for this item contradicts
	the definition of~\opt. Therefore,~$j\in\candeps$ and Condition~\ref{prop:single:prop1} is satisfied.
	Condition~\ref{prop:single:prop2} follows directly from the definition of~$\vals$, and therefore~$\opt^*\in \mcp_1$, concluding the proof.
\end{proof}

\begin{lemma}\label{lem:single:FPTAS}
	Let $\opt_2$ be the optimal solution of the instance~$I$ on which the FPTAS is run at Step~3.%
	Then,~$v(\opt_2) \geq (1-\eps)\cdot v(\opt_1)$.
\end{lemma}

\begin{proof}
	Consider the fractional solution~$\opt_1^*$ for~$I$ that is obtained from~$\opt_1$ as
	follows.
	Place items from~\candeps as in~$\opt_1$ and additionally place the placeholder item~$B$.
	Denote by~$J_L$ the set of items packed by~$\opt_1$ that are not in~\candeps, i.e., the low-value items.
	By definition of~$B$, we have~$v_B = \vals \geq \frac1{1+\eps} v(J_L) \geq (1-\eps) V(J_L)$.
	Further, since~$B$ consists of the densest low-value items, it must be the case that~$s_B \leq s(J_L)$.
	Therefore,~$\opt_1^*$ is a feasible solution for~$I$ and the statement follows.
\end{proof}

\begin{lemma}
	\label{lem:single:lastval}
	For the solution~$P_F$ of the algorithm, we have $v(P_F) \geq (1-4\eps) \cdot v(\opt)$.
\end{lemma}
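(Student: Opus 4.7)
The plan is to chain together the three approximation losses already quantified in the preceding lemmas and to bound the single additional loss introduced when the placeholder item~$B$ is ``unpacked'' into its constituent low-value items.

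First I would restrict attention to the iteration in which the guesses~$V_{\lmax}$,~$V_{\lmin}$,~$n_{\min}$ and~$\vals$ are all correct. Since the algorithm keeps, over all guesses, the solution of highest value, any bound proved for this particular iteration transfers to~$P_F$. In that iteration, \Cref{lem:single:candeps} gives $v(\opt_1)\geq(1-\eps)v(\opt)$ and \Cref{lem:single:FPTAS} gives $v(\opt_2)\geq(1-\eps)v(\opt_1)$, where~$\opt_2$ is the optimum of the FPTAS-instance~$I=\candeps\cup\{B\}$. Chaining with the FPTAS guarantee yields $v(P)\geq(1-\eps)v(\opt_2)\geq(1-\eps)^{3}v(\opt)\geq(1-3\eps)v(\opt)$ for the packing~$P$ returned in Step~\ref{step:single:FPTAS}.

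Next I would compare~$v(P_F)$ to~$v(P)$. The two differ only if~$B\in P$: in that case the algorithm does not collect the nominal value~$v_B=\vals$, but rather the sum of values of those low-value items that fit \emph{entirely} within~$s_B$ when filled in order of decreasing density. Since~$B$ was constructed by fractionally filling with the densest low-value items until reaching value~$\vals$, at most one item gets cut, and this item has value at most~$V_{\lmin}$. Hence
\[
v(P_F)\;\geq\;v(P)\;-\;V_{\lmin}.
\]
Now I would charge~$V_{\lmin}$ against \opteps: by definition each of the $\frac{1}{\eps}$ items of \opteps has value at least~$V_{\lmin}$, so $V_{\lmin}\leq \eps\cdot v(\opteps)\leq \eps\cdot v(\opt)$. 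Plugging this in gives
\[
v(P_F)\;\geq\;(1-3\eps)v(\opt)-\eps\,v(\opt)\;=\;(1-4\eps)v(\opt),
\]
which is the desired bound.

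The step that requires the most care is the charging of the fractionally cut item to~\opteps. One has to verify two things: that the cut item's value is indeed at most~$V_{\lmin}$ (which follows from the fact that the pool from which~$B$ is built contains only items of value at most~$V_{\lmin}$, the $n_{\min}$ smallest items of class~$V_{\lmin}$ having been temporarily removed in Step~\ref{step:single:placeholder}), and that~$V_{\lmin}\leq\eps\,v(\opt)$, which relies on $|\opteps|=\frac{1}{\eps}$ together with the definition of~$V_{\lmin}$ as the lowest value class appearing in \opteps. Everything else is just bookkeeping on the $(1-\eps)$-factors, using $(1-\eps)^{3}\geq 1-3\eps$.
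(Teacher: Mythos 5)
Your proposal is correct and follows essentially the same route as the paper: chain \Cref{lem:single:candeps}, \Cref{lem:single:FPTAS} and the FPTAS guarantee to get a $(1-\eps)^3$-factor, then charge the single fractionally cut low-value item against the $\frac1\eps$ items of \opteps (each of value at least that of the cut item) to lose only an additional $\eps\, v(\opt)$, yielding $(1-4\eps)\, v(\opt)$. The only cosmetic difference is that you bound the cut item by $V_{\lmin}$ before comparing with \opteps, whereas the paper compares the cut item to \opteps directly; the argument is the same.
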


\begin{proof}
	The solution~$P_\mathrm{FPTAS}$ returned by the FPTAS in Step~3 %
	has a value of at least~$(1-\eps)\cdot v(\opt_2)$. The solution~$P_F$ is obtained from~$P_\mathrm{FPTAS}$ by replacing the placeholder with the corresponding low-value items, except possibly the fractional item $j$.
	Since there are~$\frac 1 \eps$ items in~$\opt$ that are of higher value than~$j$, namely the ones in $\opteps$, this implies
	$$v(P_F) \geq v(P_\mathrm{FPTAS}) - \eps\cdot v(\opt).$$
	Using \Cref{lem:single:candeps,lem:single:FPTAS}, we obtain:
	\begin{align*}
		v(P_F) &\geq v(P_\mathrm{FPTAS}) - \eps\cdot v(\opt)\\
		&\geq (1-\eps)^2 \cdot v(\opt_1)-\eps \cdot v(\opt)\\
		&\geq (1-\eps)^3 \cdot v(\opt)-\eps \cdot v(\opt)\\
		&\geq (1-4\eps) \cdot v(\opt).
	\end{align*}
\end{proof}

\begin{lemma}
	\label{lem:single:time}
	The algorithm has update time $\OO\big(\frac{1}{\eps^{9}} \cdot \log n \cdot \log (n \cdot \vvmax) \cdot \log^2 \vvmax + \frac 1 \eps \log \vmax \log n\big)$. 
\end{lemma}

\begin{proof}	
	In the first step, guessing~${\lmax}$ and ${\lmin}$, and therefore enumerating over all possible values, leads to $\OO(\frac 1 {\eps^2}  \cdot \log^2 \vvmax)$ many iterations. Guessing~$n_{\min}$ adds an additional factor of~$\frac 1 \eps$.
	
	In the second step, again guessing~\vals adds a factor to the running time, specifically~$\OO(\frac 1\eps\log (n\cdot
	\vvmax))$. Temporarily removing the~$n_{\min}\leq \frac 1 \eps$ elements from the data structure costs a total of~$\OO(\frac 1 \eps \log n)$, as does adding back removed items from a previous iteration. Computing the size of~$B$ can be done by querying the
	prefix of value just above~$\vals$ in time~$\OO(\log n)$, see \cref{sec:ds}.
	
	For Step~3, %
	note that the set~\candeps spans value classes ranging from values of~$\eps^2 \cdot \vc{\lmax}$ or higher to~$\vc{\lmax}$.
	As values are rounded to powers of~$(1+\eps)$, we consider at most~$\log_{1+\eps} \frac 1 {\eps^2}$ many.
	Hence,~\candeps is composed of~$\OO(\frac 1 {\eps^3})$ items and the FPTAS %
	runs in time~$\OO\big (\big ( \frac{1}{\eps^{9/4}}\frac{1}{\eps^{3/2}} + \frac{1}{\eps^2}\big) /2^{\Omega(\sqrt{\log(1/\eps)})}\big) = 
	\OO(\frac 1 {\eps^4})$.
	
	Recall, that we need to maintain one data structure for every existing and one for each possible value class, that is,~$\OO(\frac 1 \eps \log \vmax)$ many data structures in total.
	Maintenance of these, i.e., insertion or deletion of an item, takes time~$\OO(\frac 1 \eps \log \vmax \log n)$ in total.
\end{proof} 

\begin{lemma}\label{lem:single:query}
	The query times of our algorithm are as follows.
	\begin{inparaenum}[(i)]
		\item Single item queries are answered in time~$\OO(1)$.  
		\item Solution value queries are answered in time~$\OO(1)$.
		\item Queries of the entire solution~$P$ are answered in time~$\OO(\abs{P})$.
	\end{inparaenum}
\end{lemma}

\section{Few Different Knapsacks}
\label{sec:MDK:few}
It is not very difficult to extend the approach from \Cref{sec:singleKS} to the case of multiple but few knapsacks.
While theoretically applicable for any number of knapsacks, the running time is reasonable when~$m=(\frac 1\eps\log n )^{\OO_\eps(1)}$.
The main difference to \cref{sec:singleKS} comes from the fact that in order to reserve space for low-value items, a single placeholder is no longer sufficient.
Instead, we utilize several smaller placeholders.
Since guessing the size of low-value items for every knapsack would lead to a running time exponential in~$m$, we instead employ a sufficiently large number of placeholder items, namely~$\frac{m}{\eps}$ many.

This leads to additional changes as there are more fractionally cut items, i.e., one per placeholder.
To be able to charge them as before in \cref{lem:single:lastval}, we now consider the~$\frac{m}{\eps^2}$ most profitable items in~\opt.
This in turn leads to a larger candidate set of size~$\smash{\frac{m}{\eps^2}}$.
Furthermore, since we consider multiple knapsacks, we need to utilize an EPTAS instead of an FPTAS.
Besides these changes, the algorithm remains unchanged.

\fewthm*

\paragraph*{Definitions and Data Structures}
Let~$\opt$ be the set of items used in an optimal solution and~$\optmeps$ the set containing the~$\frac m {\eps^2}$ most valuable items of~$\opt$; in both cases, break all ties by picking smaller-size items.
Further, denote by~$V_{\lmax}$ and~$V_{\lmin}$ the highest and lowest value (class) of an element in~$\optmeps$ respectively and by~$n_{\min}$ the number of elements of~$\optmeps$ with value~$\vc{\lmin}$. 
Let~$\vals$ be the total value of the items in~$\opt\setminus\optmeps$, rounded down to a power of~$(1+\eps)$. 
The data structures used are identical to those of \Cref{sec:singleKS}.

\paragraph*{Algorithm}
\begin{enumerate}[topsep=6pt,itemsep=0pt,parsep=6pt,partopsep=0pt, label = \arabic*)]
	\item \textbf{Compute high-value candidates \textnormal{\candmeps}:} 
	Guess the three values~${\lmax}$,~${\lmin}$ and~$n_{\min}$. 
	If~$\vc{\lmin} \cdot m \geq \eps^3 \cdot \vc{\lmax}$, then define~\candmeps to be the set that contains the~$\frac m {\eps^2}$ smallest items of each of the value classes~$V_{\lmin+1},\ldots,V_{\lmax}$, plus the~$n_{\min}$ smallest items from~$V_{\lmin}$.\\	
	Otherwise, we set~\candmeps to be the union of the $\frac m {\eps^2}$ smallest items of each of the value classes with values in~$[\frac{\eps^3}{m} \cdot \vc{\lmax}, \vc{\lmax}]$.

	\item \textbf{Create bundles of \lowval items as placeholders:} %
	Guess the~value \vals and consider the data structure containing all the items of value at most~$\vc{\lmin}$ sorted by decreasing density.
	Remove from it (temporarily) the~$n_{\min}$ smallest items of value~$\vc{\lmin}$.
	Insert them back into the data structure right before the next iteration.
	From the remaining items, compute the amount of fractional items necessary to reach a value of~$\vals$. 
	That is, sum the sizes of the densest items until their total value equals~$\vals$ and, if necessary, cut the last item fractionally.
	In the same manner, cut this range of items again fractionally to obtain bundles~$B_1, B_2,\dots, B_{\frac m \eps}$ of equal value~$\frac \eps m \cdot \vals$.
	
	\item \textbf{Use an EPTAS:} %
	Consider the instance~$I$ consisting of the items in~\candmeps and the placeholder bundles~$B_1, B_2,\dots, B_{\frac m \eps}$.
	Run the EPTAS designed by Jansen~\cite{Jansen10mkp,Jansen12mkp}, parameterized by $\eps$, to obtain a packing $P$ for this instance. %
	
	\item \textbf{Implicit Solution:} %
	Among all guesses, keep the feasible solution~$P$ with the highest value.
	Then, for any knapsack, place into the knapsack items from~\candmeps as in~$P$ and, if~$B_k$ is placed in~$P$ on this knapsack, also place the low-value items that constitute~$B_k$, except possibly items cut fractionally.
	While used candidates can be stored explicitly, low-value items are given only implicitly by saving the correct guesses and recomputing~$B_k$ on a query.
\end{enumerate}

\paragraph*{Analysis}

The analysis is almost identical to that of \cref{sec:singleKS} with only slight changes to accommodate the alterations described above.
For completeness, we give the full proofs.
We consider the iteration in which all guesses~(${\lmax},{\lmin}, n_{\min}, \vals$) are correct, and show that the obtained solution has a value of at least~$(1-6\eps)\cdot v(\opt)$.
To this end, we consider intermediate results to analyze the impact of each step.

Let~$\mcp_1$ be the set of solutions respecting:
\begin{enumerate*}[label=(\roman*), ref=(\roman*),leftmargin=9mm]
	\item\label{prop:few:prop1}items not in~\candmeps have a value of at most~$\vc{\lmin}$ but are not part of the~$n_{\min}$ smallest items of the value class~$V_{\lmin}$, and
	\item\label{prop:few:prop2}the total value of these items lies in~$[\vals,(1+\eps)\vals]$. 
\end{enumerate*}
Denote by~$\opt_1$ the solution of highest value in~$\mcp_1$.

\begin{lemma} \label{lem:FMDK:candmeps}
	Consider~$\opt_1$ defined as above. Then,~$v(\opt_1) \geq (1-\eps) \cdot v(\opt)$.
\end{lemma}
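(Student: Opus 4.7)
The plan mirrors the argument of \Cref{lem:single:candeps}, adapted to the fact that \candmeps now covers many value classes and the cutoff for ``high-value'' items depends on whether~$V_{\lmin} \cdot m \geq \eps^3 \cdot V_{\lmax}$. Concretely, I would construct a modified solution
\[
\opt^* \coloneqq \opt \setminus \bigl( \optmeps \setminus \candmeps \bigr),
\]
and show that \emph{(a)}~$v(\opt^*) \geq (1-\eps) v(\opt)$ and \emph{(b)}~$\opt^* \in \mcp_1$. Combined, these yield~$v(\opt_1) \geq v(\opt^*) \geq (1-\eps) v(\opt)$.

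For \emph{(a)}, I would split into two cases matching the definition of \candmeps. In the first case, $V_{\lmin} \cdot m \geq \eps^3 \cdot V_{\lmax}$, I would argue via a swap argument and the tie-breaking rule that~$\optmeps \subseteq \candmeps$: for any~$\ell$ with $\lmin < \ell \leq \lmax$, the set $\opt \cap V_\ell \subseteq \optmeps$ consists of the globally smallest items of~$V_\ell$ (else we could swap in a smaller item of equal value, contradicting the tie-breaking of \opt), and since $\candmeps$ contains the~$\frac m{\eps^2}\geq |\opt\cap V_\ell|$ globally smallest items of~$V_\ell$, inclusion holds. The same argument handles~$V_{\lmin}$: the~$n_{\min}$ items in $\optmeps\cap V_{\lmin}$ are exactly the~$n_{\min}$ globally smallest items of~$V_{\lmin}$, which is precisely what \candmeps contains. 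Hence~$\opt^*=\opt$ here. In the second case~$V_{\lmin}\cdot m<\eps^3V_{\lmax}$, a similar swap argument gives $\optmeps\setminus\candmeps\subseteq \{j\in\optmeps : v_j<\frac{\eps^3}{m}V_{\lmax}\}$, so the discarded value is bounded by $|\optmeps| \cdot \frac{\eps^3}{m}V_{\lmax} = \eps V_{\lmax} \leq \eps\, v(\opt)$.

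For \emph{(b)}, the key observation is that in both cases $(\opt\setminus \optmeps)\cap \candmeps = \emptyset$: items of $\opt\setminus \optmeps$ have value at most $V_{\lmin}$, and the items of value $V_{\lmin}$ in $\candmeps$ are precisely the $n_{\min}$ smallest, all of which lie in $\optmeps$ (by the tie-breaking argument above). Hence $\opt^* \setminus \candmeps = \opt\setminus \optmeps$. Property~\ref{prop:few:prop1} then follows immediately from the definition of \optmeps together with the tie-breaking rule (items of value exactly $V_{\lmin}$ in $\opt\setminus \optmeps$ cannot be among the $n_{\min}$ smallest of $V_{\lmin}$). Property~\ref{prop:few:prop2} is automatic: the total value of $\opt\setminus \optmeps$ lies in $[\vals,(1+\eps)\vals)$ by definition of \vals as the rounding down to a power of~$(1+\eps)$.

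The only delicate point, and the one deserving the most care in the writeup, is justifying the swap argument under the tie-breaking rules: verifying that \opt and \optmeps both ``prefer smaller items'' consistently, so that $\opt\cap V_\ell$ always coincides with the globally smallest items of~$V_\ell$ (up to cardinality). Once this is pinned down, everything else reduces to accounting and set algebra, and the lemma follows.
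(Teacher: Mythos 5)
Your proposal is correct and takes essentially the same route as the paper's proof: remove from \opt the low-value part of \optmeps (charging at most $\frac m{\eps^2}\cdot\frac{\eps^3}{m}V_{\lmax}\leq \eps\, v(\opt)$) and use the swap/tie-breaking argument to show that every remaining item of \optmeps lies in \candmeps, after which properties (i) and (ii) of $\mcp_1$ follow from the definitions of \optmeps, $n_{\min}$ and \vals. The only cosmetic difference is that you delete $\optmeps\setminus\candmeps$ directly and split into the two cases of the definition of \candmeps, whereas the paper deletes the items of \optmeps with value below $\frac{\eps^3}{m}V_{\lmax}$ — the same set up to items that the swap argument shows are in \candmeps anyway.
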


\begin{proof}
	Let~$\opt^*$ be the packing obtained from~\opt by removing
	all items belonging to~$\optmeps$ whose value is strictly smaller than~$\frac{\eps^3}{m} \cdot \vc{\lmax}$. 
	Since~\optmeps consists of~$\frac m {\eps^2}$ many items, the total value of removed items is at most~$\smash{\frac{m}{\eps^2}\cdot \frac{\eps^3}{m}} \cdot \vc{\lmax} \leq \eps \cdot \opt$.
	We show that~$\opt^* \in \mcp_1$.
	
	Consider an item~$j$ in~$\optmeps$ of value~$v_j\geq \frac{\eps^3}{m} \cdot \vc{\lmax}$.
	If~$v_j=\vc{\lmin}$, then~$j \in \candmeps$ by definition of~$n_{\min}$ and~$\optmeps$; specifically, due to the tie-breaking rules.
	Assume now that~$v_j>\vc{\lmin}$ and~$j \notin \candmeps$. Recall that~\candmeps	contains the~$\frac m {\eps^2}$ smallest items of value~$v_j$, and~$|\optmeps| = \frac m {\eps^2}$.
	Thus, there exists an item of value~$v_j$, smaller than~$j$, which
	belongs to~\candmeps but not to~$\optmeps$.
	Exchanging~$j$ for this item contradicts
	the definition of~\opt. Therefore,~$j\in\candmeps$ and Condition~\ref{prop:few:prop1} is satisfied.
	Condition~\ref{prop:few:prop2} follows directly from the definition of~$\vals$, and therefore~$\opt^*\in \mcp_1$, concluding the proof.
\end{proof}

\begin{lemma}\label{lem:FMDK:EPTAS}
	Let $\opt_2$ be the optimal solution of the instance~$I$ on which the EPTAS is run in Step~3. %
	Then,~$v(\opt_2) \geq (1-2\eps)\cdot v(\opt_1)$.
\end{lemma}

\begin{proof}
	Consider the fractional solution~$\opt_1^*$ for~$I$ that is obtained from~$\opt_1$ as follows.
	First, place items from~\candmeps as in~$\opt_1$. Next, consider the placeholder bundles~$B_1, B_2,\dots, B_{\frac m \eps}$ in any order, and place them fractionally into the remaining space.
	That is, place remaining bundles in the first non-full knapsack. If a bundle does not fit, fill the current knapsack with a fraction of the bundle and place the remaining fraction	in the next non-full knapsack using the same process.
	Finally, discard the fractionally cut bundles.
	
	Denote by~$J_L$ the set of items packed by~$\opt_1$ that are not in~\candmeps, i.e., the low-value items.
	Since the bundles consists of the densest low-value items, it must be the case, that~$\sum_{k=1}^{m} s(B_k) \leq s(J_L)$.
	Therefore,~$\opt_1^*$ is a feasible solution for~$I$ and the statement follows.
	
	By definition of the bundles, we have~$\sum_{k=1}^{m} v(B_k) = \vals \geq \frac{1}{1+\eps} v(J_L) \geq (1-\eps) v(J_L)$.
	Further, since there are~$\frac m \eps$ bundles of equal value and at most~$m$ of them are cut fractionally and discarded, we conclude that~$v(\opt_1^*) \geq (1-2\eps)\cdot v(\opt_1)$.
\end{proof}

\begin{lemma}
	\label{lem:FMDK:lastval}
	For the solution~$P_F$ of the algorithm, we have $v(P_F) \geq (1-6\eps) \cdot v(\opt)$.
\end{lemma}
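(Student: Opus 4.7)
The plan is to chain the two intermediate approximation guarantees from \Cref{lem:FMDK:candmeps,lem:FMDK:EPTAS} with the accuracy of the EPTAS, and then carefully account for the loss incurred when the bundle-based solution $P$ returned by the EPTAS is converted into the item-based solution $P_F$. This closely mirrors the analysis of the corresponding single-knapsack case in \Cref{lem:single:lastval}; the main change is that we now have to charge $\frac{m}{\eps}$ lost items (one per bundle boundary) to the $\frac{m}{\eps^2}$ items in \optmeps, rather than a single lost item to \opteps.

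First, the EPTAS of Jansen, applied in Step~\ref{step:FMDK:EPTAS} to the instance $I$, returns a packing $P$ with $v(P) \geq (1-\eps)\cdot v(\opt_2)$. Combining this with \Cref{lem:FMDK:candmeps,lem:FMDK:EPTAS} yields
\[
v(P) \;\geq\; (1-\eps)(1-2\eps)(1-\eps)\cdot v(\opt) \;\geq\; (1-4\eps)\cdot v(\opt),
\]
where the last inequality follows by a direct expansion of $(1-\eps)^2(1-2\eps)$.

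It remains to bound the loss incurred when each packed bundle $B_k$ is replaced by its fully-contained low-value items, with items cut fractionally at bundle boundaries dropped. Since the bundles $B_1,\ldots,B_{m/\eps}$ arise from partitioning a single contiguous density-sorted range of items into pieces of equal size, at most one item is cut at each of the at most $\frac{m}{\eps}$ bundle boundaries. Each such item has value at most $V_{\lmin}$, so the total lost value is at most $\frac{m}{\eps}\cdot V_{\lmin}$. To bound $V_{\lmin}$, I appeal to \optmeps: by construction, \optmeps contains $\frac{m}{\eps^2}$ items each of value at least $V_{\lmin}$, so $\frac{m}{\eps^2}\cdot V_{\lmin} \leq v(\optmeps) \leq v(\opt)$, which rearranges to $V_{\lmin} \leq \frac{\eps^2}{m}\cdot v(\opt)$. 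Substituting, the total loss is at most $\eps\cdot v(\opt)$, and hence
\[
v(P_F) \;\geq\; v(P) - \eps\cdot v(\opt) \;\geq\; (1-5\eps)\cdot v(\opt) \;\geq\; (1-6\eps)\cdot v(\opt).
\]

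I do not foresee a major obstacle beyond two minor checkpoints. First, one needs to confirm that the ``except possibly items cut fractionally'' convention in Step 4 of the algorithm is interpreted so that boundary items are simply dropped rather than reassigned across knapsacks, so that the clean bound $\frac{m}{\eps}\cdot V_{\lmin}$ on lost value applies. Second, one should verify that the accumulated $\eps$-slack $(1-\eps)^2(1-2\eps) - \eps$ indeed fits inside $(1-6\eps)$ for all $\eps \in (0,1)$, which is an elementary calculation. The whole argument is structurally parallel to the single-knapsack case, and the cardinalities $\frac{m}{\eps}$ versus $\frac{m}{\eps^2}$ line up precisely to give the same $\eps\cdot v(\opt)$ charging slack as in \Cref{lem:single:lastval}.
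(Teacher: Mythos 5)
Your proposal is correct and follows essentially the same route as the paper: chain the EPTAS guarantee with \Cref{lem:FMDK:candmeps,lem:FMDK:EPTAS}, and charge the at most $\frac m\eps$ fractionally cut low-value items against the $\frac m{\eps^2}$ items of \optmeps (each of value at least $V_{\lmin}$) to lose at most $\eps\cdot v(\opt)$. Your version just makes the charging step slightly more explicit via $V_{\lmin}\leq \frac{\eps^2}{m}v(\opt)$ and ends with the marginally tighter constant $(1-5\eps)$, which of course implies the stated bound.
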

\begin{proof}
	The solution~$P_\mathrm{EPTAS}$ returned by the EPTAS in Step~3 %
	has a value of at least~$(1-\eps)\cdot v(\opt_2)$. 
	The solution~$P_F$ is obtained from~$P_\mathrm{EPTAS}$ by replacing the placeholder bundles with the corresponding low-value items with the exception of fractionally cut ones, of which there are at most~$\frac{m}{\eps}$ many.
	Since there are~$\frac m {\eps^2}$ items in~$\opt$ that are of higher value than these items, namely the ones in $\optmeps$, this implies
	$$v(P_F) \geq v(P_\mathrm{EPTAS}) - \eps\cdot v(\opt).$$
	Using \Cref{lem:FMDK:candmeps,lem:FMDK:EPTAS}, we obtain:
	\begin{align*}
		v(P_F) &\geq v(P_\mathrm{EPTAS}) - \eps\cdot v(\opt)\\
		&\geq (1-2\eps)^2 \cdot v(\opt_1)-\eps \cdot v(\opt)\\
		&\geq (1-2\eps)^2 \cdot (1-\eps) \cdot v(\opt)-\eps \cdot v(\opt)\\
		&\geq ((1-4\eps) \cdot (1-\eps) -\eps) \cdot v(\opt)\\
		&\geq (1-6\eps) \cdot v(\opt),
	\end{align*}
	where the second to last equation follows from Bernoulli's inequality.
\end{proof}

\begin{lemma}
	\label{lem:FMDK:time}
	The algorithm has update time~$2^{\OO(\frac 1 \eps \cdot \log^4(\frac 1 \eps))} \cdot (\frac m \eps \log (n\vvmax))^{\OO(1)} + \OO(\frac 1 \eps \log \vmax \log n)$.
\end{lemma}

\begin{proof}	
	In the first step, guessing~${\lmax}$ and ${\lmin}$ leads to $\OO(\frac 1 {\eps^2}  \log^2
	\vvmax)$ many iterations. Guessing~$n_{\min}$ adds an additional factor of~$\frac m {\eps^2}$.
	In the second step, guessing~\vals leads to~$\OO(\frac 1\eps\log (n	\vvmax))$ many additional iterations, so the factor due to guessing is $\OO(\frac m {\eps^5} \log^2 \vvmax \log (n\vvmax))$ 
	
	Temporarily removing the~$n_{\min}\leq \frac m {\eps^2}$ elements from the data structure costs a total of~$\OO(\frac m {\eps^2} \log n)$, as does adding back removed items from a previous iteration.
	Computing the size of the bundles can be done by querying the	prefixes of value just above~$\vals$, so in time~$\OO(\log n)$. Computing the cut items of the bundles takes time $\frac{m}{\eps}\log n$.
	
	The set~\candmeps spans value classes ranging from values of~$\vc{\lmax}$ to a
	value at least~$\frac {\eps^3} m \cdot \vc{\lmax}$. 
	As the value classes correspond to powers of~$(1+\eps)$, this means we consider at most~$\log_{1+\eps} \frac m {\eps^3}$ many. Since each of them contains at most~$\smash{\frac{m}{\eps^2}}$ items,~\candmeps contains~$\OO(\frac {m^2} {\eps^6})$ items in total.
	Thus, in the third step, the EPTAS, used on $\OO(\frac {m^2} {\eps^6})$ many items, runs in time
	$2^{\OO(\frac 1 \eps \cdot \log^4(\frac 1 \eps))}+(\frac m \eps)^{\OO(1)}$.
	Together, this gives the first term in the desired update time.
	
	Recall, that we need to maintain one data structure for every existing and one for each possible value class, that is,~$\OO(\frac 1 \eps \log \vmax)$ many data structures in total.
	Maintaining these takes time~$\OO(\frac 1 \eps \log \vmax \log n)$.
\end{proof}

\paragraph*{Queries}
We show how to efficiently handle the different types of queries and state their running time.
\begin{compactitem}
	\item \textbf{Single Item Query:}
	If the queried item is contained in~\candmeps, its packing was saved explicitly.
	For low-value items, we save the first and last element entirely inside a bundle and on query of an item decide its membership in a bundle by comparing its density with those pivot elements.
	\item \textbf{Solution Value Query:} While the algorithm works with rounded values, we may set up the data structure of Section~\ref{sec:ds} to additionally store the actual values of items and enable prefix computation on the actual values.
	We can compute and store the actual solution value after an update by summing the actual values of packed candidates and determining the actual value of items in~$B$ using prefix computations while subtracting the values of discarded fractional bundles and items. On query, we return the stored solution value. 
	\item \textbf{Single Knapsack Query:}
	Output the saved packing of all candidates packed in the knapsack. Then, in the respective density sorted data structure, iterate over items in bundles that were packed in the queried knapsack and output them. As above, this is possible since the first and last item of a bundle were saved during the update step.
	\item \textbf{Query Entire Solution:}
	Output saved packing of all candidates and iterate over items in packed bundles in the respective density sorted data structure as above.
\end{compactitem}

\begin{lemma}\label{lem:few:query}
	The query times of our algorithm are as follows.
	\begin{compactenum}[(i)]
		\item Single item queries are answered in time~$\OO( \log \frac {m^2}{\eps^6})$. %
		\item Solution value queries are answered in time~$\OO(1)$.
		\item Queries of a single knapsack~$P_K$ are answered in time~$\OO(\abs{P_K})$.
		\item Queries of the entire solution~$P$ are answered in time~$\OO(\abs{P})$.
	\end{compactenum}
\end{lemma}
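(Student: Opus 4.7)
The plan is to treat each query type separately, verifying that the implicit representation produced by the update step supports it in the claimed time.

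For part~(i), I would first store the explicitly packed candidates from \candmeps in an auxiliary balanced binary search tree keyed by item index. Since $|\candmeps|\in\OO(m^2/\eps^6)$, a lookup costs $\OO(\log(m^2/\eps^6))$ by \Cref{lem:data-structure}. When an item~$j$ is queried, I first attempt this lookup; if successful, the stored knapsack of~$j$ is returned. Otherwise, $j$ must be a low-value item, in which case I retrieve the value class of~$j$ and compare its density with the two pivots (first and last element fully inside a bundle) that the update step saved for each of the $\frac m \eps$ bundles. Because the candidate's value class has already been identified on insertion (cached in the item record), locating the right family of bundles is $\OO(1)$, and scanning the $\OO(\frac m \eps)$ pivot pairs — or alternatively binary-searching them by density — decides membership and identifies the bundle (and hence knapsack) in time $\OO(\log(m^2/\eps^6))$.

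For part~(ii), the update step already computed and cached the true (non-rounded) solution value by summing actual values of packed candidates and adding prefix sums of actual values of low-value items inside each packed bundle (after subtracting fractionally discarded contributions). Answering a query then amounts to returning this cached value in $\OO(1)$.

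Parts~(iii) and (iv) are handled by straightforward enumeration. For a single knapsack~$K$, I output the candidates that the update step explicitly packed into~$K$, then, for each packed bundle assigned to~$K$, I walk the density-sorted data structure of the corresponding value class from the stored first pivot to the stored last pivot of that bundle and output each item. By \Cref{lem:data-structure}, locating the start pivot costs $\OO(\log n)$ and each subsequent item is obtained in amortised $\OO(1)$ via successor pointers, so the total time is $\OO(|P_K|)$ once we charge the start-pivot lookups to the at most $\frac m\eps$ bundles (absorbed into $|P_K|$). For the entire solution, I simply perform this procedure for every knapsack, giving time $\OO(|P|)$.

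The main technical obstacle is part~(i): we must decide membership in a bundle without traversing it. This is exactly what the two-pivot bookkeeping from the update step enables — an item is in bundle $B_k$ iff its density lies strictly between the densities of the first and last pivot of~$B_k$ (with ties broken consistently, e.g., by index, as fixed in the global sort order). Correctness here relies on the bundles being contiguous slices of the density-sorted list of low-value items of bounded value, which is precisely how they are constructed in the algorithm; consistency between consecutive update operations follows because the pivots are frozen until the next update and the density-sorted structure is only modified on item insertion/deletion, i.e., at an update.
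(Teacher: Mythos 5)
Your treatment of parts (i) and (ii) essentially matches the paper: candidates are stored explicitly, low-value items are resolved by comparing against the saved first/last pivot of each bundle (the paper keeps these pivots in a small density-sorted tree, so your binary-search variant is the right one — the alternative you mention of scanning all $\OO(\frac m\eps)$ pivot pairs would cost $\OO(\frac m\eps)$, not $\OO(\log\frac{m^2}{\eps^6})$), and the solution value is cached at update time. One small patch for (i): an item that fails the candidate lookup is not necessarily low-value (it may be a high-value item outside \candmeps); you need the $\OO(1)$ check that its value class is at most $V_{\lmin}$ before the density comparison, otherwise an unpacked high-value item could be misreported as lying inside a bundle.

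The genuine gap is in your time analysis of (iii) and (iv). You locate the first item of each packed bundle by a search in the density-sorted structure at query time, at cost $\OO(\log n)$ per bundle, and then claim these lookups are ``absorbed into $|P_K|$'' by charging them to the at most $\frac m\eps$ bundles. This charging does not work: $|P_K|$ has no lower bound in terms of the number of bundles placed on the knapsack — a knapsack may hold a packed bundle containing only one (or even zero) fully contained items, so the $\OO(\log n)$ lookup cannot be paid for by the output, and your bound degrades to $\OO(|P_K| + b_K\log n)$ where $b_K$ is the number of bundles on $K$. The paper avoids this by doing the work inside the update, whose budget already contains terms of order $\frac m\eps\log n$: during the update it stores a direct pointer from every bundle to its first (densest) contained item, so that at query time each bundle is entered in $\OO(1)$ and its items are enumerated with $\OO(1)$ overhead per output item, giving $\OO(|P_K|)$ and, summing over knapsacks, $\OO(|P|)$. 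Your proof becomes correct once the query-time searches are replaced by such update-time pointers.
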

\begin{proof}
	(i): Since the packing of candidates is stored explicitly, each of the packed candidates can be output in time~$\OO(1)$. 
	The part of the solution corresponding to low-value items is stored implicitly, by saving the correct guesses and the first and last items of each bundle. The latter are stored in a tree sorted by density first and item index second, as in the data structure that was used to compute the bundles.
	Also save a pointer to and from the respective adjoining bundles of these items.
	This preparation is done during an update.
	When a low-value item is queried, use these pivot items to determine whether it is contained in packed bundles and if so in which it lies.
	This takes time~$\OO( \log \frac {m^2}{\eps^6})$.
	
	(ii): The computations for this query are done during an update of the instance, with the update clearly dominating the running time. Thus, on a query, the answer can be given in constant time.
	
	(iii): As in (i), the packing of candidates in~$P_K$ can be output in time~$\OO(1)$. 
	For low-value items, we create, during an update, pointers from bundles to the first, i.e., densest, item contained in them.
	On a query, we then simply consider each bundle in~$P_K$ and iterate over the density sorted data structure used to find and output all items of the bundle. 
	
	(iv): We use the approach from~(iii) on all knapsacks.
\end{proof}

\section{Proofs for Oblivious Linear Grouping (Section \ref{subsec:harmonicgrouping})}
\label{app:harmonicgrouping}

In this section, we give the technical details of the analysis of the  {oblivious} linear grouping approach developed in \cref{subsec:harmonicgrouping}. We start by analysing the approximation ratio, i.e., by formally proving \cref{lem:hg:apxratio}. Recall that~$\opt$ is the optimal solution and~$\opt_{\types}$ is the optimal solution attainable by packing item types~$\types$ instead of items in~$J'$ and using~$J \setminus J'$ without any changes.

\begin{restatable}[]{lemma}{hgApxRatio}\label{lem:hg:apxratio}
	Let~$\opt$ and~$\opt_{{\types}}$ be as defined above. Then,~$v(\opt_{\types}) \geq \frac{(1-\eps)(1-2\eps)}{(1+\eps)^2} v(\opt)$. 
\end{restatable}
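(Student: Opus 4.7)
I would track and multiply the loss incurred at each of the four stages identified in the overview. Fix any optimal solution $\opt$ and write $\opt' = \opt \cap J'$; items in $\opt \setminus J'$ remain unchanged in $\opt_\types$ and so contribute no loss. Since all items in a value class have identical rounded value, I may \wwlog assume that $\opt' \cap V_\ell'$ consists of the $n_\ell^* := |\opt' \cap V_\ell'|$ smallest items of $V_\ell'$.

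The first two losses are standard. Rounding item values down to powers of $(1+\eps)$ upon arrival costs a factor $\frac{1}{1+\eps}$ by \Cref{lem:Round}(ii). Discarding items in $V_\ell'$ with $\ell < \llmin$ costs a factor $(1-\eps)$: the choice of $\llmin$ guarantees $(1+\eps)^{\llmin} \leq \eps \cdot (1+\eps)^{\lmax}/n'$, and since $|\opt'| \leq n'$ the total discarded value is at most $\eps \cdot (1+\eps)^{\lmax} \leq \eps \cdot v(\opt')$, using that $\opt'$ contains at least one item from $V_{\lmax}'$.

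For each surviving class $V_\ell'$, I would pick as the relevant guess $n_\ell$ the largest power of $(1+\eps)$ with $n_\ell \leq n_\ell^*$ (or the fall-back $\frac{1}{\eps}$ if $n_\ell^* < \frac{1}{\eps}$). Restricting to the $n_\ell$ smallest items of $V_\ell'$ retains at least a $\frac{1}{1+\eps}$ fraction of the class's value. These $n_\ell$ items are partitioned by the algorithm into the size-sorted groups $G_1(n_\ell), \dots, G_{1/\eps}(n_\ell)$. Dropping the largest group $G_{1/\eps}(n_\ell)$ removes at most $\lceil \eps n_\ell \rceil \leq 2\eps n_\ell$ items (using $n_\ell \geq \frac{1}{\eps}$; the fall-back case discards only one out of $\frac{1}{\eps}$ items), so at least a $(1-2\eps)$ fraction of the value survives. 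For every remaining group $G_k(n_\ell)$ with $k < \frac{1}{\eps}$, each item's size is rounded up in the dynamic scheme to the smallest boundary $s_{j_{k'}(n_{\ell'})}$ at or above it, and since $\{j_1(n_\ell), \dots, j_{1/\eps}(n_\ell)\}$ lies in the set of boundaries, the rounded size of every item in $G_k(n_\ell)$ is at most $s_{j_k(n_\ell)}$. By the size-sorted construction this is at most the original size of every item in $G_{k+1}(n_\ell)$; combined with $|G_{k+1}(n_\ell)| \geq |G_k(n_\ell)|$, one can pack the rounded items of $G_k(n_\ell)$ into the knapsack slots that $\opt$ used for items of $G_{k+1}(n_\ell)$, yielding a feasible packing $\opt_\types$ of item types.

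Multiplying the four factors gives the claimed bound $v(\opt_\types) \geq \frac{(1-\eps)(1-2\eps)}{(1+\eps)^2} v(\opt)$. The main technical point is the last step: I must verify that performing linear grouping simultaneously for all $\OO(\log_{1+\eps} n')$ guesses of $n_\ell$ does not interfere with the standard shift-by-one-group argument. This is settled by the observation above that the dynamic boundary set is a superset of the classical boundaries for the correct guess, so a finer set of rounding boundaries can only decrease rounded sizes and the classical feasibility of the shift is preserved.
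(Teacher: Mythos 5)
Your proposal follows essentially the same route as the paper's proof: the same four-factor decomposition (value rounding via \cref{lem:Round}, truncation at $\llmin$, rounding the per-class cardinality to a power of $(1+\eps)$, and the drop-one-group-and-shift argument), and the same key observation that the dynamic boundary set refines the classical linear-grouping boundaries for the correct guess, so the shift of $G_k(n_\ell)$ into the slots of $G_{k+1}(n_\ell)$ remains feasible.

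The one place where your write-up, taken literally, does not give the claimed factor is the fall-back case $n_\ell^* < \frac1\eps$: there you say you ``discard only one out of $\frac1\eps$ items,'' but the class may contribute far fewer than $\frac1\eps$ items to $\opt$ (possibly a single one), in which case discarding one item costs much more than a $2\eps$-fraction of that class's value. The correct handling — and the one the paper uses — is that in this case no discard or shift is needed at all: the $\frac1\eps$ smallest items of $V_\ell'$ are themselves boundaries of the dynamic grouping, so their rounded sizes equal their original sizes and $\opt$'s packing of these items is feasible verbatim for the item types, incurring zero loss in this class. This fix is already implicit in your own refinement observation, so the slip is local and does not affect the rest of the argument.
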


The loss in the objective function due to rounding item values to natural powers of~$(1+\eps)$ is bounded by a factor of~$(1-\eps)$ by \cref{lem:Round}. As already pointed out, the analysis of the approximation ratio consists of three steps. In \cref{lem:hg:GuessLMax} we show that the loss in the objective function value when restricting the items in~$J'$ to the value classes with~$\llmin \leq \ell \leq \lmax$ is bounded by a factor of~$(1-\eps)$. 
If an optimal solution contains~$n_\ell$ items of~$V_\ell'$, it is feasible to pack the~$n_\ell$ \emph{smallest} such items. 
Then, \cref{lem:hg:GuessNL} shows that we do not need~$n_\ell$ exactly but it suffices to guess~$n_\ell$ up to a factor of~$(1+\eps)$. Finally, in \cref{lem:hg:harmonicGrouping}, we argue that using the introduced  {oblivious} linear grouping approach costs at most a factor~$(1-2\eps)$. In \cref{lem:hg:NoOfTypes}, we show that the number of item types within one value class is reduced to~$\OO(\frac{\log n'}{\eps^2})$. In \cref{lem:hg:RunningTime}, we bound the running time of the algorithm.

Let~$\mcp_{1}$ be the set of solutions that (i) may use all items in~$J''$ and (ii) uses items in~$J'$ only of the value classes~$V_\ell$ with~$\llmin \leq \ell \leq \lmax$. Let~$\opt_{1}$ be an optimal solution in~$\mcp_{1}$. The following lemma bounds the value of $\opt_{1}$ in terms of \opt.

\begin{lemma}\label{lem:hg:GuessLMax}
	Let $\opt_{1}$ be defined as above. Then,~$v(\opt_{1}) \geq (1-\eps) v(\opt)$. 
\end{lemma}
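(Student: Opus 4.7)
The plan is to construct an explicit feasible solution in $\mcp_1$ by modifying $\opt$ slightly, and to show that the value lost in this modification is at most an $\eps$-fraction of $v(\opt)$.

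First, I would define $\opt^\star$ to be the solution obtained from $\opt$ by removing every item $j \in J'$ whose value class $V_{\ell_j}$ satisfies $\ell_j < \llmin$. Since we only remove items, $\opt^\star$ remains feasible. By construction, $\opt^\star$ uses items of $J'$ only from value classes $V_\ell$ with $\llmin \leq \ell \leq \lmax$ (the upper bound holding because $\lmax$ is the highest value class of $J'$ intersecting $\opt$), hence $\opt^\star \in \mcp_1$ and $v(\opt_1) \geq v(\opt^\star)$.

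Next, I would bound the value lost. By assumption $|\opt \cap J'| \leq n'$, so at most $n'$ items are removed, and each removed item has value at most $(1+\eps)^{\llmin - 1} \leq (1+\eps)^{\llmin}$. Hence
\begin{equation*}
v(\opt) - v(\opt^\star) \leq n' \cdot (1+\eps)^{\llmin}.
\end{equation*}
Using the definition $\llmin = \lmax - \lceil \log(n'/\eps)/\log(1+\eps) \rceil$, we get $(1+\eps)^{\lmax - \llmin} \geq n'/\eps$, equivalently $n' \cdot (1+\eps)^{\llmin} \leq \eps \cdot (1+\eps)^{\lmax}$. Since $\lmax$ is the highest value class of an item in $\opt \cap J'$, the solution $\opt$ must contain such an item, so $v(\opt) \geq (1+\eps)^{\lmax}$. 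Combining these two inequalities yields $v(\opt) - v(\opt^\star) \leq \eps \cdot v(\opt)$, and therefore $v(\opt_1) \geq v(\opt^\star) \geq (1-\eps) v(\opt)$.

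I do not expect a real obstacle here: the argument is a routine ``tail-trimming'' estimate. The only subtle point is remembering that $v(\opt) \geq (1+\eps)^{\lmax}$ relies on the definition of $\lmax$ as the highest value class \emph{hit by $\opt$}, not merely the highest value class present in $J'$; this is exactly why guessing $\lmax$ correctly (rather than taking a conservative upper bound) is important for the approximation guarantee.
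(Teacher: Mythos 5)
Your proposal is correct and follows essentially the same argument as the paper: both construct the restricted solution by dropping items of $J'$ in value classes below $\llmin$, bound the loss by $n'\,(1+\eps)^{\llmin} \leq \eps\,(1+\eps)^{\lmax} \leq \eps\, v(\opt)$ using the definition of $\llmin$ and the fact that $\opt$ contains an item of class $\lmax$, and conclude $v(\opt_1) \geq (1-\eps)\,v(\opt)$. No substantive difference from the paper's proof.
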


\begin{proof}
	Given~$\lmax$, it follows that $v(\opt) \geq (1+\eps)^{\lmax}$. As~$n'$ is an upper bound on the cardinality of $\opt'$, the items in the value classes with~$\ell  < \llmin$ contribute at most~$n'-1$ items to $\opt'$ while the value of one item is bounded by~$(1+\eps)^{\llmin}$. Thus, the total value of items in~$V_0, \ldots, V_{\llmin}$ contributing to~$\opt'$ is bounded by 
	\begin{equation*}
		n' (1+\eps)^{\llmin} =  n' (1+\eps)^{\lmax - \left\lceil \frac{\log n'/\eps}{\log(1+\eps)} \right\rceil } \leq \eps (1+\eps)^{\lmax} \leq \eps v(\opt).
	\end{equation*}

	Let~$J_{1}$ be the items in~$\opt'$ restricted to the value classes with~$\llmin \leq \ell \leq \lmax$. Clearly,~$J_{1}$ and~$\opt''$ can be feasibly packed. Hence,
	\[
	v(\opt_{1}) \geq v(J_{1}) + v(\opt'') \geq v(\opt') - \eps v(\opt) + v(\opt'') \geq (1-\eps) v(\opt). 
	\]
\end{proof}

From now on, we only consider packings in~$\mcp_1$, i.e., we restrict to the value classes~$V_\ell$ with $\llmin \leq \ell \leq \lmax$ for the items in~$J'$. Let~$V_\ell$ be a value class contributing to $\opt_{1}'$. As explained above, knowing~$n_\ell = |V_\ell \cap \opt_{1}'|$ would be sufficient to determine the items of~$V_\ell'$ contributing to $\opt_{1}$, i.e., to determine~$V_\ell' \cap \opt_{1}$. In the following lemma we show that we can additionally assume that~$n_\ell = (1+\eps)^{k_\ell}$ for some~$k_\ell \in \N_0$. To this end, let~$\mcp_2$ contain all the packings in~$\mcp_1$ where the number of big items of each value class~$V_\ell$ is a natural power of~$(1+\eps)$. Let~$\opt_2$ be an optimal packing in~$\mcp_2$. 

\begin{lemma}\label{lem:hg:GuessNL}
	$v(\opt_{2}) \geq \frac1{(1+\eps)} v(\opt_{1})$. 
\end{lemma}

\begin{proof}
	Consider~$\opt_1$, the optimal packing in~$\mcp_1$. 
	We set $\opt_{1}' := \opt_1 \cap J'$ and~$\opt_{1}'' := \opt_1 \setminus J' = \opt_1 \cap J''$.
	We construct a feasible packing in~$\mcp_2$ that achieves the desired value of $\frac1{(1+\eps)} v(\opt_{1})$.
	
	Let~$J_{2}$ be the subset of~$\opt_{1}'$ where each value class~$V_\ell'$ is restricted to the smallest~$(1+\eps)^{\lfloor \log_{1+\eps} n_\ell\rfloor}$ items in~$V_\ell'$ if~$V_\ell \cap \opt_{1}' \neq \emptyset$. 
	
	Fix one value class~$V_\ell$ with~$V_\ell \cap \opt_{1}' \neq \emptyset$. Restricting to the first~$(1+\eps)^{\lfloor \log_{1+\eps} n_\ell\rfloor}$ items in~$V_\ell \cap \opt_{1}'$ implies  
	\begin{equation*}		
		v(V_\ell \cap J_{2}) 	\geq (1+\eps)^{\lfloor \log_{1+\eps} n_\ell\rfloor} (1+\eps)^\ell \geq \frac{1}{1+\eps} (1+\eps)^{\log_{1+\eps} n_\ell} (1+\eps)^\ell = \frac{1}{1+\eps} v(V_\ell \cap \opt_{1}').
	\end{equation*}
	Clearly,~$J_{2}\cup\opt_{1}''$ is a feasible packing in~$\mcp_2$. Since~$v(\opt_{1}') = \sum_{\ell = \llmin}^{\lmax} v(V_\ell \cap \opt_{1}') $,
	\[
	v(\opt_{2}) \geq v(J_2) + v (\opt_{1}'') \geq \frac{1}{1+\eps} v(\opt_{1}') + v(\opt_{1}'') \geq \frac1{1+\eps} v(\opt_1). 
	\]
\end{proof}

From now on, we only consider packings in~$\mcp_2$. This means, we restrict the items in $J'$ to value classes~$V_\ell'$ with~$\llmin \leq \ell \leq \lmax$ and assume that~$n_\ell = (1+\eps)^{k_\ell}$ for~$n_\ell \in \N_0$ or~$n_\ell = 0$. Even with $n_\ell$ being of the form $(1+\eps)^{k_\ell}$, guessing the exponent for each value class $V_\ell'$ independently is intractable in time polynomial in~$\log n$ and~$\frac1\eps$. To resolve this, the  {oblivious} linear grouping creates groups that take into account all possible guesses of $n_\ell$. This rounding is done for each value class individually and results in item types~$\types_\ell$ for the set~$V_\ell'$. Let~$\mcp_{\types}$ be the set of all feasible packings of items in~$\types_\ell$ for $\llmin \leq \ell \leq \lmax$ and any subset of items in~$J''$. 
That is, instead of the original items in $J'$ the packings in $\mcp_{\types}$ pack the corresponding item types.
Note that packings in~$\mcp_{\types}$ are not forced to pack natural powers of~$(1+\eps)$ many items per value class. Let~$\opt_{\types}$ be the optimal solution in~$\mcp_{\types}$. 
The next lemma shows that~$v(\opt_{\types})$ is at most a factor~$(1-2\eps)$ less than $v(\opt_{2})$, the optimal solution in~$\mcp_2$. 

\begin{lemma}\label{lem:hg:harmonicGrouping}
	$v(\opt_{{\types}}) \geq (1-2\eps) v(\opt_{2})$. 
\end{lemma}

\begin{proof}
	We construct a feasible packing~$J_3$ in~$\mcp_{\types}$ based on the optimal packing~$\opt_2$. Let~$\opt_2':= J'\cap \opt_2$ and~$\opt_2'' := J'' \cap \opt$. We let~$J_3 '' :=\opt_2''$ be the items of~$J''$ in our new packing~$J_3$. These items will be packed exactly where they are packed in~$\opt_2$. For items in~$J'$, we consider each value class~$V_\ell\cap \opt_2'$ individually and carefully construct the set~$J_{3,l}$, the items of~$V_\ell'$ contributing to~$J_3$. Then, we show that the items in~$J_{\ell,3}$ can be packed into the knapsacks where the items in~$V_\ell\cap \opt_2'$ are placed while ensuring that~$v(J_{\ell,3}) \geq (1-2\eps) v(V_\ell \cap \opt_2')$. 
	
	If~$V_\ell \cap \opt_{2}' = \emptyset$, we set~$J_{\ell,3} = \emptyset$. Then, both requirements are trivially satisfied. 	
	
	If~$|V_\ell' \cap \opt_2'| \leq \frac1\eps$. Then, we set~$J_{\ell,3} := V_\ell' \cap \opt_2'$. Clearly,~$v(J_{\ell,3}) \geq (1-2\eps) v(V_\ell \cap \opt_2')$. 
	For packing~$J_{\ell,3}$, we observe that~$\types_\ell$ actually contains the smallest~$\frac1\eps$ items as item types.
	Hence, their sizes are not affected by the rounding procedure and whenever $\opt_{2}$ packs one of these items, we can pack the same item into the same knapsack. 
	
	Let~$\ell$ be a value class with~$n_\ell := |V_\ell' \cap \opt_2'| > \frac1\eps$. Let~$G_1(n_\ell),\ldots,G_{1/\eps}(n_\ell)$ be the corresponding~$\frac1\eps$ groups of~$\lfloor \eps n_\ell \rfloor$ or~$\lceil \eps n_\ell \rceil$ many items created by the (traditional) linear grouping for~$n_\ell$. We set~$J_{\ell,3} = G_1(n_\ell) \cup \ldots \cup G_{1/\eps -1}(n_\ell)$. As \(v(G_{1/\eps}(n_\ell)) = \lceil \eps n_\ell \rceil (1+\eps)^\ell \leq 2 \eps n_\ell (1+\eps)^\ell = 2\eps v_{\ell,2} \), we have~$v(J_{\ell,3}) \geq (1-2\eps) v(V_\ell'\cap \opt_2')$. For packing these items, we observe that the item types created by our algorithm are a refinement of~$G_1(n_\ell),\ldots,G_{1/\eps}(n_\ell)$. As the  {oblivious} linear grouping ensures~$|G_{1/\eps}(n_\ell)| \geq |G_{1/\eps -1 }(n_\ell)| \geq \ldots \geq |G_1(n_\ell)|$ and that the item sizes are increasing in the group index, we can pack the items of group~$G_k(n_\ell)$ where~$\opt_{2}$ packs the items of group~$G_{k+1}(n_\ell)$ for~$1 \leq k < \frac1\eps$. 	
	We conclude 
	\[
	v(\opt_{{\types}}) \geq v(J_3) + v(\opt_{2}'') \geq (1-2\eps) v(\opt_{2}') + v(\opt_{2}'') \geq (1-2\eps) v(\opt_2). 	
	\]	
\end{proof}

As~$\types$ contains at most~$\frac{1}{\eps} \big(\big \lceil \frac{\log n'/\eps}{\log (1+\eps)} \big\rceil +1 \big)$ many different value classes and using~$\big \lceil \frac{\log n'}{\log (1+\eps)}\big\rceil +1$ many different values for~$n_\ell = |\opt \cap V_\ell'|$ suffices as explained above, the next lemma follows.

\begin{lemma}\label{lem:hg:NoOfTypes}
	The algorithm reduces the number of item types to~$\OO(\frac{\log^2 n'}{\eps^4})$.
\end{lemma}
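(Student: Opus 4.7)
The plan is to bound $|\types|$ by counting the item types produced by the rounding step of the algorithm. An item type is determined by a (rounded value, rounded size) pair, since the algorithm first rounds values into the value classes $V_\ell$ and then rounds sizes to the discrete set of ``boundary'' sizes $s_{j_k(n_\ell)}$ obtained from the simultaneous linear groupings. Therefore I will establish the bound by proving
\begin{equation*}
|\types| \;\le\; \bigl(\text{\# relevant value classes}\bigr) \cdot \bigl(\text{\# distinct rounded sizes produced within one value class}\bigr),
\end{equation*}
and then combining the two factors.

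First I would count the relevant value classes. By Step~3 of the algorithm, every item surviving in $J'$ lies in some $V_\ell$ with $\llmin \le \ell \le \lmax$. Since $\llmin = \lmax - \lceil \log(n'/\eps)/\log(1+\eps)\rceil$, the number of such classes is at most $\lceil \log(n'/\eps)/\log(1+\eps)\rceil + 1$, which using $\log(1+\eps) = \Theta(\eps)$ gives the bound $\OO\bigl(\log(n'/\eps)/\eps\bigr)$.

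Next I would count the distinct rounded sizes produced in a single value class $V_\ell'$. The algorithm iterates over the guesses $n_\ell \in \{(1+\eps)^l : 0 \le l \le \lfloor\log_{1+\eps} n'\rfloor\}$ and, if necessary, over the extra guess $n_\ell = 1/\eps$. This amounts to $\OO(\log n'/\eps)$ guesses. For each guess, at most $1/\eps$ boundary indices $j_k(n_\ell)$ are produced, one per group $G_k(n_\ell)$. Every item in $V_\ell'$ is rounded to the size of one of these boundary items. Hence the number of distinct rounded sizes in a single value class is at most $\OO(\log n'/\eps^2)$.

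Multiplying these two bounds yields
\begin{equation*}
|\types| \;=\; \OO\!\left(\frac{\log(n'/\eps)}{\eps}\right) \cdot \OO\!\left(\frac{\log n'}{\eps^2}\right) \;=\; \OO\!\left(\frac{\log^2 n'}{\eps^4}\right),
\end{equation*}
where I absorb the $\log(1/\eps)$ contribution of $\log(n'/\eps)$ into the stated bound (using a standard assumption $\eps \ge 1/\poly(n')$). No step looks like a genuine obstacle: the whole proof is a bookkeeping exercise, and the only place that needs a moment of care is making sure the extra guess $n_\ell = 1/\eps$ and the rounding of $|G_k(n_\ell)|$ to $\lceil \eps n_\ell\rceil$ or $\lfloor \eps n_\ell\rfloor$ contribute only an additive $\OO(1/\eps)$ per value class, which does not change the asymptotic bound.
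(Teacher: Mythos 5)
Your counting---(number of value classes between $\llmin$ and $\lmax$) $\times$ (number of guesses of $n_\ell$) $\times$ ($\frac1\eps$ group boundaries per guess)---is exactly the argument the paper uses for this lemma, and it is correct. One small remark: the assumption $\eps \ge 1/\poly(n')$ is not needed, since $\log(1/\eps)\le \frac1\eps$ already lets the extra $\frac1\eps$ factor in the stated bound $\OO\big(\frac{\log^2 n'}{\eps^4}\big)$ absorb the $\log(1/\eps)$ contribution of $\log(n'/\eps)$.
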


Next, we formally prove the bound on the running time, i.e., the following lemma.

\begin{restatable}[]{lemma}{hgRunningTime}\label{lem:hg:RunningTime}
	For a given guess~$\lmax$, the set~$\types$ can be determined in time~$\OO(\frac{\log^4 n'}{\eps^4})$.
\end{restatable}

\begin{proof}
	Remember that~$n'$ is an upper bound on the number of items in~$J'$ in any feasible solution. Observe that the boundaries of the linear grouping created by the algorithm per value class are actually independent of the value class and only refer to some~$k$th item in class~$V_\ell$. Hence, the algorithm first computes the different indices needed in this round. We denote the set of these indices by~$I' = \{ j_1, \ldots \}$ sorted in an increasing manner. There are at most~$\lfloor \log_{1+\eps} n' \rfloor$ many possibilities for~$n_\ell$. Thus, the algorithm needs to compute at most~$\frac1\eps (\log_{1+\eps} n' + 1)$ many different indices. This means that these indices can be computed and stored in time~$\OO(\frac{\log n'}{\eps^2} )$ while each index is bounded by~$n$. 
	
	Given the guess~$\lmax$ and~$\llmin$, fix a value class~$V_\ell$ with~$\llmin \leq l \leq \lmax$. We want to bound the time the algorithm needs to transform the big items in~$V_\ell$ into the modified item set~$T_\ell$. We will ensure that the dynamic algorithms in the following sections maintain a balanced binary search for each value class~$V_\ell$ that stores the items in~$J'$ sorted by increasing size. Hence, the sizes of the items corresponding to~$J'$ can be accessed in time~$\OO(\frac{\log^3 n'}{\eps^2})$. These sizes correspond to the item size~$s_t$ for an item type~$t \in \types_\ell$. 
	Given an item type~$t \in \types_\ell$,~$n_t = j_t - j_{t-1}$, which can again be pre-computed independently of the value class. Thus,~$\types_\ell$ can be computed in time~$\OO(\frac{\log^3 n'}{\eps^2})$. 
	
	As there are~$\OO (\frac{\log n'}{\eps^2})$ many value classes that need to be considered for a given guess~$\lmax$, calculating the set~$\types^{(\lmax)}$ needs~$\OO(\frac{\log^4 n'}{\eps^4})$ many computation steps.	
\end{proof}

\begin{proof}[Proof of \cref{theo:harmonic}]
	\cref{lem:Round,lem:hg:apxratio} %
	bound the approximation ratio, \cref{lem:hg:NoOfTypes} bounds the number of item types, and \cref{lem:hg:RunningTime} bounds the running time of %
	{oblivious} linear grouping. %
\end{proof}

\section{Proofs for Identical Knapsacks (Section \ref{sec:mik})}
\label{app:multiknapsack}

In this section, we give the technical details of \cref{sec:mik}. The first step is to analyze the loss in the objective function value due to the linear grouping. Set~$J' = J_B$ and~$n'= \min\{ \frac m \eps, n\}$. 
Moreover, let~$\opt_{\types}$ be the optimal packing when using the corresponding item types $\types$ obtained from applying  {oblivious} linear rounding instead of the items in~$J_B$. 
Then, the next corollary immediately follows from  \cref{theo:harmonic}.

\begin{corollary}\label{cor:mik:harmonic}
	Let~$\opt$ and~$\opt_{\types}$ be defined as above. Then,~$v(\opt_{\types}) \geq \frac{(1-\eps)(1-2\eps)}{(1+\eps)^2}v(\opt)$. 
\end{corollary}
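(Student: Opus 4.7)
The plan is to apply \Cref{theo:harmonic} directly with $J' = J_B$ and $n' = \min\{\tfrac{m}{\eps}, n\}$. The theorem then immediately yields the desired inequality, so all that remains is to verify that the hypothesis $|\opt \cap J'| \leq n'$ is satisfied for every optimal solution $\opt$.

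For the first bound, recall that every big item $j \in J_B$ has size $s_j \geq \eps \capa$ by definition, while the total capacity available in the $m$ identical knapsacks is $m\capa$. Hence the number of big items that any feasible (and in particular optimal) solution can pack is at most $\frac{m\capa}{\eps \capa} = \frac{m}{\eps}$. The second bound $|\opt \cap J_B| \leq n$ is trivial since $J_B \subseteq J$ and $|J| = n$. Combining the two we obtain $|\opt \cap J_B| \leq \min\{\tfrac{m}{\eps}, n\} = n'$, as required.

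With the hypothesis verified, \Cref{theo:harmonic} applied to the set of big items gives
\[
v(\opt_{\types}) \;\geq\; \frac{(1-\eps)(1-2\eps)}{(1+\eps)^2}\, v(\opt),
\]
where $\opt_{\types}$ denotes, as defined in the theorem, the optimal solution obtained by packing the item types $\types$ in place of the items of $J_B$, while the items of $J \setminus J_B = J_S$ remain untouched. This is exactly the statement of the corollary. The only subtlety worth flagging is the dependence of $n'$ on the instance: since $n'$ enters only through the bound on $|\opt \cap J'|$, any valid upper bound works, so choosing the tighter $\min\{m/\eps, n\}$ is safe and will be useful later to control the number of item types through \Cref{lem:hg:NoOfTypes}. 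No genuine obstacle arises, as the corollary is essentially a specialization of the general dynamic linear grouping theorem to the big-item subset in the identical-knapsack setting.
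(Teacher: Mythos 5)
Your proposal is correct and matches the paper's argument: the paper likewise obtains \Cref{cor:mik:harmonic} as an immediate application of \Cref{theo:harmonic} with $J' = J_B$ and $n' = \min\{\frac m\eps, n\}$. Your explicit verification that $|\opt \cap J_B| \leq \frac m\eps$ via the capacity argument (each big item has size at least $\eps\capa$ against total capacity $m\capa$) is exactly the step the paper leaves implicit.
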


In Figure~\ref{fig:mik:solution} we give a feasible solution to the configuration ILP. 
In the next lemma, we show that there is a guess~$v_S$ with the corresponding size~$s_S$ such that~$v_{\text{ILP}}^*+v_S+v_{j^\star}$ for the optimal solution value~$v_{\text{ILP}}^*$ of \eqref{eq:mik:ilp} is a good guess for the optimal solution~$v(\opt_{\types})$. Here,~$j^\star$ is the densest small item not contained in~$P$ while~$P$ is the maximal prefix of small items with~$v(P) < v_S$. The high-level idea of the proof is to restrict an optimal solution~$\opt_{\types}$ to the~$\lfloor (1-3\eps)m \rfloor $ most valuable knapsacks and show that~$s_S$ underestimates the size of small items in these~$\lfloor (1-3\eps)m \rfloor$ knapsacks. Interpreting these knapsacks as configurations gives a feasible solution for the configuration ILP. 

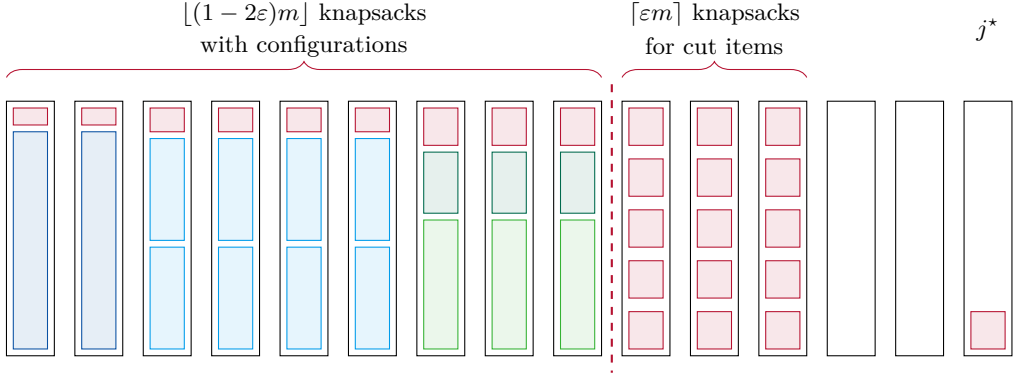
\begin{figure}[tbh]
	\centering 
	\begin{tikzpicture}[scale=.9]
		\foreach \x in {0,...,14}{
			\draw[draw=black, fill=none] (\x-.1,0) rectangle (\x+.6, 3.75); 
		}

		\draw[ubred, dashed, thick] (8.75,4) -- (8.75, -.25);
		\foreach \x in {9,...,11}{
			\foreach \y in {0,...,4}
			\draw[draw=ubred, fill = ubred!10] (\x, 0.1+\y*.75) rectangle (\x+.5, 0.65 + \y*.75);
		}
		\draw[draw=ubred, fill = ubred!10] (14, 0.1) rectangle (14.5, 0.65);

		\foreach \x in {0,1}{
			\draw[draw=utnavy, fill = utnavy!10] (\x, 0.1) rectangle (\x+.5, 3.3);
			\draw[draw=ubred, fill = ubred!10] (\x, 3.4) rectangle (\x+.5, 3.65);
		}
		\foreach \x in {2,...,5}{
			\draw[draw=utblue, fill = utblue!10] (\x, 0.1) rectangle (\x+.5, 1.6);
			\draw[draw=utblue, fill = utblue!10] (\x, 1.7) rectangle (\x+.5, 3.2);
			\draw[draw=ubred, fill = ubred!10] (\x, 3.3) rectangle (\x+.5, 3.65);
		}
		\foreach \x in {6,...,8}{
			\draw[draw=utgreen, fill = utgreen!10] (\x, 0.1) rectangle (\x+.5, 2);
			\draw[draw=utforest, fill = utforest!10] (\x, 2.1) rectangle (\x+.5, 3);
			\draw[draw=ubred, fill = ubred!10] (\x, 3.1) rectangle (\x+.5, 3.65);
		}
		
		\draw [decorate,decoration={brace,amplitude=6pt},ubred]  (8.9, 4.1) -- (11.6, 4.1) node [black,midway,yshift =  24pt,font=\small, black] {$\lceil \eps m \rceil$  knapsacks} ;
		\draw [decorate,decoration={brace,amplitude=6pt},ubred]  (-.1, 4.1) -- (8.6, 4.1) node [black,midway,yshift =  24pt,font=\small, black] {$\lfloor (1-2\eps) m \rfloor$ knapsacks};
		\node[yshift = 12pt, font = \small] at (4.25,4.1) {with configurations}; 
		\node[yshift = 18pt, font = \small] at (14.25, 4.1) {$j^\star$};
		\node[yshift = 12pt, font = \small] at (10.25,4.1) {for cut items};
	\end{tikzpicture}
	\caption{A possible solution: Blue and green rectangles represent the packed big item types. Red rectangles on the left side represent the space left empty by the configurations and on the right represent the slots for cut small items reserved in~$\lceil \eps m \rceil$ knapsacks while the densest small item, $j^\star$, not in~$P$ gets its own knapsack.} \label{fig:mik:solution}
\end{figure}

\begin{restatable}{lemma}{mikOPTofConfigILP}\label{lem:mik:OPTofConfigILP}
	There is a guess~$v_S$ with \(v_{\text{ILP}} + v_S \geq \frac{1-4\eps}{1+\eps} v(\opt_{{\types}}).\) Moreover,~$v(P_S) + v_{j^\star} \geq v_S$. 
\end{restatable}

\begin{proof}
	Let $\opt_{B,\types} := \opt_{\types} \cap J_{B}$ and $\opt_{S,\types} := \opt_{\types} \cap J_{S}$.
	We construct a candidate set~$J_{\text{ILP}}$ of items that are feasible for \eqref{eq:mik:ilp} and obtain a value of at least~$(1-4\eps) v(\opt_{B,{\types}})$. To this end, take an optimal packing~$\opt_{\types}$ and consider the~$\lfloor(1-3\eps)m\rfloor$ most valuable knapsacks in this packing. Let~$J_{B,{\types}}$ and~$J_{S,{\types}}$ consist of the big and small items, respectively, in these knapsacks. Since~$m \geq \frac{16}{\eps^7}\log^2 n \geq \frac1\eps$, we have~$\lfloor(1-3\eps)m\rfloor \geq (1-4\eps)m$. Hence, $$v(J_{B,{\types}}) + v(J_{S,{\types}}) \geq (1-4\eps) v(\opt_{\types})\,.$$ 
	
	Create the variable values~$y_c$ corresponding to the number of times configuration~$c$ is used by the items in~$J_{B,{\types}}$. We observe that~$J_{B,{\types}} \cup J_{S,{\types}}$ can be feasibly packed into~$\lfloor(1-3\eps)m\rfloor$ knapsacks. Therefore, 
	\(
	\sum_{c \in \configs} y_c \leq \lfloor (1-3\eps) m \rfloor \, ,
	\)
	and 
	\(
	\sum_{c \in \configs} y_c s_c + s(J_{S,\types}) \leq \lfloor (1-3\eps) m \rfloor \capa \, .
	\)
	
	Since we guess the value of the small items in the dynamic algorithm up to a factor of~$(1+\eps)$, there is one guess~$v_S$ satisfying~$v_S \leq v(J_{S,{\types}}) < (1+\eps) v_S$. 		
	Let $P_S$ be the maximal prefix of small items with~$v(P_S) < v_S$ and let $j^\star$ be the densest small item not in~$P_S$. Hence, \(v(P_S) + v_{j^\star} \geq v_S \geq \frac{1}{1+\eps} v(J_{S,{\types}}).\)
	As~$P_S$ contains the densest small items, this implies~$s_S : = s(P_S) \leq s(J_{S,{\types}})$. 
	Thus, 
	\[
	\sum_{c \in \configs} y_c s_c \leq \lfloor (1-3\eps)m \rfloor \capa - s(J_{S,{\types}}) \leq \lfloor (1-3\eps)m\rfloor \capa - s_S. 
	\]		
	Therefore, the just created~$y_c$ are feasible for the ILP with the guess~$v_S$, and 
	\begin{equation*}
		v_{\text{ILP}} + v_S \geq v(J_{B,{\types}}) + \frac{1}{1+\eps}  v(J_{S,{\types}})  \geq  \frac{1}{1+\eps} (v(J_{B,{\types}}) + v(J_{S,{\types}})  )   \geq \frac{1-4\eps}{1+\eps} v(\opt_{\types}),
	\end{equation*}
	which concludes the proof.
\end{proof}

\paragraph*{Solving the configuration LP} In this part, we provide the full proof of our approach to solving the LP relaxation of the configuration ILP when~$m$ satisfies~~$\frac{16}{\eps^7} \log^2n \leq m$. 
More specifically, we prove the following lemma. 
\begin{restatable}{lemma}{mikSolveConfigLP}\label{lem:mik:SolveConfigLP}
	Let~$U = \max\{ \capa m, n \vvmax\}$. %
	There is an algorithm that finds a feasible solution for the LP relaxation of \eqref{eq:mik:ilp} with value at least~$\frac{1-\eps}{1+\eps}v_{\text{LP}}$ with running time~$\left(\frac{\log U}{\eps}\right)^{\smash{\OO(1)}}$.
\end{restatable}

\noindent In the following, we abuse notation and also refer to the LP relaxation of \eqref{eq:mik:ilp} by \eqref{eq:mik:ilp}: 

\begin{equation}\tag{P}\label{eq:mik:lp}
	\begin{array}{llcll}
		\max & \displaystyle{\sum_{c \in \configs} y_c v_c }\\
		\text{subject to } & \displaystyle{\sum_{c \in \configs} y_c s_c} & \leq & \lfloor (1-3\eps) m \rfloor \capa- s_S \\
		& \displaystyle{\sum_{c \in \configs} y_c} & \leq & \lfloor (1-3\eps) m \rfloor\\
		& \displaystyle{\sum_{c \in \configs} y_c n_{tc}} & \leq & n_t & \text{for all } t \in \types^{(l_{\max})} \\
		& y_c & \in & {\geq 0} & \text{for all } c \in \configs
	\end{array}
\end{equation}

Let~$\gamma$ and~$\beta$ be the dual variables of the capacity constraint and the number of knapsacks constraint, respectively. We set~$\types:=\types^{(\lmax)}$ for simplicity. Let~$\alpha_t$ for~$t \in \types$ be the dual variables of the constraint ensuring that only~$n_t$ items of type~$t$ are packed. Then, the dual is given by the following linear program. 
\begin{equation}\tag{D}\label{eq:mik:dual:app}
	\begin{array}{llcll}
		\min & \displaystyle{\lfloor(1-3\eps)m\rfloor\beta + (\lfloor (1-3\eps)m\rfloor \capa - s_S)\gamma + \sum_{t \in \types} n_t \alpha_t}\\
		\text{subject to } & \beta + s_c \gamma + \displaystyle{\sum_{t \in T} \alpha_t n_{tc}} & \geq & v_c & \text{for all } c \in \configs \\
		& \alpha_t & \geq & 0 & \text{for all } t \in \types \\
		& \beta, \gamma & \geq & 0 .
	\end{array}
\end{equation} 

As discussed above, for applying the Ellipsoid method we need to solve the separation problem efficiently. The separation problem decides if the current solution~$(\alpha^*, \beta^*, \gamma^*)$ is feasible or finds a violated constraint. As verifying the first constraint of~\eqref{eq:mik:dual:app} corresponds to solving a \knapsack problem, we do not expect to optimally solve the separation problem in time polynomial in~$\log n$ and~$\frac{1}{\eps}$. Instead, we apply a dynamic program (DP) for the single knapsack problem after restricting the item set further and rounding the item values as follows.

Let~$\bar{v}_t := v_t - \alpha_t^* - \gamma^* s_t$ for~$t \in \types$. If there exists an item type with~$\bar{v}_t > \beta^*$, we return the configuration using only this item. Otherwise, we define~$\tilde{v}_t := \big\lfloor \frac{\bar{v}_t}{\eps^4 \beta^*} \big\rfloor \cdot \eps^4 \beta^*$. 
By running the dynamic program for the \textsc{Knapsack} problem on the item set~$\types$ with multiplicities~$\min\{\frac1\eps, n_t\}$ and values~$\tilde{v}_t$, we obtain a solution~$x^*$ where~$x_t^*$ indicates how often item type~$t$ is packed. If~$\sum_{t \in \types} x^*_t \tilde{v}_t > \beta^*$, we return the configuration defined by~$x^*$ as separating hyperplane. Otherwise, we return \textsc{declared feasible} for the current solution. 

The next lemma shows that this algorithm approximately solves the separation problem by either correctly declaring infeasibility or by finding a solution that is almost feasible for~\eqref{eq:mik:dual:app}. The slight infeasibility for the dual problem translates to a slight decrease in the optimal objective function value of the primal problem. In the proof we use that~$x^*$ is optimal for the rounded values~$\tilde{v}_t$ to show that~$(\alpha^*, \beta^*, \gamma^*)$ is almost feasible if~$\sum_{t \in \types} x^*_t \tilde{v}_t \leq \beta^*$. Noticing that~$\bar{v}_t \geq \tilde{v}_t$ then concludes the proof. 

\begin{restatable}{lemma}{mikSeparation}\label{lem:mik:separation}
	Given~$(\alpha^*, \beta^*, \gamma^*)$, there is an algorithm with running time~$\OO\left( \frac{\log^2 n}{\eps^{10}}\right)$  which either finds a configuration~$c \in \configs$ such that $\beta^* + s_c \gamma^* + \sum_{t \in \types} \alpha_t^* n_{tc} < v_c$ or guarantees that $\beta^* + s_c \gamma^* + \sum_{t \in \types} \alpha_t^* n_{tc} \geq (1-\eps)v_c$ holds for all~$c \in \configs$. 
\end{restatable}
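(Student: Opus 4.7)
The plan is to rewrite the separation condition so that it becomes a cardinality-constrained knapsack problem on the item types, which I solve approximately by a dynamic program after a value discretization. Rearranging the first dual constraint for a configuration $c$ gives
\[
\beta^* + s_c\gamma^* + \sum_{t\in\types}\alpha_t^* n_{tc} - v_c \;=\; \beta^* - \sum_{t\in\types} n_{tc}\,\bar v_t,\qquad \bar v_t := v_t - \alpha_t^* - \gamma^* s_t,
\]
so finding a maximally violated constraint reduces to maximizing $\sum_t n_{tc}\bar v_t$ over configurations, i.e., over nonnegative integer vectors with $\sum_t n_{tc} \leq 1/\eps$ and $n_{tc} \leq \min\{1/\eps, n_t\}$.

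I would first dispose of the degenerate case $\bar v_t > \beta^*$ for some type $t$: the singleton configuration consisting of one item of type $t$ is then strictly violated and is returned. Otherwise every $\bar v_t \leq \beta^*$; types with $\bar v_t \leq 0$ can be safely discarded as they cannot appear in any optimal configuration, and the remaining values are rounded down to $\tilde v_t := \lfloor \bar v_t/(\eps^4\beta^*)\rfloor \cdot \eps^4\beta^*$, which produces a grid of only $O(1/\eps^4)$ distinct values. Because a configuration has at most $1/\eps$ items, the total rounded value $\sum_t n_{tc}\tilde v_t$ is at most $\beta^*/\eps$, so it takes only $O(1/\eps^5)$ discrete values. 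I would then run a textbook knapsack-style dynamic program whose states are indexed by (current type, achieved discretized total value), storing the minimum cardinality reaching that state; iterating over the $|\types|=O(\log^2 n/\eps^4)$ types and for each state over up to $1/\eps$ multiplicities yields running time $O(|\types|/\eps^6) = O(\log^2 n/\eps^{10})$. If any DP state with cardinality $\leq 1/\eps$ and rounded total strictly exceeding $\beta^*$ exists, the corresponding configuration $x^*$ is returned; otherwise the algorithm declares the point feasible.

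For correctness, if the algorithm returns a configuration $c$, then $\sum_t n_{tc}\bar v_t \geq \sum_t n_{tc}\tilde v_t > \beta^*$ since $\tilde v_t\leq \bar v_t$, so $c$ strictly violates the dual constraint as required. If instead it declares feasibility, then by optimality of the DP for the rounded problem, $\sum_t n_{tc}\tilde v_t \leq \beta^*$ for \emph{every} valid configuration $c$. Each item contributes rounding error less than $\eps^4\beta^*$, and a configuration has at most $1/\eps$ items, so $\sum_t n_{tc}\bar v_t \leq \beta^* + \eps^3\beta^*$, i.e.\ $v_c - \beta^* - s_c\gamma^* - \sum_t\alpha_t^* n_{tc} \leq \eps^3\beta^*$. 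A short case distinction on $v_c$ versus $\beta^*$ yields the desired $\beta^* + s_c\gamma^* + \sum_t\alpha_t^* n_{tc} \geq (1-\eps)v_c$: if $v_c\geq \beta^*$ then $\eps^3\beta^*\leq \eps v_c$ directly, and if $v_c<\beta^*$ the inequality holds because the left-hand side is already at least $\beta^* > v_c$.

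The main subtlety I anticipate lies in calibrating the discretization scale: coarser rounding would speed up the DP but spoil the $(1-\eps)$-approximation, while finer rounding would inflate the state space beyond the target running time. Using $\eps^4\beta^*$ as the unit turns the per-item rounding error into a total error of at most $\eps^3\beta^*$, which is absorbed uniformly in $v_c$ by the case split above. Beyond this, the remaining work is routine: checking that the DP explores exactly the valid configurations (cardinality $\leq 1/\eps$ and per-type multiplicity $\leq \min\{1/\eps, n_t\}$) and that the $O(|\types|/\eps^6)$ table-update bound indeed fits into the promised $O(\log^2 n/\eps^{10})$ running time via the bound on $|\types|$ from \cref{lem:hg:NoOfTypes}.
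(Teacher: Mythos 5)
Your overall route is the same as the paper's: reduce separation to a knapsack problem over item types via $\bar v_t = v_t - \alpha_t^* - \gamma^* s_t$, handle the degenerate case $\bar v_t > \beta^*$ by a singleton configuration, round to multiples of $\eps^4\beta^*$, solve the rounded problem by a DP with multiplicities $\min\{1/\eps, n_t\}$, and compare the optimum to $\beta^*$. Your argument for the ``declared feasible'' direction is in fact cleaner than the paper's (the paper argues by contradiction with shifted values $(1-\eps)v_t$, while your direct bound $v_c - (\beta^* + s_c\gamma^* + \sum_t \alpha_t^* n_{tc}) \leq \eps^3\beta^*$ followed by the case split on $v_c$ versus $\beta^*$ works, using only $\alpha^*,\gamma^*\geq 0$ and $v_c\geq 0$, and non-negativity is checked before this oracle is invoked).

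There is, however, a genuine gap in your DP: you drop the knapsack capacity constraint. A configuration $c\in\configs$ is by definition a set of big items that \emph{fits into one knapsack}, i.e.\ $\sum_t n_{tc}s_t \leq \capa$; the paper's separation ILP (S) carries the constraint $\sum_t s_t x_t \leq \capa$ explicitly. Your state space (type, discretized value) with \emph{minimum cardinality} stored enforces only $\sum_t n_{tc}\leq 1/\eps$ and $n_{tc}\leq\min\{1/\eps,n_t\}$, which is necessary but not sufficient: two items of size $0.9\,\capa$ satisfy the cardinality bound but do not form a configuration. Consequently the multiset $x^*$ you return in the ``violated'' case need not correspond to any constraint of the dual, so it is not a legitimate separating hyperplane (the Ellipsoid method may then cut off feasible dual points), and it also cannot serve later as a primal variable when the restricted primal is solved and packed. (The ``feasible'' direction survives, since you only over-approximate the maximum.) The fix is the standard one the paper uses implicitly via the Ibarra--Kim DP: for each (type prefix, rounded value) store the \emph{minimum total size} rather than the cardinality, and accept only states with size at most $\capa$; the cardinality bound is then automatic because all types in $\types$ are big, and the running time accounting $\OO(|\types|/\eps^{6}) = \OO(\log^2 n/\eps^{10})$ is unaffected. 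With that correction your proof is complete.
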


\begin{proof}
	Fix a configuration~$c$ and recall that~$s_c = \sum_{t \in \types} n_{tc} s_t$ and~$v_c = \sum_{t \in \types} n_{tc} v_t$. Then, checking~$\beta^* + s_c \gamma^* + \sum_{t \in \types} \alpha_t^* n_{tc} \geq v_c$ for all configurations~$c \in \configs$ is equivalent to showing $\max_{c \in \configs} \sum_{t \in \types} (v_t - \alpha_t^* - \gamma^* s_t)n_{tc} \leq \beta^*$. This problem translates to solving the following ILP and comparing its objective function value to~$\beta^*$. 
	\begin{equation*}\tag{S}\label{eq:mik:separation}
		\begin{array}{llcll}
			\max & \displaystyle{\sum_{t \in \types} (v_t - \alpha_t^* - \gamma^* s_t) x_t }\\
			\text{s.t.} & \displaystyle{\sum_{t \in \types} s_t x_t} & \leq & \capa \\
			& x_t & \leq & n_t & \text{for all } t \in \types \\
			& x_t & \in & \mathbb{Z}_{\geq 0} 
		\end{array}
	\end{equation*}
	This ILP is itself a (single) \knapsack problem. Hence, the solution~$x^*$ found by the algorithm is indeed feasible for~\eqref{eq:mik:separation}. 	
	
	We start by bounding the running time of the algorithm. Recall that, for each $t \in \types$, $\bar{v}_t := v_t - \alpha_t^* - \gamma^* s_t$ and $\tilde{v}_t := \big\lfloor \frac{\bar{v}_t}{\eps^4 \beta^*} \big\rfloor \cdot \eps^4 \beta^*$. Observe that~$\types$ only contains big items. Hence, it suffices to consider~$\min \{n_t,\frac{1}{\eps}\}$ items per value class in the DP. It can be checked in time~$\OO(\frac{\log^2 n}{\eps^4})$, if~$\bar{v}_t \leq \beta^*$ is violated for one~$t \in \types$. Otherwise,~$\tilde{v}_t \leq \bar{v}_t$ and~$\bar{v}_t - \tilde{v}_t \leq \eps^4 \beta^*$ hold. Thus, the running time of the DP is bounded by~$\OO\left(\frac{|\types|^2}{\eps^6}\right) = \OO\left( \frac{\log^2 n}{\eps^{10}}\right)$ \cite{IbarraK75}. %
	
	It remains to show that the solution~$x^*$ either defines a configuration with~$\beta^* + s_c \gamma^* + \sum_{t \in \types} \alpha_t^* n_{tc} < v_c$ or ensures that $\beta^* + s_c \gamma^* + \sum_{t \in \types} \alpha_t^* n_{tc} \geq (1-\eps)v_c$ holds for all~$c \in \configs$. If~$\sum_{t \in \types} x^*_t \tilde{v}_t > \beta^*$, it holds
	\(
	\sum_{t \in \types} x_t^* \bar{v}_t \geq \sum_{t \in \types} x^*_t \tilde{v}_t > \beta^*
	\) 	
	and, thus,~$x^*$ defines a separating hyperplane. 
	
	Suppose now that $\sum_{t \in \types} x^*_t \tilde{v}_t \leq \beta^*$. We assume for the sake of contradiction that there is a configuration ~$c'$, defined by packing~$x_t$ items of type~$t$, such that 
	\(
	\sum_{t \in \types} x_t\big((1-\eps)v_t - \alpha_t^* - \gamma^* s_t \big) > \beta^*.
	\)
	As~$\types$ contains only big item types, we have that~$\sum_{t \in \types} x_t \leq \frac{1}{\eps}$. This implies that there exists at least one item type~$t'$ in~$\types$ with~$x_{t'}\geq 1$ and $(1-\eps)v_{t'} - \alpha_{t'}^* - \gamma^* s_{t'} \geq \eps \beta^*$. 
	Moreover, 
	\begin{equation*}
		\bar{v}_{t} = v_{t} - \alpha_{t}^* - \gamma^* s_{t} \geq (1-\eps)v_{t} - \alpha_{t}^* - \gamma^* s_{t} 
	\end{equation*} 
	holds for all item types~$t \in \types$. This implies for~$t'$ that~$\bar{v}_{t'} \geq \eps \beta^*$. 
	Hence, 
	\[
	\sum_{t \in \types} x_t \bar{v}_t \geq \eps x_{t'} \bar{v}_{t'} +  \sum_{t \in \types} x_t \big((1-\eps) v_t -	 \alpha_t^* - \gamma^* s_t\big) > \eps v_{t'} + \beta^*  \geq (1+\eps^2) \beta^*. 
	\] 
	By definition of~$\tilde{v}$, we have~$\bar{v}_t - \tilde{v}_T \leq \eps^4 \beta^*$ and~$\sum_{t\in \types} x_t (\bar{v}_t - \tilde{v}_t)  \leq \eps^3\beta^*$. This implies 
	\[
	\sum_{t \in \types} x_t \tilde{v}_t = \sum_{t \in \types} x_t \tilde{v}_t   + \sum_{t\in\types} x_t (\bar{v}_t - \tilde{v}_t)  > (1+\eps^2)\beta^* - \eps^3 \beta^* \geq \beta^*,
	\] 
	where the last inequality follows from~$\eps \leq 1$. By construction of the DP,~$x^*$ is the optimal solution for the values~$\tilde{v}$ and achieves a total value less than or equal to~$\beta^*$. Hence,
	\[
	\beta^* \geq  \sum_{t \in \types} x_t^*  \tilde{v}_t \geq \sum_{t \in \types} x_t \tilde{v}_t > \beta^*;
	\]
	a contradiction. 	
\end{proof}

We now present the proof of \Cref{lem:mik:SolveConfigLP}. 

\begin{proof}[Proof of \cref{lem:mik:SolveConfigLP}]
	As discussed above, the high-level idea is to solve~\eqref{eq:mik:dual:app}, the dual of~\eqref{eq:mik:ilp}, with the Ellipsoid Method and to consider only the variables corresponding to constraints added by the Ellipsoid Method for solving~\eqref{eq:mik:ilp}. 
	
	As~\eqref{eq:mik:separation} is part of the separation problem for~\eqref{eq:mik:dual:app}, there is no efficient way to exactly solve the separation problem, unless~$\classP=\NP$. \cref{lem:mik:separation} provides us a way to approximately solve the separation problem. As an approximately feasible solution for~\eqref{eq:mik:dual:app} cannot be directly used to determine the important variables in~\eqref{eq:mik:ilp}, we add an upper bound~$r$ on the objective function as a constraint to~\eqref{eq:mik:dual:app} and search for the largest~$r$ such that the Ellipsoid Method returns infeasible. This implies that~$r$ is an \emph{upper bound} on the objective function of~\eqref{eq:mik:dual:app} which in turn guarantees a \emph{lower bound} on the objective function value of~\eqref{eq:mik:ilp} by weak duality. 
	
	Of course, testing all possible values for~$r$ is intractable and we restrict the possible choices for~$r$. Observe that~$v_{\text{LP}} \in [v_{\max}, nv_{\max}]$ where~$v_{\lp}$ is the optimal value of~\eqref{eq:mik:ilp}. Thus, for~$k \in \N$ with~$\lceil \log_{1+\eps} v_{\max} \rceil \leq k \leq \lceil \log_{1+\eps} (nv_{\max}) \rceil$, we use~$r = (1+\eps)^k$ as upper bound on the objective function. That is, we test if \eqref{eq:mik:dual:app} extended by the objective function constraint \[\lfloor(1-3\eps)m\rfloor\beta + (\lfloor(1-3\eps)m\rfloor \capa- s_S)\gamma + \sum_{t \in \types} n_t \alpha_t \leq r\] is declared feasible by the Ellipsoid Method with the approximate separation oracle for~\eqref{eq:mik:separation}. We refer to the feasibility problem by~(D$_r$). 
	
	For a given solution~$(\alpha^*,\beta^*, \gamma^*)$ of~(D$_r$), the separation problem asks for one of the two: either the affirmation that the point is feasible or a separating hyperplane that separates the point from any feasible solution. The non-negativity of~$\alpha_t^*, \beta^*$, and~$\gamma^*$ an be checked in time~$\OO(|\types|) = \OO\big(\frac{\log^2 n}{\eps^4}\big)$. In case of a negative answer, the corresponding non-negativity constraint is a feasible separating hyperplane. Similarly, in time~$\OO(|\types|)$, we can check whether the objective function constraint $\lfloor(1-3\eps)m\rfloor\beta + (\lfloor(1-3\eps)m\rfloor\capa - s_S)\gamma + \sum_{t \in \types} n_t \alpha_t \leq r$ is violated and add it as a new inequality if necessary.  
	In case the non-negativity and objective function constraints are not violated, the separation problem is given by the knapsack problem in~\eqref{eq:mik:separation}. 
	The algorithm in \cref{lem:mik:separation} either outputs a configuration that yields a valid separating hyperplane or declares~$(\alpha^*,\beta^*, \gamma^*)$ feasible, i.e.,~$\beta^* + s_c \gamma^* + \sum_{t \in \types} \alpha_t^* n_{tc} \geq (1-\eps)v_c$ for all~$c \in \configs$. This implies that~$(\alpha^*,\beta^*, \gamma^*)$ is feasible for the following LP. 
	Note that we changed the right side of the constraints when compared to~\eqref{eq:mik:dual:app}.	
	\begin{equation}\tag{D$^{(1-\eps)}$}\label{eq:mik:dual:appeps}
		\begin{array}{llcll}
			\min & \displaystyle{\lfloor(1-3\eps)m\rfloor\beta + (\lfloor(1-3\eps)m\rfloor\capa - s_S)\gamma + \sum_{t \in\types} n_t \alpha_t}\\
			\text{s.t. } & \beta + s_c \gamma + \displaystyle{\sum_{t \in \types} \alpha_t n_{tc}} & \geq & (1-\eps) v_c &  \text{for all } c \in \configs \\
			& \alpha_t & \geq & 0 & \text{for all } t \in \types \\
			& \beta, \gamma & \geq & 0 
		\end{array}
	\end{equation}	
	
	Let~$r^*$ be minimal such that~(D$_{r^*}$) is declared feasible. %
	Let~$v_D^{(1-\eps)}$ denote the optimal solution value of~\eqref{eq:mik:dual:appeps}. As~$(\alpha^*,\beta^*, \gamma^*)$ is feasible with objective value at most $r^*$, we have~$v_D^{(1-\eps)} \leq r^* $.
	Let~$v^{(1-\eps)}$ denote the optimal solution value of its dual, i.e., of the following~LP. %

	\begin{equation}\tag{P$^{(1-\eps)}$}\label{eq:mik:lpeps}
		\begin{array}{llcll}
			\max & \displaystyle{\sum_{c \in \configs} y_c (1-\eps) v_c }\\
			\text{subject to } & \displaystyle{\sum_{c \in \configs} y_c s_c} & \leq & \lfloor(1-3\eps)m\rfloor \capa- s_S \\
			& \displaystyle{\sum_{c \in \configs} y_c} & \leq & \lfloor(1-3\eps)m\rfloor \\
			& \displaystyle{\sum_{c \in \configs} y_c n_{tc}} & \leq & n_t & \text{for all } t \in \types \\
			& y_c & \geq & 0 & \text{for all } c \in \configs
		\end{array}
	\end{equation}
	Then,~$y=0$ is feasible for~\eqref{eq:mik:lpeps}, and by weak duality, we have
	\begin{equation*}%
		v^{(1-\eps)} \leq v_D^{(1-\eps)} \leq r^*.
	\end{equation*}
	Note that~\eqref{eq:mik:ilp} and~\eqref{eq:mik:lpeps} have the same feasible region and their objective functions only differ by the factor~$(1-\eps)$. This implies that	
	\begin{equation}\label{eq:mik:v*<=r*}
		v_{\text{LP}} = \frac{v^{(1-\eps)}}{1-\eps} \leq \frac{r^*}{1-\eps}.
	\end{equation}
	Because of this relation between~$v_{\text{LP}}$ and~$r^*$ it suffices to find a feasible solution for~\eqref{eq:mik:ilp} with objective function value close to~$r^*$ in order to prove the lemma. 
	
	To this end, let~$\configs_r$ be the configurations that correspond to the inequalities added by the Ellipsoid Method while solving~(D$_r$) for~$r = \frac{r^*}{1+\eps}$. Consider the problems~\eqref{eq:mik:ilp} and~\eqref{eq:mik:dual:app} restricted to the variables~$y_c$, for~$c \in \configs_r$, and to the constraints corresponding to~$c \in \configs_r$, respectively, and denote these restricted LPs by~(P$'$) and~(D$'$). Let~$v'$ and~$v_D'$ be their respective optimal values.
	
	It holds that~$v_D' > r$ as the Ellipsoid Method also returns infeasibility for~(D$'$) when run on~(D$'$) extended by the objective function constraint for~$r$. As~$y=0$ is feasible for~(P$'$) and~$\alpha = 0$,~$\beta = \max_{c \in \configs_r} v_c$, and~$\gamma = 0$ are feasible for~(D$'$), their objective function values coincide by strong duality, i.e.,~$v' = v_D' > r $. If we have an optimal solution to~(P$'$), then this solution is also feasible for~\eqref{eq:mik:ilp} and achieves an objective function value 
	\begin{equation*}
		v' > \frac{r^*}{1+\eps}\geq \frac{1-\eps}{1+\eps}v_{\text{LP}},
	\end{equation*} 
	where we used Equation~\eqref{eq:mik:v*<=r*} for the last inequality. 
	
	It remains to show that the Ellipsoid Method can be applied to the setting presented here and that the running time of the just described algorithm is indeed bounded by a polynomial in~$\log n$,~$\frac1\eps$, and~$\log U$.	
	Recall that~$U$ is an upper bound on the absolute values of the denominators and numerators appearing in~\eqref{eq:mik:dual:app}, i.e., on $\capa m$ and $ nv_{\max}$. Observe that by \cref{lem:mik:separation}, the separation oracle runs in time~$\OO\big( \frac{\log^4 n}{\eps^{14}}\big)$. The number of iterations of the Ellipsoid Method will be bounded by a polynomial in~$\log U$ and~$\tilde{n} \in \OO\big(\frac{\log^2 n }{\eps^4}\big)$. Here,~$\tilde n$ is an upper bound on the number of variables in the problem~(D$_r$) (and hence also in~\eqref{eq:mik:dual:appeps}). 
	
	The feasible region of~(D$_r$) is a subset of the feasible region of~\eqref{eq:mik:dual:appeps}, even when the objective function constraint is added to the latter LP. The Ellipsoid Method usually is applied to full-dimensional, bounded polytopes that guarantee two bounds: If the polytope is non-empty, then its volume is at least~$v > 0$. The polytope is contained in a ball of volume at most~$V$. As shown in the book by Bertsimas and Tsitsiklis~\cite{BertsimasT1997}, these assumptions can always be ensured and the parameters~$v$ and~$V$ can be chosen as polynomial functions of~$\tilde n$ and~$U$. Since we cannot check feasibility of~(D$_r$) directly, we choose the parameters~$v$ and~$V$ as described in~\cite[Chapter 8]{BertsimasT1997} for the problem~\eqref{eq:mik:dual:appeps} extended by the objective function constraint for~$r$. After~$N = \OO\big( \tilde n \log \frac V v\big)$ iterations, the modified Ellipsoid Method either finds a feasible solution to~\eqref{eq:mik:dual:appeps} with objective function value at most~$r$ or correctly declares (D$_r$) infeasible. In \cite[Chapter 8]{BertsimasT1997} it is shown that the number of iterations~$N$ satisfies~$N = \OO(\tilde{n}^4 \log (\tilde n U))$ and that the overall running time is polynomially bounded in~$\tilde n$ and~$\log U$. 
	
	Hence,~(P$'$), the problem~\eqref{eq:mik:ilp} restricted to variables corresponding to constraints added by the Ellipsoid Method, has at most~$N$ variables and, thus, a polynomial time algorithm for linear programs can be applied to~(P$'$) to obtain an optimal solution in time~$\big(\frac{\log U}{\eps}\big)^{\OO(1)}$. 
\end{proof}

\paragraph*{Integrally Packing Fractional Solutions}
\label{subsec:packing}

One of the main ingredients to the dynamic algorithms in this section is a configuration ILP. As solving general ILPs is $\NP$-hard, in a first step, we relax the integrality constraints and accept fractional solutions before rounding the obtained solution to an integral one. The first lemma of this section describes how to obtain an integral solution with slightly more knapsacks given a fractional solution to a certain class of packing ILPs. Even after rounding, the configuration ILPs only take care of integrally packing big items, i.e., items with~$s_j \geq \eps \capa_i$. Therefore, the second lemma focuses on packing small items integrally given an integral packing of big items that reserves enough space for packing these items fractionally using resource augmentation. 

Formally, we consider a packing problem of items into a given set~$K$ of knapsacks with capacities~$\capa_i$. These knapsacks are grouped to obtain the set~$\mathcal G$ where group~$g \in \mathcal G$ contains~$m_g$ knapsacks and has total capacity~$S_g$. 
The objective is to maximize the total value without violating any capacity constraint. Each item~$j$ has a certain type~$t$, i.e., value $v_j = v_t$ and size~$s_j = s_t$, and in total there are~$n_t$ items of type~$t$. Items can either be packed as single items or as part of configurations. A configuration~$c$, that packs~$n_{c,t}$ items of type~$t$, has total \mbox{value~$v_c = \sum_{t} n_{c,t} v_t$} and size~$s_c=\sum_{t} n_{c,t} s_t$. The set~$E$ represents the items and the configurations that we are allowed to pack for maximizing the total value. Without loss of generality, we assume that for each element~$e \in E$ there exists at least one knapsack~$i$ where this element fits, i.e,~$s_e \leq S_i$.

Let~$0 \leq \delta \leq 1$ and~$s \geq 0$. Later we will choose $\delta=1-\Theta(\eps)$ since intuitively an $\Theta(\eps)$-fraction of the knapsacks remains unused. Consider the packing ILP for the above described problem with variables~$z_{e,g}$, where~$e \in E$ and~$g \in \mathcal{G}$. The ILP may additionally contain constraints of the form \[
\sum_{e \in E, g \in \mathcal G'} s_e z_{e,g} \leq \delta \sum_{g \in \mathcal{G}'} \capa_g - s \text{ and } \sum_{e \in E', g \in \mathcal{G}'} z_{e,g} \leq \delta \sum_{g \in \mathcal G'} m_g \, , 
\] i.e., the elements assigned to a subset of knapsack types~$\mathcal G'$ do not violate the total capacity of a~$\delta$-fraction of the knapsacks in~$\mathcal G'$ while reserving a space of size~$s$ and a particular subset~$E'$ of these elements uses at most a~$\delta$-fraction of the available knapsacks.

Let~$v(z)$ be the value attained by a certain solution~$z$ and let~$n(z)$ be the number of non-zero variables of~$z$. The following lemma shows that there is an integral solution of value at least~$v(z)$ using at most~$n(z)$ extra knapsacks. The high-level idea of the proof is to round down each non-zero variable~$z_{e,g}$ and pack the corresponding elements as described by~$z_{e,g}$. For achieving enough value, we additionally place one extra element~$e$ into the knapsacks given by resource augmentation for each variable~$z_{e,g}$ that was subjected to rounding.

More precisely, for each element~$e$ and each knapsack group~$g$, we define~$\bar{z}_{e,g}' = \lfloor z_{e,g} \rfloor$ and~\mbox{$\bar{z}_{e,g}'' = \lceil z_{e,g} - \bar{z}_{e,g}'\rceil$}. Note that~$\bar{z}'+ \bar{z}''$ may require more items of a certain type than are available. Hence, for each item type~$t$ that is now packed more than~$n_t$ times, we reduce the number of items of type~$t$ in~$\bar{z}'+ \bar{z}''$ by either adapting the chosen configurations if~$t$ is packed in a configuration or by decreasing the variables of type~$z_{t,g}$ if items of type~$t$ are packed as single items in knapsacks of group~$g$. Let~$z'$ and~$z''$ denote the solutions obtained by this transformation. For some elements~$e$, the packing described by~$z_{e,g}'+ z_{e,g}''$ may now use more or less elements than~$z_{e,g}$ due to the just described reduction of items.	

\begin{restatable}[]{lemma}{hgRoundingConfigLP}\label{lem:hg:roundingConfigLP}
	Any fractional solution~$z$ to the packing ILP described above can be rounded to an integral solution with value at least~$v(z)$ using at most~$n(z)$ additional knapsacks of capacity~$\max_{i \in K} S_i$. 
\end{restatable}

\begin{proof}
	Consider a particular item type~$t$. If~$\bar{z}'+\bar{z}''$ packs at most~$n_t$ items of this type, then the value achieved by~$z$ for this particular item type is upper bounded by the value achieved by~$z'+ z''$. If an item type was subjected to the modification, then~$z'+z''$ packs exactly~$n_t$ items of this type while~$z$ packs at most~$n_t$ items. This implies that~$v(z'+ z'') \geq v(z)$. %
	
	It remains to show how to pack~$\bar{z}'+\bar{z}''$ (and, thus,~$z'+ z''$) into the knapsacks given by~$K$ and potentially~$n(z)$ additional knapsack. Clearly,~$\bar{z}'$ can be packed exactly as~$z$ was packed. If~$z_{e,g} = 0$ for~$e \in E$ and~$g \in \mathcal G$, then~$\bar{z}'_{e,g} = 0$. Hence, the number of non-zero entries in~$\bar{z}''$ is bounded by~$n(z)$. Consider one element~$e \in E$ and a knapsack group~$g$ with~$\bar{z}_{e,g}'' = 1$ and let~$i$ be a knapsack where~$e$ fits. Pack~$e$ into~$i$. 
	
	Since reducing the number of packed items of a certain type only decreases the size of the corresponding configuration or the number of individually packed elements, the solution~$z'+ z''$ can be packed exactly as described for~$\bar{z}'+\bar{z}''$. Therefore, we need at most~$n(z)$ extra knapsacks to pack~$z''$, which concludes the proof.
\end{proof}

Having found a feasible solution with the Ellipsoid Method, we use Gaussian elimination to obtain a basic feasible solution with no worse objective function value. We note that this procedure has a running time bounded by~$(N |\types|)^{\OO(1)}$, where~$N$ is the number of non-zero variables in the solution found by the Ellipsoid Method. 
Since basic feasible solutions have  at most $|\types|+2$ non-vanishing variables, the assumptions~$\frac{16}{\eps^7} \log^2 n \leq m$ and~$m < n$ imply~$\frac{16}{\eps^7} \log^2 m \leq m$. This in turn guarantees~$|\types|+2 \leq \lfloor \eps m\rfloor $. Hence, rounding the solution as described above uses at most~$\lfloor (1-2\eps)m\rfloor $ knapsacks and achieves a value of at least~$v_{\text{LP}}$. 

\begin{corollary}\label{cor:mik:bigm}
	If~$\frac{16}{\eps^7} \log^2 n \leq m$, any feasible solution of the LP relaxation of \eqref{eq:mik:ilp} with at most~$N$ non-zero variables can be rounded to an integral solution using at most~$\lfloor (1-2\eps)m \rfloor$ knapsacks with total value at least $v_{\text{LP}}$ in time~$(N|\types|)^{\OO(1)}$. 
\end{corollary}

Given an integral packing of big items, we explain how to pack small items, i.e., items with~$s_j < \eps \capa$, using resource augmentation.  More precisely, let~$K$ be a set of knapsacks and let~$J_S'\subseteq J$ be a subset of items that are small with respect to every knapsack in~$K$. Let~$J'\subset J$ be a set of items admitting an integral packing into~$m = |K|$ knapsacks that preserves a space of at least~$s(J_S')$ in these~$m$ knapsacks. We develop a procedure to extend this packing to an integral packing of all items~$J'\cup J_S'$ in~$\lceil (1+\eps)m \rceil $ knapsacks where the~$\lceil \eps m\rceil $ additional knapsacks can be chosen to have the smallest capacity of knapsacks in~$K$.

We use a packing approach similar to \textsc{Next Fit} for the problem \textsc{Bin Packing}. That is, consider an arbitrary order of the small items and an arbitrary order of the knapsacks filled with big items. We open the first knapsack in this order for small items. If the next small item~$j$ still fits into the open knapsack, we place it there and decrease the remaining capacity accordingly. If it does not fit anymore, we pack this item into the next empty slot of an additional knapsacks (possibly opening a new one), close the current original knapsack, and open the next one for packing small items. We call such an item~\emph{cut}.

\begin{restatable}{lemma}{hgPackP}\label{lem:hg:packP}
	The procedure described above feasibly packs all items~$J'\cup J_S'$ in~$\lceil (1+\eps)m\rceil $ knapsacks where the~$\lceil \eps m \rceil $ additional knapsacks can be chosen to have the smallest capacity of knapsacks in~$K$.  
\end{restatable}

\begin{proof}
	We start by showing that all small items are packed after the last original knapsack is closed. Toward a contradiction, suppose that there is a small item~$j$ left \emph{after} all original knapsacks were closed while packing small items. As a knapsack is only closed if the current small item does not fit anymore, this implies that the volume of all small items that are packed so far have a total volume at least as large as the total remaining capacity of knapsacks in~$K$ after packing~$J'$. Since~$j$ is left unpacked after all original knapsacks have been closed, the total volume of all items in~$J'\cup J_S'$ is strictly larger than the total capacity of the original knapsacks in~$K$. This contradicts the assumption imposed on~$J_B'$ and on~$J_S'$. Hence, all items in~$J_S'$ are packed. Therefore, the packing created by the procedure is integral and feasible. 
	
	It remains to bound the number of additional knapsacks. Observe that each item that we packed into a knapsack given by resource augmentation while an original knapsack was still available, implied the closing of the current knapsack and the opening of a new one. Hence, for each original knapsack at most one small item was placed into the additional knapsacks. Thus, at most $m$ small items are packed into the additional knapsacks.
	Since by definition of small items at least~$\frac1\eps$ items fit into one additional knapsack, we only need~$\lceil \eps m\rceil $ extra knapsacks for such items. 
\end{proof}

\paragraph*{Answering Queries}

Note that, throughout the course of the dynamic algorithm, we only implicitly store solutions. In the remainder of this section, we explain how to answer the queries stated in the main part and bound the running times of the corresponding algorithms.
We refer to the time frame between two updates as a \emph{round} and introduce a counter $\tau$ that is increased after each update and denotes the current round.
Since answers to queries have to stay consistent in a round, we \emph{cache} existing query answers by additionally storing a round~$t(j)$ and a knapsack~$k(j)$ for each item in the data structure for items where~$t(j)$ stores the last round in which item~$j$ has been queried and~$k(j)$ points to the knapsack of~$j$ in round~$t(j)$. 
Storing~$t(j)$ is necessary since resetting the cached query answers after each update takes too much running time. If~$j$ was {not selected} in~$t(j)$, we store and return this with~$k(j) =0$.

Let~$\bar y_c$, for~$c \in \configs$, be the packing for the big items in terms of the variables of the configuration ILP. During the Ellipsoid Method and the rounding of the fractional solution to an integral solution, the set~$\overline \configs := \{c \in \configs: \bar{y}_{c} \geq 1 \}$ was constructed. We assume that this set is ordered in some way and stored in a list. In the following we use the position of $c \in \overline \configs$ in that list as the index of $c$. For assigning~$\bar{y}_c$ distinct knapsacks to~$c \in \overline \configs$, we use the ordering of the configurations and map the knapsacks $\sum_{c' = 1}^{c-1} \bar{y}_{c'} + 1 , \ldots, \sum_{c' = 1}^{c} \bar{y}_{c'}$ to~$c$. 

For small items, we store all items in a balanced binary search tree sorted by non-increasing density. For simplicity, let~$P_S = \{1,\ldots,j^\star-1\}$ be the set of items (sorted by non-increasing density) that translate the guess~$v_S$ into the size~$s_S$ of small items in the current solution. Item~$j^\star$ is packed into its own knapsack. Any item~$j \leq j^\star-1$ is either packed regularly into the empty space of a knapsack with a configuration or it is packed into a knapsack designated for packing cut small items. Therefore, we maintain two pointers:~$\kappa^r$ points to the next knapsack where a small item is supposed to go if it is packed \emph{regularly} and~$\kappa^c$ points to the knapsack where the next \emph{cut} small item is packed. We initialize these values with~$\kappa^r = 1$ and~$\kappa^c = \lfloor (1-2\eps)m \rfloor + 1$. To determine if an item is packed regularly or as cut item, we store in~$\rho^r$ the remaining capacity of~$\kappa^r$ initialized with~$\kappa^r = S - s_1$ where~$s_1$ is the size of the first configuration in~$\overline \configs$. We store in~$\rho^c$ the remaining slots of small items in knapsack~$\kappa^c$ and initialize this with~$\rho^c = \frac1\eps$. 

For each type~$t$ of big items, we maintain a pointer~$\kappa_t$ to the knapsack where the next queried item of type~$t$ is supposed to be packed. Moreover, the counter~$\eta_t$ stores how many slots~$\kappa_t$ still has available for items of type~$t$. These two values are initialized with the first knapsack that packs items of type~$t$ and~$\eta_t = n_{c,t}$ where~$c$ is the configuration of~$\kappa_t$. If no items of type~$t$ are packed, we set~$\kappa_t=0$. Let~$\bar{n}_t$ denote the number of items of type~$t$ belonging to solution~$\bar{y}$. We will only pack the first, i.e., smallest,~$\bar n_t$ items of type~$t$. 
Figure~\ref{fig:mik:queries} depicts the pointers and counters after some items already have been queried. 

\begin{figure}[tbh]
	\centering 
	\begin{tikzpicture}[scale = .9]
		\foreach \x in {0,...,14}{
			\draw[draw=black, fill=none] (\x-.1,0) rectangle (\x+.6, 3.75); 
		}

		\draw[ubred, dashed, thick] (8.75,4) -- (8.75, -.25);
		\foreach \x in {9,...,11}{
			\foreach \y in {0,...,4}
			\draw[draw=ubred, fill = ubred!10] (\x, 0.1+\y*.75) rectangle (\x+.5, 0.65 + \y*.75);
		}
		\draw[draw=ubred, fill = ubred!10] (14, 0.1) rectangle (14.5, 0.65);
		
		\foreach \y/\s in {0/.05, 1/.2, 2/.15, 3/.1, 4/.2} {
			\draw[draw=none, fill = ubred] (9, 0.1+\y*.75) rectangle (9 + .5, 0.65 + \y*.75-\s);
		}
		\draw[draw=none, fill = ubred] (10, 0.1) rectangle (10 + .5, 0.45);

		\foreach \x in {0,1}{
			\draw[draw=utnavy, fill = utnavy!10] (\x, 0.1) rectangle (\x+.5, 3.3);
			\draw[draw=ubred, fill = ubred!10] (\x, 3.4) rectangle (\x+.5, 3.65);
		}
		\foreach \x in {2,...,5}{
			\draw[draw=utblue, fill = utblue!10] (\x, 0.1) rectangle (\x+.5, 1.6);
			\draw[draw=utblue, fill = utblue!10] (\x, 1.7) rectangle (\x+.5, 3.2);
			\draw[draw=ubred, fill = ubred!10] (\x, 3.3) rectangle (\x+.5, 3.65);
		}
		\foreach \x in {6,...,8}{
			\draw[draw=utgreen, fill = utgreen!10] (\x, 0.1) rectangle (\x+.5, 2);
			\draw[draw=utforest, fill = utforest!10] (\x, 2.1) rectangle (\x+.5, 3);
			\draw[draw=ubred, fill = ubred!10] (\x, 3.1) rectangle (\x+.5, 3.65);
		}
		
		\foreach \x/\y/\s/\col in {0/0.1/3.1/utnavy, 2/.1/1.5/utblue, 2/1.7/1.4/utblue, 3/.1/1.4/utblue, 6/.1/1.8/utgreen, 6/2.1/.9/utforest, 7/.1/1.9/utgreen,8/.1/1.7/utgreen, 7/2.1/.8/utforest}{
			\draw[fill =\col, draw=none] (\x,\y) rectangle (\x + .5, \y + \s);
		}
		
		\foreach \x/\y/\s in {0/3.4/.2, 2/3.3/.35, 3/3.3/.1, 4/3.3/.1, 5/3.3/.25, 6/3.1/.3}{
			\draw[fill = ubred, draw = none] (\x,\y) rectangle (\x+.5, \y + \s);
		}

		\node[font = \small] at (10.25,-.25) {$\kappa^c$};	
		\node[font = \small] at (10.25,-.7) {$\rho^c = 4$};
		
		\node[font = \small] at (6.25,-.25) {$\kappa^r$};
		\node[font = \small] at (6.25, -.7) {$\rho^r =$ \tikz{\draw[draw=white] (0,0) -- (.25,0);} $-$ \tikz{\draw[draw = white] (0, .15) -- (0, -.15);} }; 
		
		\draw[fill = ubred!10, draw = ubred, xshift = 6.25cm, yshift = -.7cm] (0, .275) rectangle (.25, -.275);
		\draw[fill = ubred, draw = none, xshift = 7cm, yshift = -.7cm] (0, .15) rectangle (.25, -.15);

		\node[font = \small] at (1.25,-.32) {$\kappa_1$};	
		\node[font = \small] at (1.25,-.73) {$\eta_1 = 1$};
		\node[font = \small, white] at (0.25,1.75) {$1$};
		
		\node[font = \small] at (3.25,-.32) {$\kappa_2$};	
		\node[font = \small] at (3.25,-.73) {$\eta_2 = 1$};
		\node[font = \small, black] at (2.25,.85) {$2$};
		\node[font = \small, black] at (2.25,2.3) {$2$};	
		\node[font = \small, black] at (3.25,.85) {$2$};	
		
		\node[font = \small] at (8.25,-.32) {$\kappa_4$};	
		\node[font = \small] at (8.25,-.73) {$\eta_4 = 1$};
		\node[font = \small, white] at (6.25,2.5) {$4$};
		\node[font = \small, white] at (7.25,2.5) {$4$};

		\node[font = \small, black] at (6.25,1) {$3$};	
		\node[font = \small, black] at (7.25,1) {$3$};
		\node[font = \small, black] at (8.25,1) {$3$};

		\draw [decorate,decoration={brace,amplitude=6pt},ubred]  (8.9, 4.1) -- (11.6, 4.1) node [black,midway,yshift =  24pt,font=\small, black] {$\lceil \eps m \rceil$  knapsacks} ;
		\draw [decorate,decoration={brace,amplitude=6pt},ubred]  (-.1, 4.1) -- (8.6, 4.1) node [black,midway,yshift =  24pt,font=\small, black] {$\lfloor (1-2\eps) m \rfloor$ knapsacks};
		\node[yshift = 12pt, font = \small] at (4.25,4.1) {with configurations}; 
		\node[yshift = 18pt, font = \small] at (14.25, 4.1) {$j^\star$};
		\node[yshift = 12pt, font = \small] at (10.25,4.1) {for cut items};
	\end{tikzpicture}
	\caption{Pointers and counters used for answering queries: Lightly colored rectangles represent slots to be filled with items. Big (blue and green) items are packed one item per slot. Item type~$3$ does not have any slots left. Small (red) items are packed either until the slot is filled (left side) or one item per slot (right side). The not yet queried, small item~$j^\star$ gets its own knapsack.}\label{fig:mik:queries}
\end{figure}
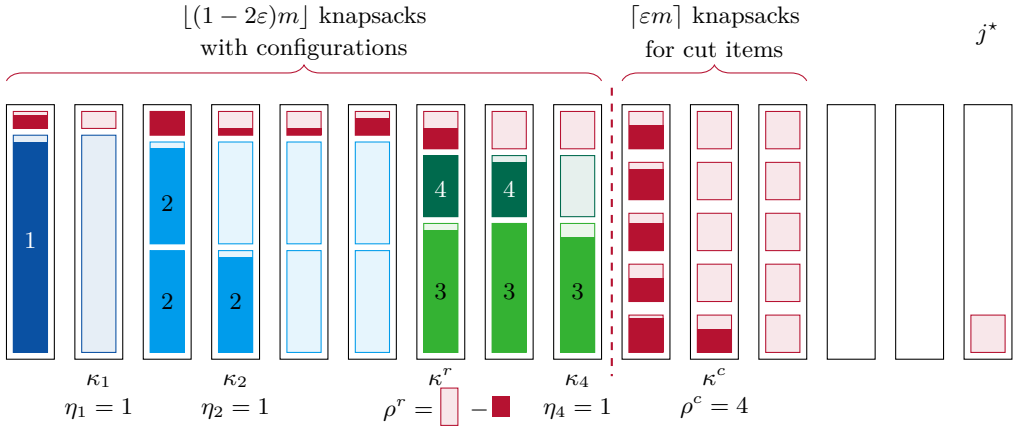

Consider a queried small item~$j$. If~$t(j) = \tau$, we return~$k(j)$. Otherwise, set~$t(j) = \tau$ and determine whether~$j$ is currently part of the solution. If~$j$ does not belong to the densest~$j^\star$ items, we return~$k(j) = 0$. Otherwise, we determine where~$j$ is packed. If~$j = j^\star$, we return~$k(j) = m$. Else, we figure out whether~$j$ is packed into the knapsack~$\kappa^r$ or into~$\kappa^c$. If~$\rho^r \geq s_j$, we simply update~$\rho^r$ to~$\rho^r - s_j$ and return~$k(j) = \kappa^r$. Otherwise, we decrease~$\rho^c$ by one and pack~$j$ as cut item in~$\kappa^c$. If~$\rho^c = 0$ holds after the update, we increase~$\kappa^c$ by one and set~$\rho^c = \frac1\eps$. Further, we need to close~$\kappa^r$ and update~$\kappa^r$ and~$\rho^r$ accordingly. To this end, we increase~$\kappa^r$ by one and determine~$\rho^r$, the remaining capacity in knapsack~$\kappa^r$. Then, we return~$k(j)$.

Consider a queried big item~$j$. If~$t(j) = \tau$, we return~$k(j)$. Otherwise, we set~$t(j) = \tau$ and compute whether item~$j$ is packed by the current solution. Let~$V_\ell$ be the value class of~$j$. If~$\ell \notin \{\llmin, \ldots, \lmax\}$, we return~$k(j) = 0$. Otherwise, we retrieve the type~$t$ of item~$j$. Given~$t$, we determine if~$j$ belongs to the first~$\bar n_t$ items of type~$t$. If this is not the case, we return~$k(j) =0$. If this is the case, then we set~$k(j) = \kappa_t$ instead and we decrease~$\eta_t$ by one. If this remains non-zero, we return~$k(j) = \kappa_t$. Otherwise, we find the next knapsack that packs items of type~$t$ and update~$\kappa_t$ and~$\eta_t$ accordingly before returning~$k(j)$.

\subparagraph*{Answering Item Queries.}
\begin{enumerate}
	\item[1)] \textbf{Check cache.} Let~$\tau$ be the current round and let~$j$ be the queried item. If~$t(j) = \tau$, return~$k(j)$. 
	\item[2)] \textbf{Answer queries for non-cached items.} Set~$t(j) = \tau$. If~$s_j \leq \eps \capa$, item~$j$ is small. Otherwise,~$j$ is big.
	
	\textbf{Small items.} If~$j > j^\star+1$, return \textsc{not selected} and set~$k(j) = 0$. 
	
	If~$j = j^\star+1$, return~$k(j) = m$. 
	
	Otherwise, determine if~$j$ is packed regularly or as cut item: If~$s_j \leq \rho^r$, return~$k(j) = \kappa^r$ and update~$\rho$ accordingly. Otherwise, return~$k(j) = \kappa^c$. Decrease $\rho^c$ by one and if~$\rho^c$ now holds, increase~$\kappa^c$ by one and set~$\rho^c = \frac1\eps$. Increase~$\kappa^r$ by one and update~$\rho^r$ accordingly to reflect the empty space in~$\kappa^r$. 
	
	\textbf{Big items.} Determine the value class~$V_\ell$ of~$j$. If~$\ell < \lmin$, return \textsc{not selected}. Otherwise, determine the item type~$t$ of~$j$ by retracing the steps of the  {oblivious} linear grouping. 
	
	If~$j$ is not among the first~$\bar n_t$ items of type~$t$, return \textsc{not selected} and set~$k(j) = 0$. 
	
	Otherwise, return~$k(j) = \kappa_t$ and decrease~$\eta_t$ by one. If~$\eta_t = 0$, increase~$\kappa_t$ to the next knapsack for type~$t$ and update~$\eta_t$ accordingly. If no such knapsack exists, set~$\kappa_t = 0$. 
\end{enumerate}

For being able to return the solution value in constant query time, we actually compute the solution value once after each update operation and store it. More precisely, the value achieved by the small items,~$v_S$ can be computed with a prefix computation of the first~$j^\star$ items in the density-sorted tree for small items. For computing the value of big items, we consider each value class~$V_\ell$ with~$\ell \in \{\llmin ,\ldots,\lmax\}$ individually. Per value class and per item type, we use prefix computation to determine the value~$v_t$ of the first~$\bar n_t$ items of type~$t$. \cref{lem:mik:querying:solval} guarantees that the running time is indeed upper bounded by the update time and, thus, does not change the order of magnitude.

\subparagraph*{Answering the Solution Value Query.}
\begin{enumerate}
	\item[1)] \textbf{Value of small items.} Calculate~$v_S = \sum_{j = 1}^{j^\star} v_j$ with prefix computation. 
	\item[2)] \textbf{Value of big items.} For each item type~$t$, calculate~$v_{B,t}$ the value of the first~$\bar n_t$ items[3)] of type~$t$ using prefix computation. 
	\item \textbf{Value.} Return~$v_S + \sum_{t \in \types} v_{B,t}$.
\end{enumerate}

When queried the complete solution, we return a list of packed items together with their respective knapsacks. To this end, we start by querying the~$j^\star$ densest small items using the algorithm for item queries. For big items, we query the first~$\bar n_t$ items of each item type~$t \in \types$.

\subparagraph*{Answering the Solution Query}
\begin{enumerate}
	\item[1)] \textbf{Small items.} Query each item~$j=1,\ldots j^\star+1$ and return the solution.
	\item[2)] \textbf{Big items.} For each type~$t \in \types$, query the first~$\bar n_t$ items and return the solution.
\end{enumerate}

We prove the parts of the following lemmas individually. 

\mikQueries*

\begin{restatable}{lemma}{mikQueriesCorrectness}\label{lem:mik:queries:correctness}
	The solution determined by the query algorithms is feasible and achieves the claimed total value.
\end{restatable}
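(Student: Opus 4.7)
The plan is to argue that the packing recovered by answering all possible item queries coincides with the implicit solution fixed at update time, and then invoke feasibility statements already established for that implicit solution. Concretely, for each configuration $c\in\configs$ with $\bar y_c\geq 1$, the rounded ILP solution fixes a set of $\bar y_c$ knapsacks that will hold one copy of $c$; the ordering of $\configs'$ fixes which knapsacks these are. For each item type $t\in\types$, the total number of slots designated for $t$ across all such configurations equals $\sum_{c}\bar y_c n_{tc}=\bar n_t\leq n_t$. The query procedure places the first $\bar n_t$ items of type $t$ (in the global size-sorted order) into these slots by advancing the pointer $\kappa_t$ and the counter $\eta_t$; since every slot is filled at most once and each placed item fits into the configuration that reserves it, the big-item portion of the resulting packing is feasible, irrespective of the order in which queries arrive.

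For small items, the implicit solution consists of the prefix of the density-sorted list of sizes $1,\dots,j^{\star}+1$. Item $j^{\star}+1$ is placed alone into its own knapsack, and items $1,\dots,j^{\star}$ are filled into the free space remaining after the big-item packing in a First-Fit manner: $\kappa^r$ tracks the currently open big-item knapsack, $\rho$ its remaining capacity, and any item that does not fit is redirected to the $\eps m$ extra knapsacks via $\kappa^c$. This mirrors exactly the procedure analyzed in \cref{lem:hg:packP}, which guarantees that the entire prefix fits using at most $(1+\eps)$ times the number of big-item knapsacks plus one. By \cref{cor:mik:bigm} the big-item packing uses at most $(1-2\eps)m$ knapsacks, so the overall allocation uses at most $(1+\eps)(1-2\eps)m+1\leq m$ knapsacks, proving feasibility. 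Consistency across repeated queries within a round is enforced by the cache $(t(j),k(j))$, whose comparison with the current round counter $\tau$ prevents stale information from being returned.

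For the stated total value, recall that the algorithm stores at update time the sum of the \emph{actual} (unrounded) values of the packed items, computed via prefix-sum queries on the data structures of \Cref{sec:ds}: one prefix of length $j^{\star}+1$ over the small items ordered by decreasing density, and one prefix of length $\bar n_t$ over each nonempty type $t\in\types$. This sum is exactly the value of the implicit packing, which equals the bound analyzed in \cref{lem:mik:ApxRatio}. The main obstacle I foresee is verifying the First-Fit invariant for small items, namely that $\rho$ correctly accounts for the residual capacity of $\kappa^r$ so that no knapsack is overfilled; this is precisely the invariant established in the proof of \cref{lem:hg:packP}, and the query procedure merely executes that procedure lazily as items are asked about, so the feasibility argument transfers verbatim.
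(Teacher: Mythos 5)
Your proposal is correct and follows essentially the same route as the paper's own proof: consistency via the cached pair $(t(j),k(j))$ against the round counter, feasibility of the big-item part by showing the pointers $\kappa_t,\eta_t$ realize exactly the slot counts $\sum_c \bar y_c n_{tc}=\bar n_t$ dictated by the implicit solution $\bar y$, feasibility of the small-item prefix via \cref{lem:hg:packP} (which is indeed stated for an arbitrary order, so the lazy, query-driven execution is covered), and the value claim via the packing coinciding with the implicit solution analyzed in \cref{lem:mik:ApxRatio}. The extra knapsack count $(1+\eps)(1-2\eps)m+1\leq m$ you spell out is the same accounting the paper performs inside \cref{lem:mik:ApxRatio}, so nothing is missing.
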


\begin{proof}
	By construction of~$t(j)$ and~$k(j)$, the answers to queries happening between two consecutive updates are consistent. 
	
	For small items, observe that~$1,\ldots,j^\star$ are the densest small items in the current instance. By~\cref{lem:hg:packP}, the packing obtained by our algorithms is feasible for these items.
	
	For big items, we observe that their actual size is at most the size of their item types. Hence, packing an item of type~$t$ where the implicit solution packs an item of type~$t$ is feasible. The algorithms correctly pack the first~$\bar n_t$ items of type~$t$. A knapsack with configuration~$c\in \overline \configs$ correctly obtains~$n_{c,t}$ items of type~$t$. Moreover, each configuration~$c \in \overline \configs$ gets assigned~$\bar y_c$ knapsacks. Hence, the algorithm packs exactly the number of big items as dictated by the implicit solution~$\bar y$. 
\end{proof} 

\begin{restatable}{lemma}{mikQueriesBig}\label{lem:mik:querying:big} 
	The data structures for big items can be generated in time 	$\OO\big(\frac{\log^4 n}{\eps^9}\big)$. Queries for big items can be answered in time $\OO\big(\log n + \log \frac{\log n}{\eps}\big)$. %
\end{restatable}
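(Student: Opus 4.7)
The plan is to analyze separately the generation cost of the data structures tracking big items and the per-query cost. The data structures in question are, for each type $t \in \types$, the count $\bar n_t$ of packed type-$t$ items, together with the pointer $\kappa_t$ and the counter $\eta_t$ that govern how the packed type-$t$ items are distributed among the knapsacks assigned to configurations containing $t$.

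First, for the generation cost, I would iterate once through the list $\configs'$ in the fixed order that determines the knapsack-to-configuration assignment. For each $c \in \configs'$, the relevant information is a list of the at most $\frac{1}{\eps}$ types appearing in $c$ together with their multiplicities $n_{c,t}$. During this single pass I would (i) update $\bar n_t \gets \bar n_t + \bar y_c \cdot n_{c,t}$ for each type $t$ present in $c$ and (ii) initialize, when first encountering $t$, the pointer $\kappa_t$ to the current knapsack block together with $\eta_t$ to the number of type-$t$ slots exposed there. Since each configuration contributes $\OO\big(\frac{1}{\eps}\big)$ work, the total running time is $\OO\big(\frac{|\configs'|}{\eps}\big)$.

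For the per-query cost, given a queried big item $j$ I would first read $s_j, v_j, \ell_j$ from the item array in constant time and immediately return \textsc{not selected} if $\ell_j \notin [\lmin, \lmax]$. Otherwise, I would identify the item type $t$ of $j$ by binary search over the precomputed boundary sizes of the dynamic linear grouping within $V_{\ell_j}$; by \cref{lem:hg:NoOfTypes} there are only $\OO\big(\frac{\log n}{\eps^2}\big)$ such boundaries, so the search costs $\OO\big(\log \frac{\log n}{\eps}\big)$. A further constant-time comparison against the threshold induced by $\bar n_t$ decides whether $j$ is among the first $\bar n_t$ items of type $t$, and the return of $\kappa_t$ followed by decrementing $\eta_t$ (and advancing $\kappa_t$ if $\eta_t$ reaches zero) is constant time.

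The main obstacle I anticipate is the type-identification step. The natural definition of the type of $j$ is by its rank within $V_{\ell_j}$, and computing that rank through the augmented BST costs $\OO(\log n)$, which is too much. The plan is to circumvent this by exploiting that dynamic linear grouping assigns a rounded size that is a monotone step function of the rank, so type identification is equivalent to a binary search on sizes over the $\OO\big(\frac{\log n}{\eps^2}\big)$ stored boundary sizes maintained in \cref{subsec:harmonicgrouping}. This yields the $\OO\big(\log \frac{\log n}{\eps}\big)$ bound without ever performing a rank query, while all updates to $\kappa_t$ and $\eta_t$ remain strictly local to the accessed type.
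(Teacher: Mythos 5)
Your overall route is essentially the paper's: the generation bound comes from a single pass over $\configs'$ charging $\OO\big(\frac1\eps\big)$ work per configuration, and the query bound comes from identifying the item type by binary search over the $\OO\big(\frac{\log n}{\eps^2}\big)$ linear-grouping boundaries inside the value class, which costs $\OO\big(\log\frac{\log n}{\eps}\big)$; the paper does exactly this (it recomputes the boundary indices once and stores the lists $\types_\ell$, then binary-searches them per query).

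The genuine gap is your claim that ``decrementing $\eta_t$ (and advancing $\kappa_t$ if $\eta_t$ reaches zero)'' is constant time. With only $\bar n_t$, $\kappa_t$ and $\eta_t$ as you describe them, once the type-$t$ slots of the current knapsack are used up you cannot decide in $\OO(1)$ whether the next knapsack carries the same configuration $c$ (so that $\eta_t$ may simply be reset to $n_{c,t}$) or, if not, which is the next configuration containing type $t$ and where its block of knapsacks begins -- that configuration may lie arbitrarily far ahead in $\configs'$, and nothing you have stored exposes the block boundaries. The paper's proof adds exactly the bookkeeping needed for this step: in the same generation pass it builds the prefix list $\alpha$ with $\alpha_c$ the first knapsack assigned to configuration $c$, and for each type $t$ a list $\configs_t$ of all $c\in\configs'$ with $n_{c,t}\ge 1$ (storing $n_{c,t}$ and $\alpha_c$); the advance is then performed by binary searches over $\alpha$ and $\configs_t$ in time $\OO(\log|\configs'|)=\OO\big(\log\frac{\log n}{\eps}\big)$, which still fits the claimed query bound because $|\configs'|=\OO\big(\frac{\log^2 n}{\eps^4}\big)$. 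Your argument is repairable within your own framework -- building $\configs_t$ together with the block boundaries costs only $\OO\big(\frac1\eps\big)$ extra per configuration, and a per-type cursor into $\configs_t$ would even give the $\OO(1)$ advancement you assert -- but as written the constant-time claim for the pointer update is unsupported.
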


\begin{proof} 
	We assume that~$\overline \configs$ is already stored in some list. We start by formally mapping knapsacks to configurations. To this end, we create a list~$\alpha = (\alpha_c)_{c \in \overline \configs}$, where~$\alpha_c = \sum_{c'=1}^{c-1} \bar y_{c'}$ is the first knapsack with configuration~$c \in \overline \configs$. Using~$\alpha_{c} = \alpha_{c-1} + \bar y_{c-1}$, we can compute these values in constant time. Hence, by iterating once through~$\overline \configs$, list~$\alpha$ can be generated in $\OO(|\overline \configs|)$. 
	
	We start by recomputing the indices needed for the  {oblivious} linear grouping approach. For each value class~$V_\ell$ with~$\ell \in \{\llmin ,\ldots,\lmax\}$, we access the items corresponding to the boundaries of the item types~$\types_\ell$ in order to obtain the item types~$\types_\ell$. By construction, these types are already ordered by non-decreasing size~$s_t$. By \cref{lem:hg:RunningTime}, these item types can be computed in time $\OO\big(\frac{\log^4 n}{\eps^4}\big)$ and stored in one list~$\types_\ell$ per value class~$V_\ell$. 
	
	For maintaining and updating the pointer~$\kappa_t$, we generate a list~$\configs_t$ of all configurations~$c \in \overline \configs$ with~$n_{c,t} \geq 1$. By iterating through each~$c \in \overline \configs$, we can add~$c$ to the list of~$t$ if~$n_{c,t} \geq 1$. We additionally store~$n_{c,t}$ and~$\alpha_c$ in the list~$\configs_t$. While iterating through the configurations, we additionally compute~$\bar n_t = \sum_{c \in \overline \configs} \bar y_c n_{c,t}$ and store~$\bar n_t$ in the same list as the item types~$\types_\ell$. 
	Note that, since the list of $\overline \configs$ is ordered by index, the created lists $\configs_t$ are also sorted by index. For each item type, we point $\kappa_t$ to the first knapsack of the first added configuration~$c$ and set~$\eta_t = n_{c,t}$. If the list of an item type remains empty, we set~$\kappa_t = 0$. 	
	Since each configuration contains at most $\frac1\eps$ item types, the lists~$\configs_t$ can be generated in time~$\OO\big(\frac{|\overline \configs||\types|}{\eps}\big)$. 
	
	Now consider a queried big item $j$. In time $\OO(\log n)$, we can decide whether $j$ has already been queried in the current round. If not, let $V_\ell$ be the value class of $j$, which was computed upon arrival of~$j$. If~$\ell \notin \{\llmin, \ldots, \lmax\}$, then~$j$ does not belong to the current solution and no data structures need to be updated. Otherwise, the type of $j$ is determined by accessing the item types~$\types_\ell$ in time~$\OO\big(\log \frac{\log n}{\eps}\big)$. Once~$t$ is determined,~$\bar{n}_t$ can be added to the left boundary of type~$t$ in order to determine if~$j$ is packed or not. If~$j$ belongs to the current solution, pointer~$\kappa_t$ dictates the answer to the query. 
	
	In order to update $\kappa_t$ and $\eta_t$, we extract~$c$, the configuration of knapsack~$\kappa_t$ in time~$\OO(\log |\overline \configs|)$ by binary search over the list~$\alpha$. If~$\kappa_t + 1 < \alpha_{c+1}$, then~$\kappa_t$ is increased by one and~$\eta_t$ set to~$n_{c,t}$ in constant time. If not, then the next configuration~$c'$ containing~$t$ can be found with binary search over the list~$\configs_t$ in time~$\OO(\log |\overline \configs|)$. If no such configuration is found, we set~$\kappa_t = 0$. Otherwise, we set~$\kappa_t = \alpha_{c'}$ and~$\eta_t = n_{c',t}$.  
	Overall, queries for big items can be answered in time $\OO\big(\max \big\{\log |\overline \configs|, \log \frac{\log n}{\eps}\big\}\big)$. 
	
	Observing that~$|\overline \configs| \in \OO(|\types|) = \OO \big(\frac{\log^2 n}{\eps^4}\big)$ completes the proof.
\end{proof}

\begin{restatable}{lemma}{mikQueriesSmall}\label{lem:mik:querying:small}
	Given the data structures for big items, the data structures for small items can be generated in time $\OO\big( \log \frac{\log n}{\eps} \big)$. The running time for answering queries for small items is~$\OO\big(\log n + \max \big\{ \log \frac{\log n}{\eps}, \frac{1}{\eps}\big\} \big)$. 
\end{restatable}

\begin{proof}
	We initialize~$\kappa^r = 1$ and~$\rho^r = \capa - s_1$ where~$s_1$ is the total size of the configuration assigned to the first knapsack. For packing cut items, we use the pointer~$\kappa^c$ to the current knapsack for cut items while~$\rho^c$ stores the remaining slots of small items. We initialize these values with~$\kappa^c= \lfloor (1-2\eps)m \rfloor + 1$ and~$\rho^c = \frac{1}{\eps}$. These initializations can be computed in time~$\OO(\log |\overline \configs|)$ (for extracting~$s_1$).
	
	Now consider a queried small item~$j$. In time~$\OO(\log n)$ we can decide whether~$j$ has already been queried in the current round. In constant time, we can decide whether~$j > j^\star$. If~$j > j^\star$, the answer is \textsc{not selected}. If~$j = j^\star$, we return~$m$. If~$j< j^\star$, the algorithm only needs to decide if~$j$ is packed into~$\kappa^{r}$ or~$\kappa^c$, which can be done in constant time. Finally,~$\kappa^{r}$ and~$\kappa^c$ as well as~$\rho^r$ and~$\rho^c$ need to be updated.
	While~$\kappa^c$,~$\kappa^{r}$, and~$\rho^c$ can be updated in constant time, we need to compute the configuration~$c$ and remaining capacity~$\capa - s_c$ of knapsack~$\kappa^{r}$ when the pointer is increased. By using binary search over the list~$\alpha$, the configuration can be determined in time~$\OO(\log |\overline \configs|)$. Once the configuration is known,~$\rho^r$ can be calculated in time~$\OO\big(\frac{1}{\eps}\big)$. Overall, queries for small items can be answered in time~$\OO\big(\log n + \max \big\{ \log |\overline \configs|, \frac{1}{\eps}\big\} \big)$. 
	
	Using that~$|\overline \configs| \in \OO(|\types|) = \OO\big(\frac{\log^2n}{\eps^4}\big)$ concludes the proof.
\end{proof}

\begin{restatable}{lemma}{mikQueriesSolVal}\label{lem:mik:querying:solval}
	The total solution value can be computed in~$\OO\big(\frac{\log^3 n}{\eps^4}\big)$. 
	A query for the solution value can be answered in time~$\OO(1)$.
\end{restatable}

\begin{proof}
	The true value~$\tilde v_S$ achieved by the small items can be determined by computing the prefix of the first~$j^\star$ items in the density-sorted tree for small items in time~$\OO(\log n)$ by \cref{lem:data-structure}. 
	
	For computing the value of a big item, we consider each value class~$V_\ell$ with~$ \ell \in \{\llmin, \ldots,  \lmax\}$ individually. There are at most~$\OO\big(\frac{\log n}{\eps^2}  \big)$ many value classes by \cref{lem:hg:GuessLMax}. For one value class, in time $\OO\big(\frac{\log n}{\eps^2}\big)$, iterate through the item types~$t$. For each item type, we can access the total value of the first~$\bar n_t$ items in time~$\OO(\log n)$ by \cref{lem:data-structure}. 
	
	As these running times are subsumed by the running time of the update operation, we actually compute the solution value once after each update operation and store the value allowing for constant running time to answer the query. 
\end{proof}

\begin{restatable}{lemma}{mikQueriesSol}\label{lem:mik:querying:sol}
	A query for the complete solution can be answered in time 
	$\OO\big(|P| \frac{\log^4n }{\eps^4 }  \log \frac{\log n}{\eps}\big)$, where~$P$ is the set of items in our solution.
\end{restatable}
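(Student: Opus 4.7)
The plan is to account for (i) a fixed setup cost, (ii) an overhead from iterating over value classes and item types, and (iii) the per-single-item-query cost paid exactly~$|P|$ times, then absorb the first two summands into the claimed expression.

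For the small items in $P$, which I denote by $P_S$, the algorithm invokes the single-item query for each $j = 1, \ldots, j^\star + 1$. By \cref{lem:mik:querying:small} the one-time setup of the pointers $\kappa^r,\kappa^c,\rho,\eta^c$ costs $\OO(\frac{\log^2 n}{\eps^5})$ and each query costs $\OO(\max\{\log\frac{\log n}{\eps}, \frac{1}{\eps}\})$, giving a small-item contribution of $\OO(\frac{\log^2 n}{\eps^5} + |P_S|\cdot \max\{\log\frac{\log n}{\eps}, \frac{1}{\eps}\})$.

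For the big items $P_B = P \setminus P_S$, I would iterate over the relevant value classes $V_\ell$ with $\llmin \leq \ell \leq \lmax$ and over the $|\types| = \OO(\frac{\log^2 n}{\eps^4})$ item types (\cref{lem:hg:NoOfTypes}), querying the first $\bar n_t$ items of each type $t$ so that $\sum_t \bar n_t = |P_B|$. By \cref{lem:mik:querying:big}, setting up the lists $\alpha$ and $\configs_t$ and the item-type boundaries costs $\OO(\frac{|\configs'|}{\eps})$, each single-item query costs $\OO(\max\{\log|\configs'|, \log\frac{\log n}{\eps}\})$, and an additional $\OO(|\types|)$ overhead accounts for inspecting types with $\bar n_t = 0$. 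This yields a big-item contribution of $\OO(\frac{|\configs'|}{\eps} + |\types| + |P_B|\cdot \max\{\log|\configs'|, \log\frac{\log n}{\eps}\})$.

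Summing the two contributions, using $|P| = |P_S| + |P_B|$, $|\configs'| = \OO(|\types|) = \OO(\frac{\log^2 n}{\eps^4})$, and noting that $|P| \geq 1$ whenever the algorithm outputs anything, the fixed setup and iteration overhead is absorbed into the term $|P| \cdot \frac{\log^3 n}{\eps^3}\max\{\log|\configs'|, \log\frac{\log n}{\eps}\}$, yielding the claimed bound. The main effort is just the bookkeeping of the polylogarithmic and $\eps^{-k}$ factors; there is no new algorithmic content beyond repeatedly invoking the per-item queries from \cref{lem:mik:querying:big} and \cref{lem:mik:querying:small}.
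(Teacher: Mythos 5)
Your proposal follows essentially the same route as the paper's proof: split $P$ into small and big items, iterate over the value classes and item types, and charge each packed item the single-item query cost from \cref{lem:mik:querying:small,lem:mik:querying:big}. The only deviation is that you additionally charge the one-time data-structure generation to the query and absorb it via $|P|\geq 1$, whereas the paper treats that generation as part of the per-round setup (cf.\ the phrasing ``can be generated'' in those lemmas); the paper's accounting is safer here, since your absorption step is not valid in every parameter regime (e.g.\ $\frac{\log^2 n}{\eps^5}$ need not be $\OO\big(\frac{\log^3 n}{\eps^3}\log\frac{\log n}{\eps}\big)$ when $\frac1\eps$ is large compared to $\log n$).
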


\begin{proof}
	The small items belonging to~$P$ can be accessed in time~$\OO(j^\star\log n)$ by \cref{lem:data-structure}. By \cref{lem:mik:querying:small}, their knapsacks can be determined in time~$\OO\big(\log n + \max \big\{ \log \frac{\log n}{\eps}, \frac{1}{\eps}\big\} \big)$.
	
	For big items, we consider again at most~$\OO\big(\frac{\log n}{\eps^2}\big)$ many value classes individually. In time~$\OO\big(\frac{\log n}{\eps^2}\big)$, we access the boundaries of the corresponding item types. In time~$\OO(\bar n_t \log n )$, we can access the~$\bar n_t$ items of type~$t$ belonging to our solutions by \cref{lem:data-structure}. \cref{lem:mik:querying:big} ensures that their knapsacks can be determined in time~$\OO\big(\log n + \log \frac{\log n}{\eps}\big)$.
	
	In total, this bounds the running time by~$\OO\big(|P|  \frac{\log^4n }{\eps^4} \log \frac{\log n}{\eps} \big)$. 
\end{proof}

\section{Knapsacks with Resource Augmentation}
\label{app:multidiffknapsack}\label{sec:MDK:aug}
In this section, we consider instances for \multiknapsack with many knapsacks and arbitrary capacities. 
We show how to efficiently maintain a $(1+\eps)$-approximation when given, as resource augmentation, $L$ additional knapsacks that have the same capacity as a largest knapsack in the input instance, where~$L \in   \big(\frac{\log n}\eps\big)^{\OO(\epsfrac)}$. While we may pack items into the additional knapsacks, an optimal solution is not allowed to use them.
The algorithm will again solve the LP relaxation of a configuration ILP and round the obtained solution to an integral packing. However, in contrast to the problem for identical knapsacks, not every configuration fits into every knapsack and we therefore cannot just reserve a fraction of knapsacks in order to pack the rounded configurations since the knapsack capacities might not suffice.
For this reason, we employ resource augmentation in the case of arbitrary knapsack capacities.

Again, we assume that item values are rounded to powers of $(1+\eps)$ which results in value classes~$V_\ell$ of items with value~$v_j = (1+\eps)^\ell$. We prove the following theorem. 

\mmdkthm*

\subsection{Algorithm}
\paragraph*{Data structures}
In this section, we maintain three different types of data structures. For storing every item~$j$ together with its size~$s_j$, its value~$v_j$, and the index of its value class~$\ell_j$, we maintain one balanced binary search tree where the items are sorted by non-decreasing time of arrival. For each value class~$V_\ell$, we maintain one balanced binary tree for sorting the items with~$\ell_j = \ell$ in order of non-decreasing size. We store the knapsacks sorted in non-increasing capacity in one balanced binary tree. 

\paragraph*{Algorithm} 
The algorithm we develop in this section is quite similar to the dynamic algorithm for \multiknapsack with identical capacities. First, we use  {oblivious} linear grouping for the current set of items to obtain item types. However, in contrast to identical knapsacks, one particular item may be big with respect to one knapsack, small with respect to another, and may not even fit in a third knapsack. Thus, we use the item types to partition the knapsacks into groups to simulate knapsacks with identical capacities; see Figure~\ref{fig:ordinary:groups}. Within one group, we give an explicit packing of the big items into slightly less knapsacks than belonging to the group by solving a configuration ILP. For packing small items, we would like to use a guess of the size of small items per groups and later use again \textsc{Next Fit} to pack them integrally. However, since items classify as big in one knapsack group and as small in another group, instead of guessing the size of small items per knapsack group, we incorporate their packing into the configuration ILP by reserving sufficient space for the small items in each group. More precisely, we assign items as big items via configurations or as small items by number to the various groups. The remainder of the algorithm is straight-forward: we relax the integrality constraint to find a fractional solution and use the tools developed in \cref{lem:hg:roundingConfigLP,lem:hg:packP} to obtain an integral packing. Figure~\ref{fig:ordinary:solution} shows a possible solution including some knapsacks given by resource augmentation.

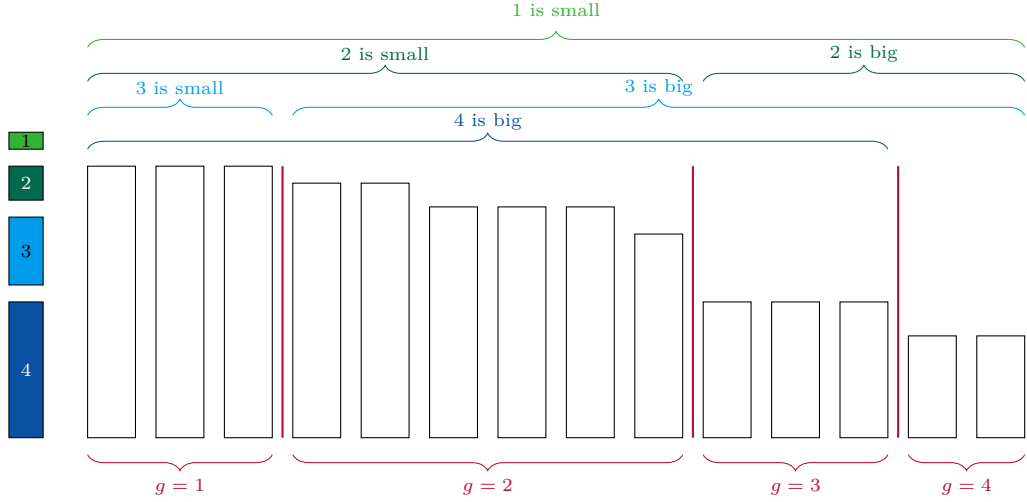
\begin{figure}[tbh]
	\centering 
	\begin{tikzpicture}[scale = .9]
		\foreach \x/\h/\n in {12/1.5/1, 9/2/2, 8/3/0, 5/3.4/2, 3/3.75/1, 0/4/2}{ %
			\foreach \y in {0,...,\n}{
				\draw[draw=black, fill=none] (1.25+\x+\y-.1,0) rectangle (1.25+\x+\y+.6, \h); 
			}
		}
		
		\foreach \y/\h/\col in {4.25/.25/utgreen, 3.5/.5/utforest, 2.25/1/utblue, 0/2/utnavy}{
			\draw[draw = black, fill = \col] (0, \y) rectangle (.5, \y + \h); 
		}
		
		\foreach \y/\h/\col/\l in {4.25/.25/black/1, 3.5/.5/white/2, 2.25/1/black/3, 0/2/white/4}{
			\node[\col, font = \scriptsize] at (0.25, \y + \h/2) {\l};
		}

		\draw [decorate,decoration={brace,amplitude=6pt},utnavy]  (1.15, 4.25) -- (12.85, 4.25) node [utnavy,midway,yshift =  +.37cm,font=\scriptsize] {4 is big};
		\draw [decorate,decoration={brace,amplitude=6pt},utblue]  (1.15, 4.75) -- (3.85, 4.75) node [utblue,midway,yshift =  +.37cm,font=\scriptsize] {3 is small};
		\draw [decorate,decoration={brace,amplitude=6pt},utblue]  (4.15, 4.75) -- (14.85, 4.75) node [utblue,midway,yshift =  +.37cm,font=\scriptsize] {3 is big};

		\draw [decorate,decoration={brace,amplitude=6pt},utforest]  (10.15, 5.25) -- (14.85, 5.25) node [utforest,midway,yshift =  +.37cm,font=\scriptsize] {2 is big};	
		\draw [decorate,decoration={brace,amplitude=6pt},utforest]  (1.15, 5.25) -- (9.85, 5.25) node [utforest,midway,yshift =  +.37cm,font=\scriptsize] {2 is small};
		\draw [decorate,decoration={brace,amplitude=6pt},utgreen]  (1.15, 5.75) -- (14.85, 5.75) node [utgreen,midway,yshift =  +.5cm,font=\scriptsize] {1 is small};	
		
		\foreach \x in {4,10,13} {
			\draw[thick, ubred] (\x, 0) -- (\x, 4); 
		}
		
		\draw[ubred,decorate,decoration={brace,amplitude=6pt}] (3.85, -.25) -- (1.15, -.25) node [ubred, midway, yshift = -12pt, font=\scriptsize] {$g=1$}; 
		\draw[ubred,decorate,decoration={brace,amplitude=6pt}] (9.85, -.25) -- (4.15, -.25) node [ubred, midway, yshift = -12pt, font=\scriptsize] {$g=2$};
		\draw[ubred,decorate,decoration={brace,amplitude=6pt}] (12.85, -.25) -- (10.15, -.25) node [ubred, midway, yshift = -12pt, font=\scriptsize] {$g=3$};
		\draw[ubred,decorate,decoration={brace,amplitude=6pt}] (14.85, -.25) -- (13.15, -.25) node [ubred, midway, yshift = -12pt, font=\scriptsize] {$g=4$};

	\end{tikzpicture}
	\caption{Item types and resulting knapsack groups}\label{fig:ordinary:groups}
\end{figure}

\begin{enumerate}
	\item[1)] \textbf{Linear grouping of big items:} Guess~$\lmax$, the index of the highest value class that belongs to \opt and use  {oblivious} linear grouping with~~$J'= J$ and $n' = n$ to obtain~$\types$, the set of item types~$t$ with their multiplicities~$n_t$.
	
	\item[2)] \textbf{Knapsack Grouping:} Consider the knapsacks sorted increasingly by their capacity and determine for each item size for which knapsacks a corresponding item would be big or small. This yields a set $\groups$ of $O( \frac{\log^2n}{\eps^4})$ many knapsack groups.
	Denote by $\smalls_g$ the set of all item types that are small with respect to group $g$, and by $\capa_g$ the total capacity of all knapsacks in group $g$. Let~$m_g$ be the number of knapsacks in group~$g$ and let~$\groups^{(1/\eps)}$ be the groups in~$\groups$ with~$m_g \geq \frac1\eps$. For each $g \in \groups^{(1/\eps)}$, define~$\capa_{g,\eps}$ as the total capacity of the smallest~$\eps m_g$ many knapsacks in $g$. Similar to the ILP for identical knapsacks, the ILP reserves some knapsacks to pack small \enquote{cut} items. We distinguish between $\groups^{(1/\eps)}$ and $\groups\setminus\groups^{(1/\eps)}$ to restrict only large enough groups $g$, i.e, $g \in \groups^{(1/\eps)}$, to the $(1-\eps)m_g$ most valuable knapsacks of $g$.
	
	\item[3)] \textbf{Configurations:} For each group $g\in \groups$, create all possible configurations consisting of at most $\frac{1}{\eps}$ items which are big with respect to knapsacks in $g$.
	This amounts to $O((\frac{\log^2n}{\eps^4})^\epsfrac)$ configurations per group.
	Order the configurations decreasingly by size and denote the set of such configurations by $\configs_g = \{ c_{g,1}, c_{g,2} \ldots c_{g,k_g} \}$. 
	Let~$m_{g,\ell}$ be the total number of knapsacks in group~$g$ in which we could possibly place configuration~$c_{g,\ell}$. 
	Further, denote by~$n_{c,t}$ the number of items of type~$t$ in configuration~$c$, and by~$s_c$ and~$v_c$ the size and value of $c$ respectively. 
	
	\item[4)] \textbf{Configuration ILP:} Solve the following configuration ILP with variables~$y_c$ and~$z_{g,t}$. Here, $y_c$ counts how often a certain configuration~$c$ is used, and~$z_{g,t}$ counts how many items of type~$t$ are packed in knapsacks of group~$g$ if type~$t$ is small with respect to~$g$.
	Note that by the above definition of $\configs_g$, we may have duplicates of the same configuration for several groups. 
	\begin{equation}\tag{P}\label{eq:mmdk:ilp}
		\begin{array}{llcll}
			\max & \displaystyle{\sum_{g \in \groups} \sum_{c \in \configs_g} y_{c} v_{c} + \sum_{g \in \groups}\sum_{t \in \smalls_g} z_{g,t} v_t } \\
			\text{s.t. } & \displaystyle{\sum_{h = 1}^{\ell} y_{c_{g,h}} }& \leq & m_{g,\ell}& \text{for all } g \in \groups, \ell \in [k_g] \\
			& \displaystyle{\sum_{c \in \configs_g} y_c  }& \leq & (1-\eps) m_{g} & \text{for all } g \in \groups^{(1/\eps)} \\
			& \displaystyle{\sum_{c \in \configs_g} y_{c} s_{c_{g,h}} + \sum_{t \in \smalls_g} z_{g,t} s_t  }  & \leq & \capa_{g} & \text{for all } g \in \groups\setminus \groups^{(1/\eps)} \\
			& \displaystyle{\sum_{c \in \configs_g} y_{c} s_{c_{g,h}} + \sum_{t \in \smalls_g} z_{g,t} s_t  }  & \leq & \capa_{g} - \capa_{g,\eps}& \text{for all } g \in \groups^{(1/\eps)} \\	 
			& \displaystyle{\sum_{g \in \groups} \sum_{c \in \configs_g} y_{c} n_{c,t} +\sum_{g \in \groups: t \in \smalls_g} z_{g,t}  } & \leq & n_t & \text{for all } t \in \types\\	
			& y_{c} & \in & \mathbb{Z}_{\geq 0} & \text{for all } g\in\groups, c \in \configs_g \\
			& z_{g,t} & \in & \mathbb{Z}_{\geq 0} & \text{for all } t \in \types, g \in \groups \\
			& z_{g,t} & = & 0 & \text{for all} t \in \types, g \in \groups \,:\, t \notin \smalls_g
		\end{array}
	\end{equation}
	The first inequality ensures that the configurations chosen by the ILP actually fit into the knapsacks of the respective group while the second inequality ensures that an~$\eps$-fraction of knapsacks in~$\groups_{1/\eps}$ remains empty for packing small \enquote{cut} items. The third and fourth inequality guarantee that the total volume of large and small items together fits within the designated total capacity of each group.
	Finally, the fifth inequality makes sure that only available items are used by the ILP.
	
	\item \textbf{Obtaining an integral solution:} After relaxing the above ILP and allowing fractional solutions, we are able to solve it efficiently. Let~$\opt_{\text{LP}}$ be an optimal (fractional) solution to~\eqref{eq:mmdk:ilp} with objective function value~$v_{\text{LP}}$. With \cref{lem:hg:roundingConfigLP} we obtain an integral solution that uses the additional knapsacks given by the resource augmentation with value at least~$v_{\lp}$. Let~$P_F$ denote this final solution.

	\item \textbf{Packing small items:} Observe that small item types~$t \in \smalls_g$ are only packed fractionally by~$P_F$. \cref{lem:hg:packP} provides us with a way to pack the small items integrally. 
\end{enumerate}

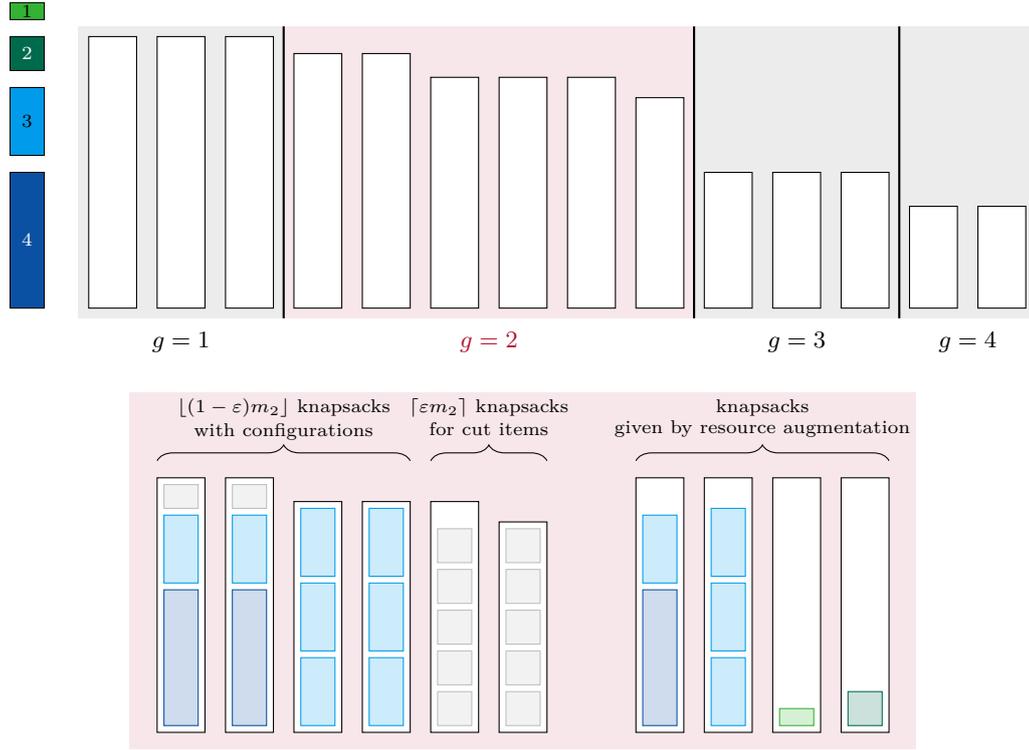
\begin{figure}[tbh]
	\centering 
	\begin{tikzpicture}[scale = .9]
		\fill[draw=none, fill = ubred!10] (4,-.15) rectangle (10,4.15); 	
		\fill[draw=none, fill = lightgray!30] (13,-.15) rectangle (15,4.15); 
		
		\foreach \x/\w in {1/3, 10/3}{
			\fill[draw=none, fill = lightgray!30] (\x,-.15) rectangle (\x+\w,4.15); 	
		}
		
		\fill[draw = none, fill = ubred!10] (1.75, -1.25) rectangle (13.25, -6.5); 
		\foreach \x/\h/\n in {12/1.5/1, 9/2/2, 8/3.1/0, 5/3.4/2, 3/3.75/1, 0/4/2}{ %
			\foreach \y in {0,...,\n}{
				\draw[draw=black, fill=white] (1.25+\x+\y-.1,0) rectangle (1.25+\x+\y+.6, \h); 
			}
		}
		
		\foreach \y/\h/\col in {4.25/.25/utgreen, 3.5/.5/utforest, 2.25/1/utblue, 0/2/utnavy}{
			\draw[draw = black, fill = \col] (0, \y) rectangle (.5, \y + \h); 
		}
		
		\foreach \y/\h/\col/\l in {4.25/.25/black/1, 3.5/.5/white/2, 2.25/1/black/3, 0/2/white/4}{
			\node[\col, font = \scriptsize] at (0.25, \y + \h/2) {\l};
		}
		
		\foreach \x in {4,10,13} {
			\draw[thick, black] (\x, -.15) -- (\x, 4.15); 
		}
		
		\node[font = \small] at (11.5,-.5) {$g=3$};	
		\node[font = \small] at (2.5,-.5) {$g = 1$};	
		\node[font = \small, ubred] at (7,-.5) {$g=2$};	
		\node[font = \small] at (14,-.5) {$g=4$};

		\foreach \x/\h/\n in {8/3.1/0, 5/3.4/2, 3/3.75/1, 10/3.75/3}{ %
			\foreach \y in {0,...,\n}{
				\draw[draw=black, fill=white, xshift = -2cm, yshift=-6.25cm] (1.25+\x+\y-.1,0) rectangle (1.25+\x+\y+.6, \h); 
			}
		}
		
		\begin{scope}[xshift = 2cm]
			\foreach \x/\y/\h/\col/\n in {0/-6.25/2/utnavy/1, 0/-4.15/1/utblue/1, 0/-3.05/.35/lightgray/1, 2/-6.25/1/utblue/1, 2/-5.15/1/utblue/1, 2/-4.05/1/utblue/1, 7/-6.25/2/utnavy/0, 7/-4.15/1/utblue/0, 8/-6.25/1/utblue/0, 8/-5.15/1/utblue/0, 8/-4.05/1/utblue/0, 9/-6.25/.25/utgreen/0, 10/-6.25/.5/utforest/0, 
				4/-6.25/.5/lightgray/1, 4/-5.65/.5/lightgray/1, 4/-5.05/.5/lightgray/1, 4/-4.45/.5/lightgray/1, 4/-3.85/.5/lightgray/1 }{
				\foreach \i in {0,...,\n}{
					\fill[fill =\col!20, draw = \col] (\x+.25+\i, \y+.1) rectangle (\x+.75+\i, \y+\h+.1);
				}
			}
			
			\draw [decorate,decoration={brace,amplitude=6pt},black]  (4.15, -2.25) -- (5.85,-2.25) node [black,midway,yshift =  20pt,font=\scriptsize, black] {$\lceil \eps m_2 \rceil$  knapsacks} ;
			\draw [decorate,decoration={brace,amplitude=6pt},black]  (.15, -2.25) -- (3.85, -2.25) node [black,midway,yshift =  20pt,font=\scriptsize, black] {$\lfloor (1-\eps) m_2 \rfloor$ knapsacks};
			\node[yshift = 11pt, font = \scriptsize] at (2,-2.25) {with configurations}; 
			\node[yshift = 12pt, font = \scriptsize] at (5,-2.25) {for cut items};
			
			\draw [decorate,decoration={brace,amplitude=6pt},black]  (7.15, -2.25) -- (10.85, -2.25) node [black,midway,yshift =  20pt,font=\scriptsize, black] {knapsacks};
			\node[yshift = 12pt, font = \scriptsize] at (9,-2.25) {given by resource augmentation};	
			
		\end{scope}

	\end{tikzpicture}
	\caption{Possible solution of the algorithm: Group~$2$ accommodates the knapsacks for cut small items within the original knapsacks. Group~1, 3, and 4 use resource augmentation.}\label{fig:ordinary:solution}
\end{figure}

\paragraph*{Queries} Since we do not maintain an explicit packing of any item, we define and update pointers for each item type that dictate the knapsacks where the corresponding items are packed. We note that special pointers are also used for packing items into the additional knapsacks given by resource augmentation. To stay consistent between two update operations, we cache query answers for the current round in the data structure that store items. We give the details in the next section.

\begin{itemize}
	\item \textbf{Single Item Query:} For a queried item, we retrieve its item type and check if it belongs to the smallest items of this type that our implicit solution packs. In this case, we use the pointer for this item type to determine its knapsack. 
	\item \textbf{Solution Value Query:} After having found the current solution, we use prefix computation for every value class for the corresponding item types to calculate and store the actual solution value. Then, we return this value on query. 
	\item \textbf{Entire Solution Query:} With prefix computation on each value class, we determine the packed items. Then, the single item query is used to determine their knapsack. 
\end{itemize}

\subsection{Analysis}
We start again by showing that the loss in the objective function value due to the linear grouping of items is bounded by a factor of at most~$\frac{(1-\eps)(1-2\eps)}{(1+\eps)^2}$ with respect to~$v(\opt)$.  To this end, let~\opt be an optimal solution to the current, non-modified instance and let $J$ be the set of items with values already rounded to powers of $(1+\eps)$.
Setting~$J' = J$, we apply \cref{theo:harmonic} to obtain the following corollary. Here,~$\opt_{\types}$ is a optimal solution for the instance induced by the item types~$\types$ with multiplicities~$n_t$.

\begin{corollary}\label{cor:mmdk:HarmonicGrouping}
	There exists an index~${\lmax}$ such that~$v(\opt_{\types}) \geq \frac{(1-\eps)(1-2\eps)}{(1+\eps)^2} v(\opt)$. 
\end{corollary}

We have thus justified the restriction to item types~$\types$ instead of packing the actual items.
In the next two lemmas, we show that \eqref{eq:mmdk:ilp} is a linear programming formulation of the \multiknapsack problem described by the set~$\types$ of item types and their multiplicities and that we can obtain a feasible integral packing (using resource augmentation) if we have a fractional solution (without resource augmentation) to \eqref{eq:mmdk:ilp}. Let~$v_{\text{LP}}$ be the optimal objective function value of the LP relaxation of~\eqref{eq:mmdk:ilp}. 

Similar to the proof of \cref{lem:mik:OPTofConfigILP}, we restrict an optimal solution $\opt_{\types}$ to the~$\lfloor (1-\eps) m_g\rfloor $ most valuable knapsacks of a group~$g$ if~$m_g \geq \frac 1\eps$ and otherwise we do not restrict the part of the solution corresponding to a group~$g$ with~$m_g < \frac 1\eps$. %

\begin{restatable}{lemma}{mmdkOptOfConfigLP}\label{lem:mmdk:OptOfConfigLP}
	It holds that~$v_{\text{LP}} \geq (1-2\eps) v(\opt_{\types})$. 
\end{restatable}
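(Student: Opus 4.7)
The plan is to exhibit a fractional solution to~\eqref{eq:mmdk:ilp} with objective value at least $(1-2\eps) v(\opt_{\types})$; by LP optimality this yields the claimed bound. The construction starts from a fixed optimal packing realizing $v(\opt_{\types})$ and restricts it, group by group, to a carefully chosen subset of knapsacks before reading off the LP variables.

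For each group $g \in \groups^{(1/\eps)}$ (that is, $m_g \geq 1/\eps$), I would order the $m_g$ knapsacks of $g$ by decreasing value of the items that $\opt_{\types}$ places inside, and retain only the top $k_g := \lfloor (1-\eps) m_g \rfloor$ of them. Since $m_g \geq 1/\eps$, one has $k_g \geq (1-\eps) m_g - 1 \geq (1-2\eps) m_g$. Invoking the standard averaging fact that the sum of the $k$ largest among $m$ nonnegative reals is at least a $k/m$ fraction of their total sum, the items retained in $g$ carry at least a $(1-2\eps)$ fraction of the value that $\opt_{\types}$ packs into group $g$. For groups $g \notin \groups^{(1/\eps)}$, I would leave $\opt_{\types}$ untouched. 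Summing over all groups, the retained sub-packing has value at least $(1-2\eps) v(\opt_{\types})$.

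Now I translate this sub-packing into LP variables: define $y_c$ as the number of retained knapsacks in which $\opt_{\types}$ places precisely the configuration $c \in \configs_g$ of big items (each such knapsack holds at most $1/\eps$ big items, since a big item in a knapsack $i$ of $g$ has size at least $\eps \capa_i$), and define $z_{g,t}$ as the total count of small items of type $t \in \smalls_g$ that $\opt_{\types}$ packs into retained knapsacks of group $g$. Because the knapsack grouping classifies each item type consistently as small or big across the whole group, every retained item is accounted for exactly once and the LP objective of $(y,z)$ equals the retained sub-packing's value.

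It remains to check feasibility. The cardinality constraint $\sum_{c \in \configs_g} y_c \leq (1-\eps) m_g$ follows from $\sum_c y_c = k_g \leq (1-\eps) m_g$; the prefix constraint $\sum_{h=1}^{\ell} y_{c_{g,h}} \leq m_{g,\ell}$ is immediate, since each of the first $\ell$ configurations (sorted by decreasing size) was placed by $\opt_{\types}$ into a knapsack of capacity at least $s_{c_{g,\ell}}$, and only $m_{g,\ell}$ such knapsacks exist in $g$; the multiplicity constraint is inherited from $\opt_{\types}$'s feasibility. The main obstacle is the capacity constraint $\sum_{c \in \configs_g} y_c s_c + \sum_{t \in \smalls_g} z_{g,t} s_t \leq \capa_g - \capa_{g,\eps}$ for $g \in \groups^{(1/\eps)}$: the retained items in $g$ occupy at most the combined capacity of the $k_g$ retained knapsacks, and any $k_g \leq (1-\eps) m_g$ knapsacks of $g$ have combined capacity at most that of the $(1-\eps) m_g$ \emph{largest} ones, namely $\capa_g - \capa_{g,\eps}$. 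The corresponding bound $\capa_g$ for $g \notin \groups^{(1/\eps)}$ is trivial.
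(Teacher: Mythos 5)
Your proposal is correct and follows essentially the same route as the paper's proof: restrict $\opt_{\types}$, within each group $g \in \groups^{(1/\eps)}$, to its $\lfloor(1-\eps)m_g\rfloor$ most valuable knapsacks (keeping small groups untouched), read off $(y,z)$ from this sub-packing, check the constraints, and lose only a factor $\lfloor(1-\eps)m_g\rfloor/m_g \geq 1-2\eps$ thanks to $m_g \geq 1/\eps$. The only cosmetic difference is that you verify the prefix and capacity constraints explicitly via a total-volume argument, where the paper argues by repacking the retained knapsacks into the $\lfloor(1-\eps)m_g\rfloor$ largest ones; both arguments are equivalent here.
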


\begin{proof}
	We show the statement by explicitly stating a solution~$(y,z)$ that is feasible for~\eqref{eq:mmdk:ilp} and achieves an objective function value of at least~$(1-2\eps) v(\opt_{\types})$. 
	
	Consider a feasible optimal packing $\opt_{\types}$ for item types. The construction of~$(y,z)$ considers each group~$g \in \groups$ separately. We fix a group~$g \notin \groups^{(1/\eps)}$. Let~$y_c$ count how often a configuration~$c \in \configs_g$ is used in $\opt_{\types}$ and let~$z_{g,t}$ denote how often an item that is small with respect to~$g$ is packed by~$\opt_{\types}$ in group~$g$. By construction, the first and the third constraint of~\eqref{eq:mmdk:ilp} are satisfied. The part of the solution~$(y,z)$ corresponding to group~$g$ achieves the same value as~$\opt_{\types}$ restricted to this group. 	
	
	If~$g \in \groups^{(1/\eps)}$, i.e., if there are at least~$\frac{1}{\eps}$ knapsacks in group~$g$, consider the~$\lfloor (1-\eps)m_g \rfloor$ most valuable knapsacks in group~$g$ with respect to $\opt_{\types}$. Define~$y_c$ to count how often $\opt_{\types}$ uses configuration~$c \in \configs_c$ in this reduced knapsack set and let~$z_{g,t}$ denote how often $\opt_{\types}$ uses item type~$t \in \smalls_g$ in these knapsacks. Clearly, this solution satisfies the first constraint of~\eqref{eq:mmdk:ilp}. By construction,~$\sum_{c \in \configs_g} y_c \leq \lfloor (1-\eps)m_g \rfloor$ and, hence, the second constraint of the ILP is also satisfied. Clearly, the $\lfloor (1-\eps)m_g \rfloor$ most valuable knapsacks can be packed into the $\lfloor (1-\eps)m_g \rfloor$ largest knapsacks in~$g$, which implies the feasibility for the fourth constraint of the ILP. Observe that $\lfloor (1-\eps) m_g \rfloor \geq (1-\eps) m_g - 1 \geq (1-2\eps) m_g$. Thus, the value of the corresponding packing is at least a~$(1-2\eps)$-fraction of the value that~$\opt_{\types}$ obtains with group~$g$. 
	
	As~$(y,z)$ uses no more items of a certain item type than~$\opt_{\types}$ does, the last constraint of the ILP is also satisfied. Hence, $(y,z)$ is feasible and \[
	v_{\text{LP}} \geq \sum_{g \in \groups} \Big(\sum_{c \in \configs_g} y_c v_c + \sum_{t \in \smalls_g} z_{g,t} v_t\Big) \geq (1-2\eps) v(\opt_{\types}),
	\]
	with which we conclude the proof. 
\end{proof}

The next corollary shows how to round any fractional solution of~\eqref{eq:mmdk:ilp} to an integral solution (possibly) using additional knapsacks given by resource augmentation. It follows immediately from \cref{lem:hg:roundingConfigLP} if we bound the number of variables in~\eqref{eq:mmdk:ilp}. To this end, we observe that~$|\groups|$ and $|\types|$ are in $\OO \big( \frac{\log^2 n}{\eps^4} \big)$, and~$|\configs_g\ | \in \big(\frac{\log n}{\eps}\big)^{\OO(1/\eps)}$ for every group~$g \in \groups$. Let~$L'$ denote the exact number of variables and let~$L = L' + |\groups|$. Thus,~$ L \in \big(\frac{\log n}{\eps}\big)^{\OO(1/\eps)}$.

\begin{corollary}\label{cor:mmdk:RoundSolOFConfigLP}
	Any feasible solution~$(y,z)$ of the LP relaxation of~\eqref{eq:mmdk:ilp} with objective value~$v$ can be rounded to an integral solution with value at least~$v$ using {at most $L$ extra knapsacks.}  
\end{corollary}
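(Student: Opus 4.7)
The plan is to derive the statement by a direct application of \Cref{lem:hg:roundingConfigLP}, after first reducing to a basic feasible solution whose number of non-zero variables we can control by the size of the resource augmentation $L$.

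First, I would verify that the LP relaxation of \eqref{eq:mmdk:ilp} fits into the framework of \Cref{lem:hg:roundingConfigLP}. The ``elements'' $e \in E$ are the configurations $c \in \configs_g$ (for every group $g \in \groups$) together with the individual items of type $t$ assigned to groups in which $t$ is small; the ``knapsack types'' are the groups $g \in \groups$. The first constraint set in \eqref{eq:mmdk:ilp} is a knapsack-existence constraint (only $m_{g,\ell}$ knapsacks of group $g$ can host configuration $c_{g,\ell}$ or larger); the second constraint is the knapsack-count constraint with $\beta = 1-\eps$; the third and fourth are capacity constraints with reserved space $s = \capa_{g,\eps}$ (zero when $g \notin \groups^{(1/\eps)}$); and the fifth is the item-availability constraint. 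Hence the LP indeed has the shape required by \Cref{lem:hg:roundingConfigLP}.

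Second, given any feasible solution $(y,z)$ of value $v$, apply Gaussian elimination on the tight constraints to obtain a basic feasible solution $(y^\ast, z^\ast)$ with objective value at least $v$. The number of strictly positive variables $n(y^\ast,z^\ast)$ is at most the number of linear constraints (apart from non-negativity), which is
\[
\sum_{g \in \groups} k_g \;+\; |\groups^{(1/\eps)}| \;+\; |\groups| \;+\; |\groups| \;+\; |\types|.
\]
Using $|\types| = \OO(\log^2 n / \eps^4)$ from \Cref{lem:hg:NoOfTypes}, $|\groups| = \OO(\log^2 n / \eps^4)$, and $k_g \le \bigl(\OO(\log^2 n / \eps^4)\bigr)^{1/\eps}$ (each configuration has at most $\frac 1 \eps$ big item types), this sum is at most $(\log n / \eps)^{\OO(1/\eps)} \le L$.

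Third, I would invoke \Cref{lem:hg:roundingConfigLP} on $(y^\ast, z^\ast)$ to obtain an integral solution of value at least $v$, using at most $n(y^\ast,z^\ast) \le L$ additional knapsacks. The one subtlety is that \Cref{lem:hg:roundingConfigLP} requires each rounded-up element to be placeable in some extra knapsack; this is where the resource augmentation model helps, because every additional knapsack has capacity equal to that of the largest input knapsack, so any configuration $c \in \configs_g$ or single item packed by $(y^\ast,z^\ast)$ fits. The main -- and only -- obstacle is the variable count above, which is designed to match the chosen value of $L$; once that bound is in place the result is immediate.
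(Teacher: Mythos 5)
Your proposal is correct and follows essentially the same route as the paper, which obtains the corollary as an immediate application of \Cref{lem:hg:roundingConfigLP} after noting that \eqref{eq:mmdk:ilp} falls into the lemma's class of packing ILPs and that the number of non-zero variables is at most $L$. The intermediate reduction to a basic feasible solution is harmless but unnecessary: the total number of variables of the LP is already $\big(\tfrac{\log n}{\eps}\big)^{\OO(1/\eps)} \le L$, so $n(y,z)\le L$ holds for every feasible solution without passing to a vertex.
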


In the next lemma, we bound the value obtained by our algorithm in terms of~$v(\opt)$, for an optimal solution \opt. Let~$P_F$ be the solution returned by our algorithm. 

\begin{lemma}\label{lem:mmdk:Combine}
	$v(P_F) \geq \frac{(1-2\eps)^2(1-\eps)}{(1+\eps)^2} v(\opt)$. 
\end{lemma}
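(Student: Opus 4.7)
The result follows by chaining together the three approximation guarantees already established for the different stages of the algorithm, namely dynamic linear grouping, setting up the configuration LP, and rounding its fractional optimum into an integral packing (with resource augmentation).

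First, I would fix the iteration in which the guess $\lmax$ is the true index of the highest value class used by $\opt$. Since the algorithm keeps the best solution over all guesses, it suffices to prove the bound for this particular guess. For this guess the set of item types $\types$ produced by dynamic linear grouping satisfies, by \Cref{cor:mmdk:HarmonicGrouping}, the bound
\[
v(\opt_{\types})\;\geq\;\frac{(1-\eps)(1-2\eps)}{(1+\eps)^{2}}\,v(\opt).
\]

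Next, \Cref{lem:mmdk:OptOfConfigLP} shows that the LP relaxation of~\eqref{eq:mmdk:ilp} admits a fractional solution whose objective is at least $(1-2\eps)\,v(\opt_{\types})$; therefore $v_{\text{LP}}\geq (1-2\eps)\,v(\opt_{\types})$. Applying \Cref{cor:mmdk:RoundSolOFConfigLP} to an optimal fractional solution yields an integral solution, using at most the $L$ additional knapsacks given by the resource augmentation, whose value is at least $v_{\text{LP}}$. Finally, \Cref{lem:hg:packP} (invoked group-by-group after the rounding step) guarantees that the small items counted by the variables $z_{g,t}$ can actually be packed integrally, so no value is lost in this last stage. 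In particular, $v(P_F)\geq v_{\text{LP}}$.

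Chaining the three inequalities gives
\[
v(P_F)\;\geq\;v_{\text{LP}}\;\geq\;(1-2\eps)\,v(\opt_{\types})\;\geq\;\frac{(1-\eps)(1-2\eps)^{2}}{(1+\eps)^{2}}\,v(\opt),
\]
which is the claimed bound. I do not expect a genuine obstacle here; the only point that deserves a line of care is that the rounding in \Cref{cor:mmdk:RoundSolOFConfigLP} consumes additional knapsacks from the resource augmentation budget, so one must verify that the number of nonzero variables in a basic feasible solution of~\eqref{eq:mmdk:ilp} is bounded by $L$. This follows because a basic feasible solution has at most $\OO(|\types|+|\groups|)=\OO(\log^{2}n/\eps^{4})$ nonzero variables plus the $\OO((\log^{2} n/\eps^{4})^{1/\eps})$ configurations per group, which is within the $L=(\log n/\eps)^{\OO(1/\eps)}$ budget assumed in the statement of the theorem.
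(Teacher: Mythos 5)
Your proposal is correct and follows essentially the same route as the paper: fix the iteration with the correct guess of $\lmax$, chain \Cref{cor:mmdk:HarmonicGrouping}, \Cref{lem:mmdk:OptOfConfigLP}, and \Cref{cor:mmdk:RoundSolOFConfigLP}, and handle the fractionally packed small items via \Cref{lem:hg:packP}. The only detail the paper spells out more explicitly is where the extra knapsacks for the cut small items come from in each group (the $\lceil \eps m_g\rceil$ knapsacks reserved by the LP for $g\in\groups^{(1/\eps)}$, and one resource-augmentation knapsack per group with $m_g<\frac1\eps$), but your group-by-group invocation of \Cref{lem:hg:packP} captures the same argument.
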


\begin{proof}
	Fix an optimal solution~$\opt$. Observe that our algorithm outputs the solution~$P_F$ with the maximum value over all guesses of~$\lmax$, the index of the highest value class in \opt. Hence, we find a guess~$\lmax$ and a corresponding solution~$P$ that satisfies~$v(P) \geq \frac{(1-2\eps)^2(1-\eps)}{(1+\eps)^2} v(\opt)$. 
	
	Let~$\lmax = \max \{\ell: V_\ell \cap \opt \neq \emptyset \}$. Then,~$\lmax$ is considered in some round of the algorithm. Let~$v_{\text{ILP}}$ be the optimal solution value of the configuration ILP~\eqref{eq:mmdk:ilp} and let~$v_{\text{LP}}$ be the solution value of its LP relaxation. \cref{cor:mmdk:RoundSolOFConfigLP} provides a way to round the corresponding LP solution~$(y,z)$ to an integral solution~$(\bar{y},\bar{z})$ using at most~$L$ extra knapsacks with objective function value at least~$v_{\text{LP}} \geq v_{\text{ILP}}$. The construction of~$(\bar y, \bar z)$ guarantees that only small items in the original knapsacks might be packed fractionally. 
	
	Consider one particular group~$g$. \cref{lem:hg:packP} shows how to pack the small items assigned by~$(\bar z_g)$ to group~$g$ into~$\lceil (1+ \eps) m_g \rceil$ knapsacks. If~$m_g < \frac1\eps$, we use one extra knapsack per group to pack the cut items. If~$m_g \geq \frac1\eps$, then~$g \in \groups^{(1/\eps)}$ which implies that the configuration ILP (and its relaxation) already reserved~$\lceil \eps m_g \rceil$ knapsacks of this group for packing small items. Hence, the just obtained packing~$P$ is feasible. 	
	By \cref{cor:mmdk:HarmonicGrouping,lem:mmdk:OptOfConfigLP},
	\begin{align*}
		v(P_F) & \geq v(P) \geq \frac{(1-2\eps)^2(1-\eps)}{(1+\eps)^2} v(\opt), 
	\end{align*}	
	which gives the desired bound on the approximation ratio. 
\end{proof}

Now, we bound the running time of our algorithm. %
\begin{lemma}\label{lem:mmdk:RunningTime}
	In time $\big(\frac1\eps \log n \big)^{\OO(\epsfrac)} (\log m \log S_{\max} \log \vvmax)^{\OO(1)}$, the dynamic algorithm executes one update operation. 
\end{lemma}

\begin{proof} 
	By assumption, upon arrival, the value of each item is rounded to natural powers of~$(1+\eps)$. The algorithm starts with guessing~$\lmax$, the largest index of a value class to be considered in the current iteration. There are~$\log \vvmax$ many guesses possible, where~$\vvmax$ is the highest value appearing in the current instance. 
	
	By \cref{lem:hg:RunningTime}, the  {oblivious} linear grouping of all items has at most~$\OO\big(\frac{\log^4 n}{\eps^4}\big)$ iterations. 
	
	Let the knapsacks be sorted by increasing capacity and stored in a binary balanced search tree as defined in \cref{lem:data-structure}. Then, the index of the smallest knapsack~$i$ with~$\capa_i \geq \capa$ or the largest knapsack with~$\capa_i \leq \capa$ can be determined in time~$\OO(\log m)$, where~$\capa$ is a given number. Thus, the knapsack groups depending on the item types can be determined in time~$\OO\big(\log m \frac{\log^2 n}{\eps^4}\big)$ as the number of item types is bounded by~$\OO\big(\frac{\log^2 n}{\eps^4}\big)$. 	
	The number of big items per knapsack is bounded by~$\frac1\eps$ and, hence, the number of configurations is bounded by 	$\OO\Big(\frac{\log^2 n}{\eps^4}\big(\frac{\log^2n}{\eps^4}\big)^{\epsfrac}\Big)$. 
	
	Let~$N$ be the number of variables in the configuration ILP. We have~$N \in \big(\frac{\log n}{\eps}\big)^{\OO(1/\eps)}$. Hence, there is a polynomial function~$g(N, \log \capa_{\max}, \log \vvmax)$ that bounds the running time of finding an optimal solution to the LP relaxation of the configuration ILP~\cite{BertsimasT1997,PapadimitriouS82}. Clearly, the computational complexity of setting up and rounding the fractional solution is dominated by solving the LP. 	
	Thus,~$\big(\frac1\eps \log n \big)^{\OO(\epsfrac)} (\log m \log S_{\max} \log \vvmax)^{\OO(1)}$ bounds the running time.

	In similar time, we can store~$y$ and~$z$, the obtained solutions to the configuration LP. Let~$\bar y$ and~$\bar z$ be the variables obtained by (possibly) rounding down~$y$ and~$z$ and let~$\tilde y$ and~$\tilde z$ be the variables corresponding to the resource augmentation as in \cref{lem:hg:roundingConfigLP}. The time needed to obtain these variables is dominated by solving the LP relaxation~of~the~configuration~ILP.
\end{proof}

\paragraph*{Answering Queries}

Since we only store implicit solutions, it remains to show how to answer the corresponding queries. In order to determine the relevant parameters of a particular item, we assume that all items are stored in one balanced binary search tree that allows us to access one item in time~$\OO(\log n)$ by \cref{lem:data-structure}. We additionally assume that this balanced binary search tree also stores the value class of an item. We use again the round parameter~$t(j)$ and the corresponding knapsack~$k(j)$ to cache given answers in order to stay consistent between two updates. If~$j$ was \textsc{not selected} in round~$t(j)$, we represent this by~$k(j) = 0$. 
We assume that these two parameters are stored in the same binary search tree that also stores the items and, thus, can be accessed in time~$\OO(\log n)$.

We now design an algorithm for non-cached items. The high-level idea is similar to the algorithm developed in \cref{sec:mik-and-mmdk} for identical knapsacks. As the knapsacks have different capacities in this section, the relative size of an item depends on the particular knapsack group: An item can be big with respect to one knapsack and small with respect to another. Thus, the distinction between small and big items does not hold for all knapsacks simultaneously anymore and needs to be handled carefully. More precisely, upon query of an item~$j$ of type~$t$, we start by determining the group~$\gamma_t$ in which the next item of type~$t$ is packed. The pointers and counters we use correspond mostly to the ones in \cref{sec:mik-and-mmdk} except that we additionally have a dependency on the particular group~$g$ for each parameter. Additionally, we use~$R^{(\eps)}_g$,~$R^{(y)}_g$ and~$R^{(z)}_g$ to refer to knapsacks given by resource augmentation for group~$g$.

If~$t$ is small with respect to~$\gamma_t$, then~$j$ is packed by \textsc{Next Fit} either as regular or as cut item. We use the two pointers~$\kappa_g^r$ for packing small items regularly in group~$g$ and~$\kappa_g^c$ for packing cut items. If there are at most~$\frac1\eps -1$ knapsacks in group~$g$, then~$\kappa_g^c$ points to the knapsack~$R^{(\eps)}_g$ given by resource augmentation. Otherwise, the configuration ILP left the smallest~$\lceil \eps m_g \rceil$ knapsacks in group~$g$ empty for packing cut small items. Further, we use~$R^{(z)}_{g,t}$ to refer to the knapsack given by resource augmentation that is used for packing one item of type~$t$ if the variable~$z_{g,t}$ was subjected to rounding. Since we may only pack as many items of type~$t$ in group~$g$ as indicated by the implicit solution, the counter~$\eta_t^S$ determines how many items of type~$t$ can still be packed in group~$\gamma_t$ if~$t$ is small with respect to~$\gamma_t$. 

If~$t$ is big with respect to~$\gamma_t$, then~$j$ is packed in the next slot for items of type~$t$ determined by the configuration ILP. To this end, we use again the counter~$\kappa_t$ to determine the knapsack where the next item of type~$t$ is packed and the counter~$\eta_t^B$ to determine how many items of type~$t$ can still be packed in knapsack~$\kappa_t$ if~$t$ is big with respect to~$\gamma_t$. The knapsack~$R^{(y)}_{c,g}$, for~$c \in \configs_g$, refers to the knapsack given by resource augmentation used when the variable~$y_{c,g}$ was subjected to rounding.

\cref{tab:mmdk:query} summarizes the  parameters and counters used to answer queries, and in Figure~\ref{fig:ordinary:queries}, we give an example of the current packing after some items have been queried. Next, we  define the data structures for answering queries before we formally explain how to answer queries.

\begin{figure}[tbh]
	\centering 
	\begin{tikzpicture}[scale = .9]
		\fill[draw=none, fill = ubred!10] (4,-.15) rectangle (10,4.15); 
		
		\foreach \x/\w in {1/3, 10/3/, 13/2}{
			\fill[draw=none, fill = lightgray!30] (\x,-.15) rectangle (\x+\w,4.15); 	
		}
		
		\fill[draw = none, fill = ubred!10] (1.25, -1.25) rectangle (13.75, -7.5); 

		\foreach \x/\h/\n in {12/1.5/1, 9/2/2, 8/3.1/0, 5/3.4/2, 3/3.75/1, 0/4/2}{ %
			\foreach \y in {0,...,\n}{
				\draw[draw=black, fill=white] (1.25+\x+\y-.1,0) rectangle (1.25+\x+\y+.6, \h); 
			}
		}
		
		\foreach \y/\h/\col in {4.25/.25/utgreen, 3.5/.5/utforest, 2.25/1/utblue, 0/2/utnavy}{
			\draw[draw = black, fill = \col] (0, \y) rectangle (.5, \y + \h); 
		}
		
		\foreach \y/\h/\col/\l in {4.25/.25/black/1, 3.5/.5/white/2, 2.25/1/black/3, 0/2/white/4}{
			\node[\col, font = \scriptsize] at (0.25, \y + \h/2) {\l};
		}
		
		\foreach \x in {4,10,13} {
			\draw[thick, black] (\x, -.15) -- (\x, 4.15); 
		}
		
		\node[font = \small] at (11.5,-.5) {$\gamma_1=3$};	
		\node[font = \small] at (11.5,-.9) {$\eta_1^S = 3$};
		
		\node[font = \small] at (2.5,-.5) {$\gamma_4=1$};	

		\node[font = \small, ubred] at (7,-.5) {$\gamma_2=\gamma_3=2$};	
		\node[font = \small, ubred] at (7,-.9) {$\eta_2^S = 1$};

		\foreach \x/\h/\n in {8/3.1/0, 5/3.4/2, 3/3.75/1, 10/3.75/1, 13/3.75/1}{ %
			\foreach \y in {0,...,\n}{
				\draw[draw=black, fill=white, xshift = -2.5cm, yshift=-6.25cm] (1.25+\x+\y-.1,0) rectangle (1.25+\x+\y+.6, \h); 
			}
		}
		
		\begin{scope}[xshift = 1.5cm]
			\foreach \x/\y/\h/\col/\n in {0/-6.25/2/utnavy/1, 0/-4.15/1/utblue/1, 0/-3.05/.35/lightgray/1, 2/-6.25/1/utblue/1, 2/-5.15/1/utblue/1, 2/-4.05/1/utblue/1, 7/-6.25/2/utnavy/0, 7/-4.15/1/utblue/0, 8/-6.25/1/utblue/0, 8/-5.15/1/utblue/0, 8/-4.05/1/utblue/0, 10/-6.25/.25/utgreen/0, 11/-6.25/.5/utforest/0, 4/-6.25/.5/lightgray/1, 4/-5.65/.5/lightgray/1, 4/-5.05/.5/lightgray/1, 4/-4.45/.5/lightgray/1, 4/-3.85/.5/lightgray/1 }{
				\foreach \i in {0,...,\n}{
					\fill[fill =\col!20, draw = \col] (\x+.25+\i, \y+.1) rectangle (\x+.75+\i, \y+\h+.1);
				}
			}
			
			\foreach \x/\y in {0/-4.15}{
				\fill[fill = utblue, draw=none] (\x+.25, \y+.1) rectangle (\x+.75, \y+1.1);	
				\node [font = \scriptsize, black] at (\x + .5, \y + .6) {3};	
			}
			
			\foreach \x/\y in {0/-3.05,10/-6.25}{
				\fill[fill = utgreen, draw = none] (\x+.25, \y+.1) rectangle (\x + .75, \y + .35);
				\node[font = \scriptsize, black] at (\x + .5, \y + .225) {1};
			}
			
			\fill[fill = utforest, draw=none] (4.25,-6.15) rectangle (4.75, -5.65);
			\node[font = \scriptsize, white] at (4.5,-5.9) {2};

			\draw [decorate,decoration={brace,amplitude=6pt},black]  (4.15, -2.25) -- (5.85,-2.25) node [black,midway,yshift =  20pt,font=\scriptsize, black] {$\lceil \eps m_2 \rceil$  knapsacks} ;
			\draw [decorate,decoration={brace,amplitude=6pt},black]  (.15, -2.25) -- (3.85, -2.25) node [black,midway,yshift =  20pt,font=\scriptsize, black] {$\lfloor (1-\eps) m_2 \rfloor$ knapsacks};
			\node[yshift = 11pt, font = \scriptsize] at (2,-2.25) {with configurations}; 
			\node[yshift = 12pt, font = \scriptsize] at (5,-2.25) {for cut items};
			
			\draw [decorate,decoration={brace,amplitude=6pt},black]  (7.15, -2.25) -- (8.85, -2.25) node [black,midway,yshift =  14pt,font=\scriptsize, black] {knapsacks in $R^{(y)}_2$};	
			
			\draw [decorate,decoration={brace,amplitude=6pt},black]  (10.15, -2.25) -- (11.85, -2.25) node [black,midway,yshift =  14pt,font=\scriptsize, black] {knapsacks in $R^{(z)}_2$};

			\node [font=\small,black] at (1.5, -6.74) {$\kappa_3$};
			\node [font=\small,black] at (1.5, -7.25) {$\eta_3^B = 1$};
			
			\node [font=\small,black] at (2.5, -6.65) {$\kappa^{(r)}_2$};
			\node [font=\small,black] at (4.5, -6.65) {$\kappa^{(c)}_2$};
			\node [font=\small,black] at (4.5, -7.23) {$\rho^{(c)}_2 = 4$};

		\end{scope}
		
	\end{tikzpicture}
	\caption{Counters and pointers for answering queries: Gray rectangles inside knapsacks represent small items. The next item of type~2 (dark green) is placed in the knapsack given by resource augmentation~$R_2^{(z)}$ since~$\eta_2^S = \tilde z_{2,2}$. Items of type~1 (light green) already filled all their slots in group~$2$ and are now placed in group~$3$.}\label{fig:ordinary:queries}
\end{figure}
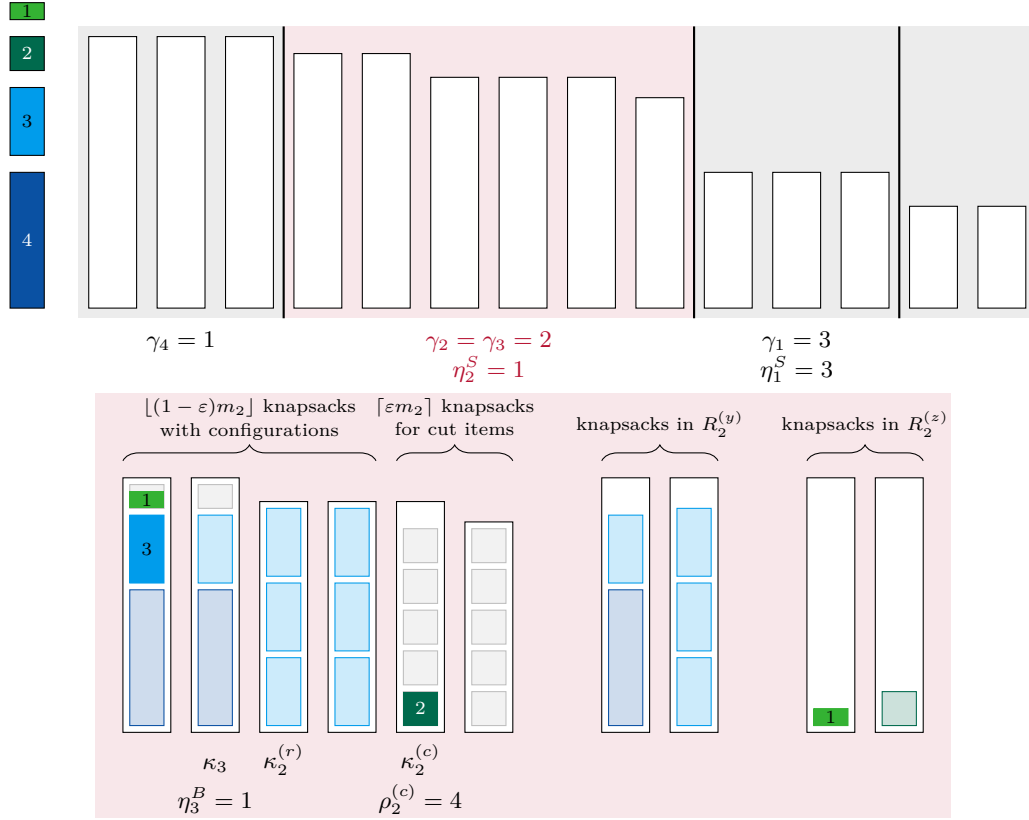

\begin{table}\caption{\label{tab:mmdk:query} Counters and pointers used during querying items}
	\begin{tabular}{c|l}
		Counter/Pointer & Meaning \\\hline 
		$\overline \configs_{g}$ & Configurations that are used by group~$g$ \\ 
		$\alpha_{c,g}$ & First knapsack with configuration~$c$ in group~$g$ \\
		$R_{c,g}^{(y)}$ & Knapsack in~$R^{(y)}$ used for group~$g$ and configuration~$c$ \\
		$R_{g,t}^{(z)}$ & Knapsack in~$R^{(z)}$ used for group~$g$ and type~$t$ \\
		$R_g^{(\eps)}$ & Knapsack in~$R^{(\eps)}$ used for group~$g$ with~$m_g < \frac1\eps$ \\ \hline
		$\groups_t$ & Knapsack groups where items of type~$t$ are packed \\
		$\configs_{g,t}$ & List of configurations~$c \in \overline \configs_g$ with~$n_{c,t} \geq 1$ \\
		$\gamma_t$ & Current knapsack group where items of type~$t$ are packed \\		
		$\kappa_{t}$ & Current knapsack for packing items of a big type~$t$ \\	
		$\eta_{t}^S$ & Remaining number of slots for items of type~$t$ in~$\gamma_t$ \\
		$\eta_{t}^B$ & Remaining number of slots for items of type~$t$ in~$\kappa_t$ \\\hline
		$\kappa_g^r$ & Current knapsack in~$g$ for packing small items regularly \\
		$\kappa_g^c$ & Current knapsack in~$g$ (or in~$R^{(\eps)}$) for packing cut small items \\
		$\rho_g^r$ & Remaining capacity in~$\kappa_g^r$ for packing small items \\
		$\rho_g^c$ & Remaining number of slots for small items in~$\kappa_g^c$  \\
	\end{tabular}
\end{table}

\subparagraph{Data structures} We assume that the knapsacks are sorted by non-increasing capacity and stored in one binary search tree together with~$S_i$, the capacity of the knapsacks. The knapsacks given by resource augmentation are stored in three different lists,~$R^{(y)}$,~$R^{(z)}$, and~$R^{(\eps)}$, needed due to rounding~$y$ or~$z$ or because~$m_g < \frac{1}{\eps}$, respectively. 
The knapsack groups are stored in the list~$\groups$ sorted by non-increasing knapsack capacity. For each group~$g$, we additionally store the number~$m_g$ of knapsacks belonging to~$g$. %

Let~$\bar y, \tilde y, \bar z,$ and~$\tilde z$ be the implicit solution of the algorithm. Here~$\bar *$ refers to packing configurations or items into the original knapsacks while~$\tilde *$ refers to the knapsacks given by resource augmentation. Let~$\overline \configs_g$ be the set of configurations~$c$ with~$\bar y_{c,g}+ \tilde y_{c,g} \geq 1$ ordered in non-increasing size~$s_c$ and stored in one list per group. In the following, we use the position of a configuration~$c \in \overline \configs_g$ in that list as the index of~$c$. For mapping the configurations to knapsacks, we assign the knapsacks~$\sum_{g'= 1}^{g-1} m_{g'} + \sum_{c'= 1}^{c-1} \bar y_{c',g} + 1, \ldots, \sum_{g'=1}^{g-1} m_{g'} +  \sum_{c'= 1}^c \bar y_{c',g}$  to configuration~$c$. For the knapsacks in the resource augmentation, we set~$R^{(y)}_{c,g} = \sum_{g'= 1}^{g-1} \sum_{c' \in \overline \configs_{g'}} \tilde y_{c',g'} + \sum_{c' \leq c } \tilde y_{c',g}$ for each group~$g$ and each configuration~$c \in \overline \configs_g$. 

For each item type~$t$, let~$\bar n_t$ denote the number of items of type~$t$ in the solution. We maintain a pointer~$\gamma_t$ to the group where the next queried item of type~$t$ is supposed to go. We initialize~$\gamma_t$ with the first group that packs items of type~$t$. Since the number of items of type~$t$ assigned to group~$g$ as small items is determined by~$\bar z_{g,t}+ \tilde z_{g,t}$, we additionally use the counter~$\eta_{t}^S$, initialized with~$\bar z_{\gamma_t,t}+ \tilde z_{\gamma_t,t}$, to reflect how many slots group~$\gamma_t$ still has for items of type~$t$. For accessing the knapsacks~$R^{(z)}$ given by resource augmentation, we set~$R^{(z)}_{g,t} = \sum_{g'= 1}^{g-1} \sum_{t' \in \types} \tilde z_{g',t'} + \sum_{t'=1}^t \tilde z_{g,t'}$ for each group~$g$ and item type~$t$. Note that~$z_{g,t} = 0$ holds if~$t$ is big with respect to~$g$.  

When packing small items in group~$g$, we use group pointers~$\kappa_{g}^r$ and~$\kappa_{g}^c$ to refer to the knapsack for packing items regularly or for packing cut items. The pointer~$\kappa_g^r$ is initialized with~$\kappa^r_g = \sum_{g' = 1}^{g-1} m_{g'} + 1$. Further, we use~$\rho_g^r$ to store the remaining capacity for small items in~$\kappa_g^r$ and initialize it with~$\rho_g^r = S_{\kappa^r_g} - s_1$, where~$s_1$ is the size of the first configuration in group~$g$.  If~$m_g \geq \frac1\eps$, we set~$\kappa^c_g = \sum_{g' = 1}^{g-1} m_{g'} + \lfloor (1-\eps) m_g \rfloor   + 1$, while~$m_g < \frac1\eps$ implies that~$\kappa^c_g$ points to the knapsack~$R^{(\eps)}_g$ given by resource augmentation. The counter~$\rho_g^c$ stores again the remaining slots for cut small items in group~$g$ and is initialized with~$\frac1\eps$. 

If~$t$ is big with respect to~$\gamma_t$, we use the pointer~$\kappa_t$ to direct us to the particular knapsack where the next item of type~$t$ goes, while~$\eta_t^B$ stores how many slots~$\kappa_t$ still has available for items of type~$t$. Initially,~$\kappa_t$ points to the first knapsack with a configuration that contains~$t$ in the first group where~$t$ is packed as big item. 
If~$c$ is the corresponding configuration, we set~$\eta_t^B = n_{c,t}$. Because of resource augmentation,~$\kappa_t$ may point to a knapsack in~$R^{(y)}$, the additional knapsacks for rounding~$y$.

\subparagraph*{Answering Item Queries.}
\begin{enumerate}
	\item[1)] \textbf{Check cache.} Let~$\tau$ be the current round and let~$j$ be the queried item. If~$t(j) = \tau$, return~$k(j)$. 
	\item[2)] \textbf{Answer queries for non-cached items.} Set~$t(j) = \tau$ and determine~$t$, the type of~$j$. Let~$\gamma$ be the group of~$t$. If~$\gamma = 0$, return \textsc{not selected}. Decide if~$j$ is small or big with respect to the group~$\gamma$. 
	
	\textbf{Small items.} If~$\eta_{t}^S  = z''_{\gamma,t}$, determine if~$j$ goes to the resource augmentation~$R^{(z)}$: 
	
	If~$z_{\gamma,t}'' = 1$, set~$k(j)$ to the knapsack in~$R^{(z)}$ reserved for~$z''_{\gamma,t}$ and increase~$\gamma_t$ to the next group for type~$t$. If no such group exists, set~$\gamma_t = 0$. Otherwise, update~$\eta_t$ and possibly~$\kappa_t$ accordingly. 
	
	If~$z_{\gamma,t}'' = 0$, increase~$\gamma_t$ to the next group for type~$t$ and go to Step~2. If no such group exists, set~$\gamma_t=0$,~$k(j) = 0$, and return \textsc{not selected.}
	
	Otherwise, determine if~$j$ is packed regularly or as a cut item. If~$s_t \leq \rho_{\gamma}^r$, return~$k(j) = \kappa_\gamma^r$ and decrease~$\rho_\gamma^r$ accordingly. Otherwise, return~$k(j) = \kappa_\gamma^c$ and decrease~$\rho_\gamma^c$ by one. If now 
	$\rho_\gamma^c = 0$, increase~$\kappa_\gamma^c$ by one and set~$\rho_\gamma^c = \frac 1\eps$. 
	
	\textbf{Big items.} If~$\gamma_t^B = 0$, return \textsc{not selected} and set~$k(j) = 0$. Otherwise, return~$k(j) = \kappa_t$ and decrease~$\eta_t$ by one. If this implies~$\eta_t = 0$, let~$c$ be the configuration of~$\kappa_t$.
	
	If~$\kappa_t \in R^{(y)}$, let~$c'$ be the next configuration for type~$t$ in group~$\gamma$ and update~$\kappa_t$ and~$\eta_t^B$ accordingly. If no such configuration exists, increase~$\gamma_t$ to the next group for type~$t$ and update~$\kappa_t$ and~$\eta_t^B$ accordingly. If no such group exists, set~$\gamma_t = 0$. 	
	
	If~$\kappa_t$ belongs to the original knapsacks and is the last knapsack assigned to configuration~$c$, check if there is resource augmentation for configuration~$c$. In this case, point~$\kappa_t$ to the knapsack reserved for rounding~$y_{c,\gamma}''$. Otherwise, let~$c'$ be the next configuration for type~$t$ in group~$\gamma$ and update~$\kappa_t$ and~$\eta_t^B$ accordingly. If no such configuration exists, increase~$\gamma_t$ to the next group for type~$t$ and update~$\kappa_t$ and~$\eta_t^B$ accordingly. If no such group exists, set~$\gamma_t = 0$. 
	
	Otherwise, increase~$\kappa_t$ by one and update~$\eta_t^B$ accordingly.	
\end{enumerate}

For calculating the value of the current solution, we need to calculate the total value of the first~$\bar n_t$ items. We do this by iterating through the value classes once and per value class, we iterate once through the list~$\types_\ell$ of item types for value class~$V_\ell$ to access the number~$\bar n_t$. Then, we use prefix computation twice in order to access the total value of the first~$\bar n_t$ items of type~$t$. Again, we do this computation once after each update operation. \cref{lem:mmdk:queriesSolVal} bounds the running time of these calculations and shows that incorporating these does not change the order of magnitude of the running time given in \cref{lem:mmdk:RunningTime}.

\subparagraph*{Answering the Solution Value Query.}
\begin{enumerate}
	\item[1)] \textbf{Value per item type.} For each item type~$t$, calculate~$v_{t}$, the total value of the first~$\bar n_t$ items with prefix computation.
	\item[2)] \textbf{Value.} Return~$\sum_{t \in \types} v_t$. 
\end{enumerate}

For returning the complete solution, we iterate once through the value classes and for each value class, we iterate through the list~$\types_\ell$ to access the number~$\bar n_t$. Then, we use prefix computation based on the indices of the items for accessing the first~$\bar n_t$ items of type~$t$. Then, we access and query each item individually. 

\subparagraph*{Answering the Solution Query.}
\begin{enumerate}
	\item[1)] For each item type~$t$, query the first~$\bar n_t$ items and return these items with their knapsacks. 
\end{enumerate}

We prove the parts of the next lemma again separately. 

\begin{lemma}
	The solution determined by the query algorithm is feasible as well as consistent and achieves the claimed total value. The query times of our algorithm are as follows. 
	\begin{enumerate}
		\item[(i)] Single item queries can be answered in time~$\OO\big(\log m + \frac{\log n}{\eps^2}\big)$.
		\item[(ii)] Solution value queries can be answered in time~$\OO(1)$. 
		\item[(iii)] Queries of the entire solution~$P$ are answered in time~$\OO\big(
		|P| \frac{\log^3 n }{\eps^4} \big( \log m + \frac{\log n}{\eps^2}\big) \big)$.  
	\end{enumerate}
\end{lemma}

\begin{lemma}\label{lem:mmdk:queries:correctness}
	The query algorithms return a feasible and consistent solution obtaining the total value given by the implicit solution.
\end{lemma}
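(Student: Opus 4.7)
The plan is to verify the three claims (consistency between updates, feasibility of the packing, and that the total value matches the implicit solution) by carefully tracing the pointer bookkeeping maintained by the query algorithm and matching it against the implicit solution $(y', y'', z', z'')$ produced by the rounding step.

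First, I would establish consistency. Between two updates the cached pair $(t(j),k(j))$ is the sole source of answers for already-queried items: if $t(j)=\tau$ the algorithm returns $k(j)$ without touching any pointer, so within a round each item receives a unique answer. Since the cache is written exactly once per item per round (after which $t(j)=\tau$ triggers the early return), and pointers are updated only on a non-cached branch, the order of item queries does not affect the total multiset of returned knapsack assignments.

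Second, I would verify that the algorithm realises the implicit solution correctly, type by type. For each item type $t$, I would argue by induction on the order in which items of type $t$ are queried that the pointer $\gamma_t$ walks through exactly those groups $g$ that receive a positive share of $t$ under $(y',y'',z',z'')$, that $\kappa_t$ together with $\eta_t$ exhaust the configurations in $\configs'_g$ containing $t$ (including the resource-augmentation copies indexed by $R^{(y)}$), and that on the small side the pair $(\kappa_\gamma^r,\kappa_\gamma^c)$ together with $\rho_\gamma$ implements a First-Fit packing into the knapsacks of group $g$ with the reserved capacity $\capa_g - \capa_{g,\eps}$ (respectively a single $R^{(\eps)}$ bin for the small groups) as prescribed by \cref{lem:hg:packP}. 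Consequently, the number of items of type $t$ returned with a non-zero knapsack equals $\bar n_t := \sum_{g,c} (y'_c+y''_c)\, n_{c,t} + \sum_g (z'_{g,t}+z''_{g,t})$, matching the count in the implicit solution, and any further item of type $t$ reaches $\gamma_t = 0$ and is returned as \textsc{not selected}.

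Third, I would deduce feasibility and the value bound. The big-item side is automatic: each configuration $c$ assigned to a knapsack $i \in g$ satisfies $s_c \leq \capa_i$ by construction of $\configs_g$, and distinct knapsacks are used thanks to the linear scan over the knapsack list ordered within each group. On the small-item side, since $(y',z')+(y'',z'')$ leaves in every original knapsack of group $g$ an aggregate free capacity at least equal to the total small-item volume placed in $g$, the First-Fit procedure invoked through the $(\kappa_\gamma^r,\kappa_\gamma^c,\rho_\gamma)$ pointers is exactly the one analysed in \cref{lem:hg:packP}, which guarantees that the cut items fit into the $\lceil \eps m_g\rceil$ reserved knapsacks (or the dedicated $R^{(\eps)}$ knapsack when $m_g<\tfrac1\eps$) without overflow. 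Summing over all types and groups, the total value returned equals
\[
\sum_{g\in\groups}\Bigl(\sum_{c\in\configs_g}(y'_c+y''_c)\, v_c + \sum_{t\in\smalls_g}(z'_{g,t}+z''_{g,t})\, v_t\Bigr),
\]
which is precisely the objective value of the implicit solution. The main technical point, and the step I would write in greatest detail, is the pointer-advancement invariant for items that straddle the boundary between the original knapsacks and the resource-augmentation knapsacks $R^{(y)}\cup R^{(z)}$: one must argue that whenever $\kappa_t$ is redirected into $R^{(y)}$ the knapsack has not already been consumed by another type, which follows from the fact that the rounding in \cref{lem:hg:roundingConfigLP} reserves one extra knapsack per rounded variable, indexed disjointly.
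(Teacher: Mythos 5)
Your proposal is correct and follows essentially the same route as the paper's proof: consistency from the $(t(j),k(j))$ cache, feasibility from the fact that the pointer bookkeeping never assigns more than $y'_{c,g}$ copies of a configuration or $z'_{g,t}$ small items per group (with \cref{lem:hg:packP} handling the small/cut items and \cref{lem:hg:roundingConfigLP} supplying the disjointly reserved resource-augmentation knapsacks), and the value claim from counting exactly $\bar n_t$ packed items per type. The paper's write-up is just a terser version of the same argument, so no substantive difference remains.
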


\begin{proof}
	By construction of~$k(j)$ and~$t(j)$, the solution returned by the query algorithms is consistent between updates. 
	
	Observe that~$\bar y$ and~$\bar z$ is a feasible solution to the configuration ILP~\eqref{eq:mmdk:ilp}. Hence, showing that the algorithm does not assign more than~$\bar y_{c,g}$ times configuration~$c$ and not more than~$\bar z_{g,t}$ items of type~$t$ to group~$g$ is sufficient for having a feasible packing of the corresponding elements into the~$\lfloor (1-\eps)m_g\rfloor $ largest knapsacks of group~$g$ if~$m_g \geq \frac1\eps$ or into the~$m_g$ knapsacks of group~$g$ if~$m_g < \frac1\eps$. When defining~$L$, we made sure that the items and configurations specified by~$\tilde y$ and~$\tilde z$ fit into the knapsacks given by resource augmentation. 
	
	If the item type~$t$ is small with respect to the group~$g$, then at most~$\bar z_{g,t}$ items of type~$t$ are packed in group~$g$. Thus, \cref{lem:hg:packP} ensures that all small items assigned to group~$g$ fit in the knapsacks for regular and the cut items. Moreover, the treatment of~$\eta_t^S = \tilde z_{g,t}$ guarantees that the value obtained by small items packed in~$g$ and its additional knapsacks is as in the implicit solution. 
	
	If~$t$ is big with respect to group~$g$, then the constructions of~$\kappa_t$ and~$\eta_t^B$ ensure that exactly~$\sum_{c \in \overline \configs_g} (\bar y_c+ \tilde y_c ) n_{c,t}$ items of type~$t$ are packed in group~$g$ and in~$R_g^{(y)}$. 
	Hence, the total value achieved is as given by the implicit solution. 
\end{proof}

\begin{restatable}{lemma}{mmdkQueries}\label{lem:mmdk:queries}
	The data structures can be generated in~$\OO\Big( \frac{\log^4 n}{\eps^8} \big( \log m + \frac{\log^{2/\eps} n}{\eps^{4/\eps}} \big) \Big)$ many iterations. Queries for a particular item can be answered in~$\OO\big(\log m + \frac{\log n}{\eps^2}\big)$ many steps. %
\end{restatable}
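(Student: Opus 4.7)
The plan mirrors the argument for identical knapsacks (Lemmas \ref{lem:mik:querying:big} and \ref{lem:mik:querying:small}), but with an extra layer of per-group bookkeeping because now the small/big classification of an item type depends on the group. First, I would iterate once over the implicit solution $(y',y'',z',z'')$ and, for each knapsack group $g$, build the list $\configs_g'$ and the prefix offsets $\alpha_{c,g}$ mapping configuration indices to knapsack indices via a prefix-sum sweep. Using $|\configs_g'| \le |\configs_g| = \OO((\log^2 n/\eps^4)^{1/\eps})$ and $|\groups| = \OO(\log^2 n/\eps^4)$, this contributes $\OO(\log^2 n/\eps^4 \cdot (\log n/\eps)^{2/\eps})$ operations, each on numbers of size at most polynomial in $\log m$ and $\log \capa_{\max}$.

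Second, for each item type $t$ I would traverse the configurations (each of which contains at most $1/\eps$ types) to assemble the list $\groups_t$ of groups in which $t$ appears, the per-group sublists $\configs_{g,t}$, and the counts $\bar n_t$, initializing $\gamma_t$ to the first entry of $\groups_t$ and $\kappa_t,\eta_t$ accordingly. Since each pair $(c,g)$ is touched $1/\eps$ times, this stage costs $\OO(\tfrac{1}{\eps}\cdot\log^2 n/\eps^4 \cdot (\log n/\eps)^{2/\eps})$. Finally, the small-item pointers $\kappa_g^r,\kappa_g^c,\rho_g^r$ are initialized group-by-group; computing the remaining capacity in $\kappa_g^r$ requires summing the sizes of the $\le 1/\eps$ big items of its configuration, giving $\OO(|\groups|/\eps) = \OO(\log^2 n/\eps^5)$. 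Aggregating the three terms yields the claimed generation bound $\OO\bigl(\tfrac{\log^3 n}{\eps^5}\cdot\tfrac{\log^{2/\eps} n}{\eps^{4/\eps}}\bigr)$.

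For a single item query on $j$, I first check the cache in the global item tree in time $\OO(\log n)$. Assuming a miss, I read the stored value class $\ell_j$ in $\OO(1)$ and locate the type $t$ of $j$ by binary search in $\types_{\ell_j}$ in $\OO(\log(\log n/\eps))$ time. Knowing $\gamma_t$, I decide in $\OO(1)$ whether $t$ is small or big with respect to $\gamma_t$ via the stored classification $\smalls_{\gamma_t}$, then follow the corresponding subroutine exactly as in the identical-knapsack case. Pointer advancement is $\OO(1)$ except in three situations: when $\kappa_g^r$ moves to the next knapsack and $\rho_g$ must be recomputed ($\OO(1/\eps)$ per update, because configurations have at most $1/\eps$ items whose sizes are summed); when $\kappa_t$ crosses a configuration boundary and the next configuration of $t$ within $\gamma_t$ has to be located by binary search in $\configs_{\gamma_t,t}$ in time $\OO(\log(\log n/\eps))$; or when $\gamma_t$ is advanced to the next entry of $\groups_t$ and the pair $(\kappa_t,\eta_t)$ is reinitialized, again in $\OO(\log(\log n/\eps))$. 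Since the number of value classes is $\OO(\log n/\eps)$, every logarithmic tree access is bounded by $\OO(\log n)$, and together with the $1/\eps$ capacity recomputation the total per-query cost is $\OO(\log n/\eps^2)$.

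The main obstacle is ensuring that the pointer transitions between regular knapsacks, the resource-augmentation buckets $R^{(y)}_{c,g}$, $R^{(z)}_{g,t}$ and $R^{(\eps)}$, and the next configuration or group are each triggered exactly once per packed item, so that the answers across a round are consistent with the implicit solution counted by Lemma~\ref{lem:mmdk:queries:correctness}. Once this case analysis is spelled out, bounding iteration counts reduces to standard prefix-sum and binary-search accounting.
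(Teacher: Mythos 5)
Your proposal follows essentially the same route as the paper's proof: build the lists $\configs_g'$, the prefix offsets $\alpha_{c,g}$, the per-type lists $\groups_t$ and $\configs_{g,t}$ with the counts $\bar n_t$, and the resource-augmentation bookkeeping once per round, then answer a query by a cache check, a binary search for the item type within its value class, and a small/big case distinction driven by the pointers $\gamma_t,\kappa_t,\kappa_g^r,\kappa_g^c$ with the same amortized-constant pointer advances. The only quantitative slip is that a binary search over $\configs_{\gamma_t,t}$ costs $\OO\bigl(\tfrac1\eps\log\tfrac{\log n}{\eps}\bigr)$ rather than $\OO\bigl(\log\tfrac{\log n}{\eps}\bigr)$, since these lists can have size $\bigl(\log^2 n/\eps^4\bigr)^{1/\eps}$; this is still within the claimed $\OO\bigl(\tfrac{\log n}{\eps^2}\bigr)$ query bound, so the argument stands.
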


\begin{proof}
	We start by retracing the steps of the  {oblivious} linear grouping in order to obtain the set~$\types$ of item types. We store the types~$\types_\ell$ of one value class in one list, sorted by non-decreasing size. By \cref{lem:hg:RunningTime}, the set~$\types$ can be determined in time~$\OO\big(\frac{\log^4 n}{\eps^4}\big)$. 
	
	We first argue about the generation of the data structures and the initialization of the various pointers and counters. We start by generating a list~$(\alpha_{c,g})_{c \in \overline \configs_g}$ for each group~$g$ where~$\alpha_{c,g}$ stores the first (original) knapsack of configuration~$c \in \overline \configs_g$, i.e.,
	\[\alpha_{c,g} = \alpha_{c-1,g} + \bar y_{c-1,g} + 1,\]
	where~$\alpha_{0,g} = \sum_{g'=1}^{g-1} m_{g'}$ and~$y_{0,g} = 0$. 	
	Next, we set 
	\begin{align*} 
		R^{(z)}_{g,t} & = \sum_{g'= 1}^{g-1} \sum_{t' \in \types} \tilde z_{g',t'} +   \sum_{t'=1}^t \tilde z_{g,t'}
		\shortintertext{and}
		R^{(y)}_{c,g} & = \sum_{g'= 1}^{g-1} \sum_{c' \in \overline \configs_{g'}} \tilde y_{c',g'}   +\sum_{c' \in \configs_g, c'\leq c} \tilde y_{c',g} \, ,
	\end{align*} 
	where~$R^{(z)}_{g,t}$ corresponds to the resource augmentation needed because of rounding~$z_{g,t}$ and~$R^{(y)}_{c,g}$ corresponds to the resource augmentation for rounding~$y_{c,g}$. These lists can be generated by iterating through the list~$\overline \configs_g$ for each group~$g$ in time~$\OO(\sum_{g \in \groups} |\configs_g|) = \OO\big( \frac{\log^2 n}{\eps^4} \frac{\log^{2/\eps} n}{\eps^{4/\eps}}\big)$.  
	
	For maintaining and updating the pointer~$\gamma_t$, we generate the list~$\groups_t$ that contains all groups~$g$ where items of type~$t$ are packed in the implicit solution. By iterating through the groups once more and checking~$\sum_{c \in \overline \configs_g}(\bar y_{c,g}+ \tilde y_{c,g})n_{c,t} \geq 1$ or $\bar z_{g,t} + \tilde z_{g,t} \geq 1$, we can add the corresponding groups~$g$ to~$\groups_t$. Then,~$\gamma_t$ points to the head of the list. 
	While iterating through the groups, we also calculate~$\bar n_t = \sum_{g \in \groups} 
	\big(\sum_{c \in \configs_g '} (\bar y_{c,g}+ \tilde y_{c,g}) + \bar z_{g,t}+ \tilde z_{g,t}\big)$ and store the corresponding value together with the item type. The lists~$\groups_t$ can be generated in~$\OO(|\types|\sum_{g\in \groups} |\configs_g|) %
	= \OO\Big( \frac{\log^4 n}{\eps^8} \frac{\log^{2/\eps} n}{\eps^{4/\eps}}   \Big) $ many iterations.
	
	For maintaining and updating the pointer~$\kappa_t$, we create the list~$\configs_{g,t}$ storing all configurations~$c \in \overline \configs_g$ with~$n_{c,t} \geq 1$. While iterating through the groups and creating~$\groups_t$, we also add~$c$ together with~$n_{c,t}$ to the list~$\configs_{g,t}$ if~$n_{c,t} \geq 1$. Initially,~$\kappa_t$ points to the head of~$\configs_{g,t}$, where~$g$ is the first group that packs~$t$ as big item. If~$c$ is the corresponding configuration, we start with~$\eta_t^B = n_ {c,t}$. The time needed for this is bounded by~$\OO(|\types| \sum_{g \in \groups} |\configs_g|) = \OO\Big( \frac{\log^4 n}{\eps^8} \frac{\log^{2/\eps} n}{\eps^{4/\eps}} \Big)$. 
	
	The pointer~$\kappa^r_g$ is initialized with~$\kappa^r_g = \sum_{g' = 1}^{g-1} m_{g'} + 1$. By using binary search on the list~$\overline \configs_g$, we get~$s_1$, the total size of configuration~1 assigned to~$\kappa^r_g$, and binary search over the knapsacks allows us to obtain~$S_{\kappa^r_g}$, the capacity of knapsack~$\kappa^r_g$. Thus,~$\rho_g^r = S_{\kappa^r_g} - s_1$ can be initialized in time~$\OO(\sum_{g \in \groups} (\log(|\overline \configs_g| + \log m)) = \OO\Big( \frac{\log^2 n}{\eps^5} \big(\log \frac{\log n}{\eps} + \log m \big)\Big)$. 
	
	If~$m_g \geq \frac1\eps$, we set~$\kappa^c_g = \sum_{g' = 1}^{g-1} m_{g'} + \lfloor (1-\eps) m_g \rfloor   + 1$, while~$m_g < \frac1\eps$ implies that~$\kappa^c_g$ points to the knapsack~$R^{(\eps)} = |\{g': g' \leq g, m_{g'} < \frac1\eps \}|$ given by resource augmentation. The time needed for initializing~$\kappa_g^c$ is~$\OO(|\groups|)$. In order to determine the position of the next cut item, we also maintain~$\rho^c$, initialized with~$\rho^c_g = \frac1\eps$, that counts how many slots are still left in knapsack~$\kappa^c_g$. 
	
	Now consider the query for an item~$j$. We can decide in time~$\OO(\log n)$ if~$j$ has already been queried in the current round. Upon arrival of~$j$, we calculated the index~$\ell$ of its value class. If~$\ell \in \{\llmin, \ldots, \lmax\}$, then the item types~$\types_\ell$ together with their first and last item can be determined in time~$\OO\big(\frac{\log n}{\eps^2}\big)$ by retracing the steps of the linear grouping,. By binary search, the item type of~$j$ can be determined in time~$\OO\big(\log \frac{\log n}{\eps}\big)$. Once the item type is known, we check if~$j$ belongs to the first~$\bar n_t$ items of this type. If not, then \textsc{not selected} is returned. Otherwise, the pointer~$\gamma_t$ answers the question in which group item~$j$ is packed. 
	
	If~$j$ is small and~$\eta_t^S > \tilde z_{t,\gamma_t}$, the knapsack~$k(j)$ can be determined in constant time by nested case distinction and having the correct pointer (either~$\kappa_{\gamma_t}^r$ or~$\kappa_{\gamma_t}^c$) dictate the answer. %
	In order to bound the update time of the data structures, note that packing~$j$ as regular item only implies the updates of~$\rho_{\gamma_t}$ and of~$\eta_t^S$, which take constant time. Hence, it remains to consider the case where~$j$ is packed as a cut item. The capacity of the new knapsack~$\kappa_{\gamma_t}^r$ can be determined in~$\OO(\log m)$ by binary search over the knapsack list while the configuration~$c$ of the new knapsack~$\kappa_{\gamma_t}^r$ and its total size are determined by binary search over the list~$\alpha_{\gamma_t}$ %
	in time~$\OO\left(\log  |\overline \configs_{\gamma_t}| \right) = \OO\left( \frac 1\eps \log \frac{\log n} \eps \right)$. Then,~$\rho_{\gamma_t}^r = S_{\kappa_{\gamma_t}^r} - s_c$ can be computed with constantly many operations. %
	If~$\rho^c_{\gamma_t} = 0$ after packing~$j$ in~$\kappa^c_{\gamma_t}$, we increase the knapsack pointer by one and update~$\rho^c_{\gamma_t} = \frac{1}{\eps}$. In case~$\eta_t^S = \tilde z_{\gamma_t,t} = 1$, item~$j$ is packed in the knapsack~$R^{(z)}_{\gamma_t,t}$ which can be decided in constant time. Otherwise the group pointer~$\gamma_t$ is increased and either~$\eta_{t}^S$ is updated according to the new group or~$\kappa_t$ and~$\eta_t^B$ are used. Updating~$\gamma_t$ can be done by binary search over the list~$\groups_t$ in time~$\OO(\log |\groups|)$. The pointer~$\gamma_t$ is updated at most once \emph{before} determining~$k(j)$. Hence, the case distinction on the relative size of~$t$ is invoked at most twice.
	
	If~$j$ is big, the pointer~$\kappa_{\gamma_t}$ dictates the answer which can be returned in time~$\OO(1)$. For bounding the running time of the possible update operations, observe that~$\eta_{t}$ is updated in constant time with values bounded by~$n$. If~$\eta_{t}^B =0$ after the update, the knapsack pointer~$\kappa_{t}$ needs to be updated as well. The most time consuming update operations are finding a new configuration~$c'$ and possibly even a new group~$g'$. Finding configuration~$c' \in \overline\configs_{\gamma_t,t}$ can be done by binary search in time $\OO\left(\log  |\overline \configs_{\gamma_t}| \right) = \OO\left( \frac 1\eps \log \frac{\log n} \eps \right)$. To update~$\kappa_{t}$ and~$\eta_{t}^B$, we extract~$\kappa_t$ from the list~$\alpha_{\gamma_t}$ and~$n_{c',t}$ from the list~$\overline \configs_{\gamma_t,t}$ in time~$\OO\left(\log  |\overline \configs_{\gamma_t}| \right) = \OO\left( \frac 1\eps \log \frac{\log n} \eps \right)$ by binary search.
	If the algorithm needs to update~$\gamma_t$ as well, this can be done by binary search on the list~$\groups_t$ in time~$\OO(\log|\groups_t|) = \OO\left(\log \frac{\log (n)}{\eps} \right)$.
	
	In both cases, the running time of answering the query and possibly updating data structures is bounded by the running time of the linear grouping step and by the routine to access one particular knapsack, i.e., by~$\OO\big(\log m + \frac{\log n}{\eps^2}\big)$.
\end{proof}

\begin{restatable}{lemma}{mmdkSolValQuery}\label{lem:mmdk:queriesSolVal}
	The solution value can be calculated in time~$\OO\big( \frac{\log^3 n}{\eps^4}  \big)$.  
\end{restatable}

\begin{proof}
	For obtaining the value of the current solution, we calculate the total value of the first~$\bar n_t$ items. We do this by iterating through the value classes once and per value class, we iterate once through the list~$\types_\ell$ to access the number~$\bar n_t$. Then, we use prefix computation twice in order to access the total value of the first~$\bar n_t$ items of type~$t$. \cref{lem:data-structure} bounds this time by~$\OO(\log n)$. 	
	By \cref{lem:hg:NoOfTypes}, the number of item types is bounded by~$\OO\big( \frac{\log^2 n}{\eps^4} \big)$. Combining these two values bounds the total running time by~$\OO\big( \frac{\log^3 n}{\eps^4}  \big)$. As this time is clearly dominated by obtaining the implicit solution in the first place, we calculate and store the solution value when computing the implicit solution value and thus are able to return it in constant time. 
\end{proof}

\begin{restatable}{lemma}{mmdkSolQuery}\label{lem:mmdk:queriesSol}
	In time~$\OO\big(
	|P| \frac{\log^3 n }{\eps^4} \big( \log m + \frac{\log n}{\eps^2}\big) \big)$ a query for the complete solution~$P$ can be answered.
\end{restatable}

\begin{proof}
	For returning the complete solution, we determine the packed items and query each packed item individually. \cref{lem:mmdk:queries} bounds their query times by~$\OO\big(\log m + \frac{\log n}{\eps^2}\big)$ while \cref{lem:data-structure} bounds the running time for accessing item~$j$. \cref{lem:hg:NoOfTypes} bounds the number of item types by~$\OO\big( \frac{\log^2 n }{\eps^4}  \big)$. In total, the running time is bounded by~$\OO\big(
	|P| \frac{\log^3 n }{\eps^4} \big( \log m + \frac{\log n}{\eps^2}\big) \big)$, where~$P$ is the current solution. 
\end{proof}

\subsubsection{Proof of main result} %

\begin{proof}[Proof of \cref{theo:mmdk}]
	\cref{lem:mmdk:Combine} gives the bound on the approximation ratio of our algorithm and \cref{lem:mmdk:RunningTime} bounds the running time of an update operation. Further, \cref{lem:mmdk:queries} gives the running time for query operations.
\end{proof}

\section{\multiknapsack}
\label{app:general}

\paragraph*{Analysis.}
We consider the iteration in which all the guesses, ${\lmax}$, $k$ and $S_{\re}$ are correct.
Let $\mcp_1$ be the set of solutions on the \red knapsacks (without the additional virtual knapsack) and the \green knapsacks such that the total size of \red items placed in \green knapsacks lies in the range $[S_\re, (1+\eps)S_\re]$.
Denote by $\opt_1$ a solution of highest value in $\mcp_1$.
Altering~\opt by deleting the \yellow knapsacks gives a solution in $\mcp_1$ of value at least~$(1-\eps) \cdot v(\opt)$. This holds since for correct guesses the \yellow knapsacks by definition contribute at most an $\eps$-fraction to $\opt$. Further, the correctness of the guessed $S_{\re}$ implies that the altered $\opt$ is indeed a packing in $\mcp_1$.
\begin{observation}\label{lem:MDKG:deleteYellow}
	For $\opt_1$ defined as above, we have $v(\opt_1) \geq (1-\eps) \cdot v(\opt)$.
\end{observation}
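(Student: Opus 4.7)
The plan is to construct an explicit feasible member of $\mcp_1$ from $\opt$ itself by simply deleting the contribution of the \yellow knapsacks, and then argue that this deletion costs at most an $\eps$-fraction of $v(\opt)$. This is essentially the one-sentence argument given just before the observation; the proof is a matter of checking that the altered packing actually satisfies the two defining properties of $\mcp_1$.

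First I would define $\opt' := \opt \setminus \{\text{items packed on \yellow knapsacks by } \opt\}$ and view it as a packing using only \red and \green knapsacks (with the same assignment as in $\opt$ on those knapsacks). This is trivially feasible, and the only thing to verify is the size condition on \red items placed in \green knapsacks.

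Next, I would observe that the set of \red items placed on \green knapsacks is identical in $\opt$ and $\opt'$, since the alteration only removed items from \yellow knapsacks. By definition, $S_\re$ is the total size of the \red items that \opt places in \green knapsacks, rounded down to a power of $(1+\eps)$. Hence, if $S^\star$ denotes the (unrounded) total size of these items, we have $S_\re \leq S^\star \leq (1+\eps) S_\re$, and therefore the total size of \red items that $\opt'$ places in \green knapsacks lies in $[S_\re,(1+\eps) S_\re]$. This shows $\opt' \in \mcp_1$.

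Finally, the value lost by removing items from \yellow knapsacks is bounded as follows. Recall that $k$ was chosen so that, among the $\frac{1}{\eps}$ groups of knapsacks of sizes $\frac{L}{\eps^{3i}}$, the $k$-th group (which is precisely the set of \yellow knapsacks) contains items of total value at most $\eps \cdot v(\opt)$ in $\opt$. Hence $v(\opt') \geq v(\opt) - \eps \cdot v(\opt) = (1-\eps) v(\opt)$, and since $\opt_1$ is a maximum-value packing in $\mcp_1$ we obtain $v(\opt_1) \geq v(\opt') \geq (1-\eps) v(\opt)$, as claimed. There is no real obstacle here; the only subtlety is being explicit that the correct guesses of $k$ and $S_\re$ are precisely what make both the value bound and the size-range condition work simultaneously.
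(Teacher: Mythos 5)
Your proposal is correct and follows exactly the paper's argument: delete the items that \opt packs on \yellow knapsacks (losing at most $\eps\cdot v(\opt)$ by the choice of $k$), and note that the unchanged set of \red items on \green knapsacks together with the correct guess $S_\re$ (rounded down to a power of $(1+\eps)$) certifies membership in $\mcp_1$. Your write-up merely makes explicit the feasibility check that the paper states in one sentence.
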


\begin{lemma}\label{lem:MDKG:redSubinstanceGood}
	Consider an optimal solution $\opt_\re$ to the \red subproblem, i.e., exclude items in $J_\ye$ but include the virtual knapsack.
	Then~$v(\opt_\re) \geq v(\opt_{1,\re}) - 2\eps \cdot v(\opt)$, where we use the shorthand $\opt_{1,\re} \coloneqq (\opt_1 \cap J_\re) \setminus J_\ye$.
\end{lemma}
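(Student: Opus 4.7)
I would prove the lemma by explicitly constructing a feasible solution~$\tilde \opt$ for the \red subproblem whose value is at least~$v(\opt_{1,\re}) - 2\eps \cdot v(\opt)$; the bound on~$v(\opt_\re)$ then follows from optimality. Partition~$\opt_{1,\re}$ into the set~$A$ of items that~$\opt_1$ packs on \red knapsacks and the set~$B$ of items that~$\opt_1$ packs on \green knapsacks. For~$A$, I simply reuse the placement of~$\opt_1$, which is feasible. It remains to pack as much of~$B$ as possible on the virtual knapsack of capacity~$S_\re$.

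By the definition of~$\mcp_1$, the total size of \red items that~$\opt_1$ packs on \green knapsacks is at most~$(1+\eps)S_\re$, hence $s(B) \leq (1+\eps) S_\re$. I would sort~$B$ by decreasing density and greedily fill the virtual knapsack. Comparing to the fractional relaxation (which can always pack every item of~$B$ scaled down by the factor~$S_\re / s(B)$) and accounting for at most one ``boundary'' item missed by the integer greedy, the value packed is at least
\[
\frac{v(B)\cdot S_\re}{s(B)} - \max_{j\in B} v_j \;\geq\; (1-\eps)\,v(B) - \max_{j \in B} v_j.
\]

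The crucial observation is that $\max_{j\in B} v_j \leq \eps\cdot v(\opt)$. Since $\opt_{1,\re}$ excludes~$J_\ye$, so does~$B$, and~$J_\ye$ consists of the~$L_\gr/\eps^2$ most valuable \red items, so every item in~$B$ has value at most $V_\ye := \min_{j\in J_\ye} v_j$. Each item of~$J_\ye$ is \red, and the at least~$\frac{L_\gr}{\eps^2}+L$ \yellow knapsacks are larger than the largest \red knapsack, so packing each item of~$J_\ye$ on its own \yellow knapsack is a feasible global solution of value~$v(J_\ye) \geq \frac{L_\gr}{\eps^2}\,V_\ye$. This yields~$V_\ye \leq \frac{\eps^2}{L_\gr} v(\opt) \leq \eps\cdot v(\opt)$, where the last inequality uses $L_\gr \geq L \geq \eps$ (the degenerate case $L_\gr = 0$ is trivial because then there are no \green knapsacks and $B = \emptyset$). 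Combining with $v(B) \leq v(\opt)$,
\[
v(\tilde\opt) \;\geq\; v(A) + (1-\eps)\,v(B) - \max_{j \in B} v_j \;\geq\; v(\opt_{1,\re}) - \eps\cdot v(\opt) - \eps\cdot v(\opt),
\]
as required.

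\textbf{Main obstacle.} The subtle technical point is reconciling the item-level greedy packing of~$B$ with the LP~\eqref{eq:mmdk:ilp} formulation used for the \red subproblem, which represents items on the virtual knapsack only through the type-counting variables~$z_{g,t}$ (configurations are disallowed there). I would handle this by observing that the greedy schedule assigns, for each type, an integral number of items to the virtual knapsack, which directly induces a valid integer setting of the~$z_{g,t}$ variables on the virtual-knapsack group and respects the type-availability constraints since the chosen items come from the set of currently available items.
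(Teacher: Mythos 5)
Your proof is correct and takes essentially the same route as the paper's: keep the \red items that $\opt_1$ places on \red knapsacks, move the rest onto the virtual knapsack retaining a densest prefix, and charge the $\eps$-fraction size overflow (from the definition of $\mcp_1$ and the rounding of $S_\re$) plus the single boundary item, whose value is bounded by the least valuable item of $J_\ye$. Your explicit argument that $\min_{j \in J_\ye} v_j \le \eps\, v(\opt)$ (via packing $J_\ye$ into \yellow knapsacks) just fills in a step the paper asserts without proof, and the concern about the $z_{g,t}$ variables is unnecessary here, since the lemma only compares against an optimal integral solution of the subproblem.
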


\begin{proof}
	Consider the \red items in $\opt_1$ that are not in $J_\ye$. 
	Leave items on \red knapsacks in their current position and place \red items on \green knapsacks into the virtual \red knapsack.
	The latter is possible with the exception of possibly an $\eps$-fraction of the items (with respect to size) due to $S_\re$ being rounded down.
	Deleting the least dense items until the remainder fits into the virtual knapsack causes a loss of at most an $\eps$-fraction of the value of $\opt_1$ plus an additional \red item $j_\re$.
	This item $j_\re$ contributes at most an $\eps$-fraction to \opt as its value is not larger than that of the least valuable element in $J_\ye$ which has a value of less than~$\eps v(\opt)$.
\end{proof}

\begin{lemma}\label{lem:MDKG:finalSolGood}
	Let $P_F$ be the final solution the algorithm computes. Then~$v(P_F)\geq (1-7\eps)v(\opt)$.
\end{lemma}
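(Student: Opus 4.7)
The plan is to compare $v(P_F)$ to $v(\opt_1)$ and then invoke \Cref{lem:MDKG:deleteYellow} to reach $v(\opt)$. For the correct guesses, first decompose $v(P_F)$ by the origin of each packed item. Let $v_\re$ be the value packed by the red subroutine on its augmented instance, and split $v_\re = v_{\text{stay}} + v_{\text{virt}}$, where $v_{\text{stay}}$ collects the items it places on \red and \yellow knapsacks (including those moved from the virtual knapsack to \yellow during rounding), and $v_{\text{virt}}$ is what remains on the virtual \red knapsack. Split further $v_{\text{virt}} = v_{\text{bund}} + v_{\text{cut}}$, with $v_{\text{cut}}$ the fractionally cut items (placed on \yellow) and $v_{\text{bund}} = v(B_\re)$ the items fully assigned to bundles. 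Then
\[
v(P_F) = v(J_\ye) + v_{\text{stay}} + v_{\text{cut}} + v(Q_\gr),
\]
where $Q_\gr$ is the output of the \green subroutine.

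The next step applies \Cref{thm:MDK:few} to get $v(Q_\gr) \geq (1-\eps)\,v(\opt_\gr)$ where $\opt_\gr$ is the optimum for the \green instance (items $J_\gr \cup B_\re$ on \green knapsacks). To lower bound $v(\opt_\gr)$, I construct an explicit feasible packing: place $\opt_1 \cap J_\gr$ on the \green knapsacks as $\opt_1$ does, then greedily fill the space freed by removing $R_G$ (the set of \red items that $\opt_1$ places on \green knapsacks) with bundles of $B_\re$. Since $R_G$ has total size at least $S_\re$ and the $L_\gr/\eps$ bundles have total size at most $S_\re$ with each bundle of size at most $\eps S_\re/L_\gr$, counting per-knapsack capacity gives $\sum_i \lfloor f_i/\beta \rfloor \geq S_\re/\beta - L_\gr = (1-\eps)L_\gr/\eps$ bundles successfully fitted. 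Since all bundles have (essentially) equal size, we can choose the top $(1-\eps)L_\gr/\eps$ bundles by value; by an averaging argument their value is at least $(1-\eps)\,v(B_\re)$. Hence $v(\opt_\gr) \geq v(\opt_1 \cap J_\gr) + (1-\eps)\,v_{\text{bund}}$.

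Finally, I chain the inequalities. Using $v_{\text{cut}} + (1-\eps)^2 v_{\text{bund}} \geq (1-\eps)^2 v_{\text{virt}}$, substituting into the decomposition, and then applying the $(1-\eps)$-approximation guarantee of the \red subroutine ($v_\re \geq (1-\eps)v(\opt_\re)$) together with \Cref{lem:MDKG:redSubinstanceGood} ($v(\opt_\re) \geq v(\opt_{1,\re}) - 2\eps\,v(\opt)$), I obtain
\[
v(P_F) \geq v(J_\ye) + (1-\eps)^3\bigl(v(\opt_{1,\re}) - 2\eps\,v(\opt)\bigr) + (1-\eps)\,v(\opt_1 \cap J_\gr).
\]
Combining this with $v(J_\ye) \geq v(\opt_1 \cap J_\ye)$ (since $J_\ye$ consists of the top-valued \red items) and the partition $v(\opt_1) = v(\opt_1 \cap J_\ye) + v(\opt_{1,\re}) + v(\opt_1 \cap J_\gr)$ gives $v(P_F) \geq (1-3\eps)\,v(\opt_1) - 2\eps\,v(\opt)$. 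Applying \Cref{lem:MDKG:deleteYellow} yields $v(P_F) \geq (1-3\eps)(1-\eps)\,v(\opt) - 2\eps\,v(\opt) \geq (1-7\eps)\,v(\opt)$.

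The main obstacle is the bundle-packing argument in the second paragraph: one must carefully bound the individual bundle size by $\eps S_\re/L_\gr$ and leverage equal-sized bundles to argue the value lost to fragmentation is only an $\eps$-fraction. Once this is in place, the rest is bookkeeping.
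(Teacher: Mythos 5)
Your proof is correct and follows essentially the same route as the paper: lower-bound the green subinstance by packing $\opt_1$'s \green items together with all but an $\eps$-fraction (by value) of the equal-sized bundles into the space vacated by the \red items that $\opt_1$ puts on \green knapsacks, then chain the \red guarantee with \cref{lem:MDKG:redSubinstanceGood}, \cref{thm:MDK:few}, and \cref{lem:MDKG:deleteYellow}. The only differences are bookkeeping: you bound the green optimum directly instead of passing through the paper's hybrid solutions $P_2,P_3$, you make the bundle placement integral via the floor-counting argument rather than fractional placement with boundary bundles dropped, and you count the per-bundle cut items' value $v_{\text{cut}}$ (which the algorithm indeed packs on \yellow knapsacks) directly in $P_F$ instead of charging them against $J_\ye$.
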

\begin{proof}
	
	Consider $P_\re$, the solution of the \red subproblem returned by the algorithm of \Cref{sec:MDK:aug} (including virtual knapsack and resource augmentation). We know that~$v(P_\re) \geq (1-\eps) \cdot v(\opt_\re) \geq v(\opt_{1,\re}) - 3\eps \cdot v(\opt)$ by \Cref{theo:mmdk} and \Cref{lem:MDKG:redSubinstanceGood}.
	
	Let~$\opt_\gr \coloneqq \opt_1 \cap J_\gr$, and~$P_2 \coloneqq P_\re \cup \opt_\gr \cup J_\ye$.
	Then, $ \opt_1 = \opt_{1,\re} \cup (\opt_1 \cap J_\ye) \cup (\opt_1 \cap J_\gr)$ implies
	\begin{align*}
		v(\opt_1) &= v(\opt_{1,\re}) + v(\opt_1 \cap J_\ye) ~+~ v(\opt_1 \cap J_\gr)\\
		& \leq v(P_\re) + 3\eps v(\opt) + v(J_\ye) + v(\opt_\gr) \\
		& \leq v(P_2) + 3\eps v(\opt).
	\end{align*}
	With \Cref{lem:MDKG:deleteYellow} we then obtain $v(P_2) \geq v(\opt) - 4 \eps v(\opt)$.

	We now modify~$P_2$ to obtain a solution $P_3$ that lacks the virtual \red knapsack and deals with bundles instead.
	Build~$\frac{L_\gr}\eps$ equal-sized bundles from~$P_\re$ as in Step~5. %
	Place these bundles fractionally on the remaining space of the \green knapsacks that is left after~$\opt_\gr$ is packed. 
	This space is sufficient by definition of~$S_\re$ and $\mcp_1$. 
	Arrange the bundles such that the lowest-value ones are placed fractionally and removing them from the solution incurs a loss of at most~$\eps v(\opt)$.
	Further, remove the items placed fractionally among bundles. 
	Since there are at most $\frac{L_\gr}\eps$ of these with value smaller than the $\frac{L_\gr}{\eps^2}$ items in $J_\ye$, this incurs a
	loss of at most $\eps v(\opt)$.
	
	Therefore, $v(P_3) \geq v(P_2)-2\eps v(\opt)$. Moreover, the portion of~$P_3$ on
	\green knapsacks is a valid solution for the request sent to the \green subproblem.
	Therefore, using \Cref{thm:MDK:few}, the overall solution~$P_F$ satisfies $v(P_F)\geq
	(1-7\eps)v(\opt)$.
\end{proof}

\begin{lemma}
	The algorithm has update time $\big(\frac 1\eps\log(n\vvmax)\big)^{f(1 / \eps)}  + \OO(\frac 1 \eps \log \vmax \log n)$, where $f$ is a quasi-linear function.
\end{lemma}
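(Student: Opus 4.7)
The plan is to account separately for the additive term that arises from permanently maintained data structures and for the multiplicative term that arises from the per-round work. First, the additive $\OO(\frac 1 \eps \log \vmax \log n)$ is inherited from the data structure maintenance already used in \Cref{sec:MDK:few,sec:MDK:aug}: upon any insertion or deletion we have to touch the $\OO(\frac 1 \eps \log \vmax)$ per-value-class trees, each update costing $\OO(\log n)$ by \Cref{lem:data-structure}. All other work is done only in the main loop.

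Next, I would count the number of guesses explicitly. There are $\frac 1 \eps$ choices of the index $k$, at most $\log_{1+\eps} \vvmax = \OO(\frac{\log \vvmax}{\eps})$ choices of $\lmax$ (needed inside the dynamic linear grouping), and $\OO(\frac{\log(\capa_{\max} m)}{\eps})$ choices of $S_\re$ (since we guess up to a factor $(1+\eps)$ a number of magnitude at most $\capa_{\max} m$). The product of these is bounded by $\big(\frac 1\eps \log(n\vvmax)\big)^{\OO(1)}$, using the standard assumption $\log \capa_{\max}, \log m = \poly(\log n,\log\vvmax)$ on the numeric representation.

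For each fixed guess I then bound the work of the six algorithmic steps. Dynamic linear grouping costs $\OO(\frac{\log^5 n}{\eps^5})$ by \Cref{theo:harmonic}. Identifying and moving the at most $\frac{L_\gr}{\eps^2}$ most valuable \red items into \yellow knapsacks, and symmetrically reinserting them, takes $\OO(\frac{L_\gr}{\eps^2} \log n)$ since each operation is a single tree update. Solving the \red subproblem with the virtual knapsack as in \Cref{theo:mmdk} costs $(\log n)^{\OO(1/\eps)} (\log \capa_{\max} \log \vvmax)^{\OO(1)}$. Creating the $\frac{L_\gr}{\eps}$ equal-sized bundles reduces to $\OO(\frac{L_\gr}{\eps})$ prefix queries on the $\OO(\frac{\log^4 n}{\eps^8})$ rounded items of the virtual knapsack, hence is negligible. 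Finally, solving the \green subproblem by \Cref{thm:MDK:few} with $m_\gr = L_\gr = (\frac{\log n}{\eps})^{\OO(1/\eps)}$ dominates, contributing $2^{f_0(1/\eps)}\!\left(\frac{L_\gr}{\eps}\log(n\vvmax)\right)^{\OO(1)}$ with $f_0$ quasi-linear (from the EPTAS of \cite{Jansen12mkp}).

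Substituting $L_\gr = \big(\frac{\log n}{\eps}\big)^{\OO(1/\eps)}$ into the dominant term, the per-guess cost becomes $2^{f_0(1/\eps)} \big(\frac 1\eps\log(n\vvmax)\big)^{\OO(1/\eps)}$. Multiplying by the number of guesses leaves the exponent $\OO(1/\eps)$ intact, and folding the $2^{f_0(1/\eps)}$ factor into the base yields the claimed bound with some quasi-linear $f$. The main (and only) obstacle is verifying that $L_\gr$ indeed stays within $\big(\frac{\log n}{\eps}\big)^{\OO(1/\eps)}$: from the definition $L_\gr = \sum_{i=0}^{k-1} L/\eps^{3i} \leq L/\eps^{3(1/\eps)}$, so with $L = \big(\frac{\log n}{\eps}\big)^{\OO(1/\eps)}$ this holds and the exponent stays $\OO(1/\eps)$, preserving quasi-linearity of the final $f$.
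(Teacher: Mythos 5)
Your proof is correct and follows essentially the same route as the paper's: the additive $\OO(\frac 1 \eps \log \vmax \log n)$ from maintaining the subproblems' data structures, a per-guess accounting of the six steps, and the observation that the dominant cost comes from the subproblem solvers with $L_\gr = \big(\frac{\log n}{\eps}\big)^{\OO(1/\eps)}$ green knapsacks, so the exponent stays $\OO(1/\eps)$ and the $2^{f_0(1/\eps)}$ factor folds into a quasi-linear $f$. The only deviations are cosmetic and harmless: the paper forms the bundles by iterating over the $\OO(\frac{\log^2 n}{\eps^4})$ item types remaining on the virtual knapsack (cost $\OO(\frac{\log^2 n}{\eps^4}\cdot\frac{L_\gr}{\eps})$) rather than via prefix queries on the rounded items, and it leaves the $\log \capa_{\max}$ dependence of the \red subproblem implicit exactly where you make it an explicit assumption — neither point affects the claimed bound.
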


\begin{proof}
	Guessing~$k$ adds a factor of $\frac 1 \eps$ to the update time. Placing the $\frac{L_\gr}{\eps^2}$ most valuable \red items on \yellow knapsacks and removing them from data structures takes time $\OO(\frac{L_\gr}{\eps^2}\log n)$ which is within the time bound. The same holds for the updates of the \red and \green data structures and for solving the subproblems with the algorithms of \Cref{sec:MDK:aug,sec:MDK:few}.
	
	Cutting the items placed in the virtual \red knapsack info $\frac {L_\gr}{\eps}$ equal-sized bundles can be archived efficiently as follows.
	Compute the total size of these items, using the number of items used for each
	of the $\OO(\frac{\log^2 n}{\eps^4})$ item types and deduce the size of a
	bundle. Sort the item types, e.g., by value then size, and then iteratively pack items of the same type by computing how many items of this type fit in the next non-empty
	bundle. This takes time $\OO(\frac{\log^2 n}{\eps^4} \cdot \frac {L_\gr}{\eps})$ which is sufficient.
	
	Additionally, the maintenance of data structures is dominated in running time by that of the subproblems. These takes time~$\OO(\frac 1 \eps \log \vmax \log n)$ and cause the additive factor.
\end{proof} 

\begin{lemma}\label{lem:MDKG:query}
	The query times of our algorithm are as follows.
	\begin{inparaenum}[(i)]
		\item Single item queries are answered in time~$\OO(\frac {\log  n}{\eps^2} )$.
		\item Solution value queries are answered in time~$\OO(1)$.
		\item Queries of the entire solution~$P$ are answered in time~$\OO(\smash{\frac{\log^4 n}{\eps^6}}\abs{P})$.
	\end{inparaenum}
\end{lemma}

\section{Hardness of Approximation}
\label{sec:hard-ks}

The following theorems provide a justification why our algorithms for \multiknapsack have different running times depending on the number of knapsacks. As Chekuri and Khanna \cite{ChekuriK05} observed, \multiknapsack with~$m=2$ does not admit an FPTAS unless~$\classP=\NP$.

\begin{theorem}[Proposition 2.1 in \cite{ChekuriK05}]
	If \multiknapsack with two identical knapsacks has an FPTAS, then \textsc{Partition} can be solved in polynomial time. Hence there is no FPTAS for \multiknapsack even with~$m=2$, unless $\classP = \NP$. 
\end{theorem}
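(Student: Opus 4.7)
The plan is to reduce Partition to the optimisation problem on a two-knapsack instance, carefully set up so that an FPTAS of sufficient accuracy resolves the decision question. Given an input $(a_1,\ldots,a_n)$ to Partition with $\sum_{j=1}^n a_j = 2B$, I would construct a \multiknapsack instance with two identical knapsacks of capacity $B$ each and $n$ items, item $j$ having size $s_j = a_j$ and value $v_j = a_j$. Because the total capacity is exactly $2B$ and $v(P)=s(P)$ for any feasible $P$, the optimum is at most $2B$, with equality if and only if the $a_j$'s split into two subsets of sum $B$ each --- that is, precisely for \textsc{Yes}-instances of Partition.

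The key idea is to exploit the unit gap induced by integrality of the $a_j$'s: in a \textsc{Yes}-instance, $\opt = 2B$, whereas in a \textsc{No}-instance $\opt \leq 2B - 1$. Invoking the hypothetical FPTAS with accuracy $\eps = \tfrac{1}{4B}$, I would obtain a feasible packing $P$ satisfying $v(P) \geq (1-\eps)\opt$. For a \textsc{Yes}-instance this yields $v(P) \geq 2B - \tfrac{1}{2}$, and since $v(P)$ is an integer bounded above by $2B$, it must equal $2B$; for a \textsc{No}-instance $v(P) \leq \opt \leq 2B-1 < 2B$. Testing whether $v(P) = 2B$ therefore decides Partition.

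For the runtime, the reduction itself is linear in the input, and by assumption the FPTAS runs in time polynomial in $n$ and $1/\eps = 4B$, i.e.\ polynomial in the numerical parameters of the Partition input. Composing the two steps yields a polynomial-time decision procedure for Partition, which by its $\NP$-completeness implies $\classP = \NP$. The second claim of the theorem is then the contrapositive: no FPTAS for \multiknapsack with $m=2$ can exist unless $\classP = \NP$.

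The main (conceptually small but essential) point of the argument is the choice of $\eps$: it must be strictly less than $\tfrac{1}{2B}$ so that the FPTAS is forced to resolve the unit-integrality gap between \textsc{Yes}- and \textsc{No}-instances, while remaining inverse-polynomial in the numerical input so that the FPTAS remains polynomial time. The rest of the argument --- the gadget, the equivalence between partitions and value-$2B$ packings, and the integrality step that rounds $v(P) \geq 2B - \tfrac{1}{2}$ up to $2B$ --- is routine bookkeeping.
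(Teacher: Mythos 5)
Your reduction has a genuine gap in the running-time accounting. With $v_j = s_j = a_j$, the gap between \textsc{Yes}- and \textsc{No}-instances is only one unit out of an optimum of $2B$, so you are forced to call the FPTAS with $\eps < \tfrac{1}{2B}$. The FPTAS then runs in time $\poly(n, 1/\eps) = \poly(n, B)$, which is polynomial in the \emph{numerical value} $B$ but not in the input length of the \textsc{Partition} instance (the $a_j$ are encoded in binary, so $B$ may be exponential in the input size). In other words, your composition yields only a \emph{pseudo-polynomial} algorithm for \textsc{Partition} --- and \textsc{Partition} already admits one unconditionally via the standard dynamic program --- so no conclusion of the form $\classP = \NP$ follows. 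The sentence in your write-up ``polynomial in the numerical parameters of the Partition input'' is exactly where the argument breaks: the theorem requires a decision procedure polynomial in the encoding length.

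The fix is to decouple values from sizes so that the \textsc{Yes}/\textsc{No} gap is an inverse-polynomial \emph{fraction} of \opt: give every item value $v_j = 1$ and size $s_j = a_j$, with two knapsacks of capacity $B$ where $\sum_j a_j = 2B$. Then $\opt = n$ exactly when the instance is a \textsc{Yes}-instance (packing all $n$ items forces both knapsacks to be filled to exactly $B$), and $\opt \leq n-1$ otherwise. Now $\eps = \tfrac{1}{2n}$ suffices: a $(1-\eps)$-approximate value exceeds $n-1$ iff the instance is a \textsc{Yes}-instance, and the FPTAS runs in time $\poly(n, 2n)$, genuinely polynomial in the input length. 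This is also how the paper's own hardness proof for the dynamic setting (the $3$-\textsc{Partition} reduction in \Cref{sec:hard-ks}) is set up --- unit values, sizes equal to the given numbers --- precisely to obtain a gap of relative size $\Omega(1/n)$ rather than $\Omega(1/B)$; the statement you were asked to prove is cited from Chekuri and Khanna, whose argument follows the same pattern. The rest of your proposal (the integrality rounding step and the contrapositive) is fine once the gadget is corrected.
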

In the fully dynamic setting, this implies that there is no dynamic algorithm with running time polynomial in~$\log n$ and~$\frac{1}{\eps}$ unless~$\classP = \NP$. We are able to extend this result to the case where~$m \leq \frac{1}{3\eps}$. The statement focuses on dynamic algorithms, our main interest here, but the proof does not use the dynamic nature, only the final complexity.

\begin{restatable}[]{theorem}{HardnessMultiKnapsack}
	\label{theo:HardnessMultiKnapsack}
	Unless $\classP=\NP$, there is no fully dynamic algorithm for \multiknapsack that maintains a $(1-\eps)$-approximate solution in update time polynomial in $\log n$ and $\frac 1 \eps$, for~$m < \frac{1}{3\eps}$. 
\end{restatable}

\begin{proof}
	Consider the strongly $\NP$-hard problem 3-\textsc{Partition}~\cite{garey79} where there are~$3m$ items with sizes~$a_j\in \mathbb{N}$ such that~$\sum_{j = 1}^{3m} a_j = m A$. We use the restricted-input variant where sizes belong to $(A/2,~A/4)$ so that only subsets of size 3 may sum to $A$. The task is to decide whether there exists a partition~$\bigcup_{i=1}^m J_i = [3m]$ such that~$|J_i| = 3$ and~$\sum_{j \in J_i} a_j = A$ for~$1 \leq i \leq m$. 
	
	Consider the following instance for \dmk: There are~$m$ knapsacks with~$\capa = A$ and~$3m $ many items. Each item corresponds to a 3-\textsc{Partition} item with~$s_j = a_j$ and~$v_j = 1$ for~$1 \leq j \leq 3m$. Observe that the 3-\textsc{Partition} instance is a \textsc{Yes}-instance if and only if the optimal solution to the \knapsack problem contains~$3m$ items. Indeed, in such a \knapsack instance, each knapsack must have 3 items.
	
	If \dmk admits a dynamic algorithm with approximation guarantee at least~$(1-\eps)$ and running time polynomial in~$\frac{1}{\eps}$ and~$\log n_0$ where $ m < \frac{1}{3\eps}$, such an algorithm is able to optimally solve the \knapsack instance reduced from 3-\textsc{Partition}. Thus, such an algorithm decides 3-\textsc{Partition} in polynomial time which is not possible, unless $\classP=\NP$. 
\end{proof} 

Note that this result can be extended to a larger number of knapsacks by adding an appropriate number of sufficiently small knapsacks, i.e., polynomially many in $n$.

\end{document}